\let\counterwithin\relax
\newtheorem{theorem}{Theorem}
\newtheorem{proposition}{Proposition}
\newtheorem{lemma}{Lemma}
\newtheorem{corollary}{Corollary}
\newtheorem{remark}{Remark}
\newtheorem{definition}{Definition}
\newtheorem{assumption}{Assumption}
\def\E{\mathbb{E}}
\def\calF{\mathcal{F}}
\def\calL{\mathcal{L}}
\def\Rd{\mathbb{R}^{d}}
\def\1{\mathds{1}}
\def\nset{\mathbb{N}}
\def\rset{\mathbb{R}}
\newcommand{\argmin}{\operatornamewithlimits{argmin}}
\newcommandx{\as}[1][1=P]{\ensuremath{#1\, -\mathrm{a.s.}}}
\newcommandx{\set}[2]{\{ {#1} \, ; \, {#2}\}}
\newcommandx{\CPE}[3][1=]{{\mathbb E}^{#1}\left[\left. #2 \, \right| #3 \right]} 
\newcommand{\floor}[1]{{\lfloor #1 \rfloor}}
\newcommand{\ceil}[1]{{\lceil #1 \rceil}}
\begin{document}

\title{A fully data-driven approach to minimizing CVaR for portfolio of assets via SGLD with discontinuous updating
\thanks{This work was supported by The Alan Turing Institute for Data Science and AI under EPSRC grant EP/N510129/1.  Y. Z. was supported by The Maxwell Institute Graduate School in Analysis and its Applications, a Centre for Doctoral Training funded by the UK Engineering and Physical Sciences Research Council (grant EP/L016508/01), the Scottish Funding Council, Heriot-Watt University and the University of Edinburgh. }}

\author[1,2]{Sotirios Sabanis}
\author[1]{Ying Zhang}

\affil[1]{\footnotesize School of Mathematics, The University of Edinburgh, UK.}
\affil[2]{\footnotesize The Alan Turing Institute, UK.}

\date{\today}

\maketitle
\begin{abstract}
A new approach in stochastic optimization via the use of stochastic gradient Langevin dynamics (SGLD) algorithms, which is a variant of stochastic gradient decent (SGD) methods, allows us to efficiently approximate global minimizers of possibly complicated, high-dimensional landscapes. With this in mind, we extend here the non-asymptotic analysis of SGLD to the case of discontinuous stochastic gradients. We are thus able to provide theoretical guarantees for the algorithm's convergence in (standard) Wasserstein distances for both convex and non-convex objective functions. We also provide explicit upper estimates of the expected excess risk associated with the approximation of global minimizers of these objective functions.

All these findings allow us to devise and present a fully data-driven approach for the optimal allocation of weights for the minimization of CVaR of portfolio of assets with complete theoretical guarantees for its performance. Numerical results illustrate our main findings.
\end{abstract}

\section{Introduction}
We are concerned in this article with the study of stochastic optimization problems of the form
\begin{equation}\label{main_opt_prob}
 \text{minimize} \quad U(\theta) : = \mathbb{E}[f(\theta, X)],
\end{equation}
where the gradient of $f$ is discontinuous in $\theta \in \mathbb{R}^d$ and $X$ is a random element with a smooth density. Within this framework, we highlight and solve the problem of minimizing CVaR (expected shortfall) of a portfolio of assets in terms of  optimal selection of weights for individual assets as explained in Section \ref{cvar_port}.  We offer theoretical guarantees for the approximate solution of the optimization problem \eqref{main_opt_prob} by generating a $\hat{\theta}$ such that the expected excess risk
\[
 \mathbb{E}[U(\hat{\theta})] - \inf_{\theta \in \mathbb{R}^d} U(\theta)
\]
is minimized. To achieve this, we analyse the convergence properties of the stochastic gradient Langevin dynamics (SGLD) algorithm with discontinuous updating $H$, which is given by
\begin{equation}\label{SGLD}
\theta^{\lambda}_0 =\theta_0,\quad \theta^{\lambda}_{n+1}=\theta^{\lambda}_n-\lambda H(\theta^{\lambda}_n,X_{n+1})+\sqrt{2\beta^{-1}\lambda}\xi_{n+1},\ n\in\mathbb{N},
\end{equation}
where $\theta_0$ is an $\mathbb{R}^d$-valued random variable, $\lambda >0$ is the stepsize, $\beta>0$ is the so-called inverse temperature parameter, $H:\mathbb{R}^d\times\mathbb{R}^m\to\mathbb{R}^d$ is a measurable function satisfying $\nabla U(\theta)=\mathbb{E}[H(\theta,X_0)]$ with $(X_n)_{n \in \mathbb{N}}$ being an i.i.d. sequence, and $(\xi_n)_{n\in\mathbb{N}}$ is an independent sequence of standard $d$-dimensional Gaussian random variables. One recalls heere that the SGLD algorithm \eqref{SGLD} can be viewed as a discretization of the Langevin SDE:
\begin{equation}\label{sde}
Z_0 = \theta_0, \quad d Z_t = -  h(Z_t) d t + \sqrt{2\beta^{-1}} d B_t,
\end{equation}
where $h \coloneqq \nabla U$ and $(B_t)_{t \geq 0}$ represents the standard Brownian motion. Moreover, it is well-known that, under appropriate conditions, the Langevin SDE \eqref{sde} admits a unique invariant measure $\pi_{\beta} \wasypropto \exp(-\beta U(\theta))$ which concentrates around the minimizers of $U$ when $\beta$ is sufficiently large, , see \cite{hwang} for more details.

Theoretical guarantees of the SGLD algorithm \eqref{SGLD} to the target distribution $\pi_{\beta}$ have been established in Wasserstein-2 distance under the assumptions that $H$ is convex and (locally) Lipschitz continuous, see \cite{convex}, \cite{ppbdm}, \cite{dk} and references therein. Recently, these results are considered under more generalised conditions aiming to include a wider range of practical applications. To relax the convexity condition, a dissipativity condition is proposed in \cite{raginsky}, and the convergence result is obtained in Wasserstein-2 distance with the rate $\lambda^{5/4}n$. This is the first such result in non-convex optimization, which is then improved in the work \cite{xu} and \cite{nonconvex}. Compared to \cite{raginsky}, a higher rate of convergence with dependence on $n$ is achieved in \cite{xu} following a direct analysis of the ergodicity of the overdamped Langevin Monte Carlo (LMC) algorithms, while a rate 1/2 in Wasserstein-1 distance is obtained in \cite{nonconvex} by using the contraction results developed in \cite{eberle}.

As for the generalisation of the smoothness of $H$, to the best of the author's knowledge, there are no theoretical guarantees established in the literature for the SGLD algorithm \eqref{SGLD} with discontinuous gradient. We present here the first such results. We are inspired by similar studies for stochastic gradient descent (SGD) algorithms, see \cite{sgddiscnt} and \cite{4} and references therein. In particular, \cite{sgddiscnt} provides an almost sure convergence result, while  \cite{4} provides a strong $L_1$ convergence result with rate 1/2.

In this paper, we establish non-asymptotic error bounds for the SGLD algorithm \eqref{SGLD} with discontinuous gradient $H$. More precisely, non-asymptotic results in Wasserstein-1 and Wasserstein-2 distances between the law of the $n$-th iterate of the SGLD algorithm \eqref{SGLD} and the target distribution $\pi_{\beta}$ are obtained under convexity and dissipativity conditions for $H$. This allows us to then provide full analytic results concerning the  expected excess risk of the associated optimization problem \eqref{main_opt_prob}. All this is achieved by assuming that $H$ is decomposed in to two parts  $F$ and $G$, where $F: \mathbb{R}^d \times \mathbb{R}^m \rightarrow \mathbb{R}^d$ is locally Lipschitz continuous and $G: \mathbb{R}^d \times \mathbb{R}^m \rightarrow \mathbb{R}^d$ is bounded. Furthermore, $H$ is assumed to satisfy a conditional Lipschitz-continuity (CLC) property proposed in \cite{4}, which is given explicitly in Assumption \ref{clc} below.

We illustrate the applicability of our findings by presenting examples from quantile and VaR, CVaR estimations in Section \ref{application}. In particular, we solve the problem of optimal allocation of weights for the minimization of CVaR of a portfolio of assets. This is also the first such result in the literature to the best of the author's knowledge. Numerical experiments are implemented and their results support our theoretical findings.

The paper is organised as follows. Section \ref{result} presents the assumptions and main results. In Section \ref{po}, the proofs for the main theorems in the non-convex case are provided, which are followed by the proofs for the results in the convex case in Section \ref{poconv}. Practical examples along with the minimization algorithm of CVaR for a portfolio of assets are presented in Section \ref{application} while auxiliary results are provided in Section \ref{app}.

We conclude this section by introducing some notation. Let $(\Omega,\mathcal{F},P)$ be a probability space. We denote by $\E[X]$  the expectation of a random variable $X$. For any $x \in \mathbb{R}^d$, denote by $x^{(i)}$ the $i$-th entry of the vector. 
Fix an integer $d\geq 1$. For an $\mathbb{R}^d$-valued random variable $X$, its law on $\mathcal{B}(\mathbb{R}^d)$ (the Borel sigma-algebra of $\mathbb{R}^d$) is denoted by $\mathcal{L}(X)$. Scalar product is denoted
by $\langle \cdot,\cdot\rangle$, with $|\cdot|$ standing for the
corresponding norm (where the dimension of the space may vary depending on the context). For $\mu\in\mathcal{P}(\mathbb{R}^d)$ and for a non-negative measurable $f:\mathbb{R}^d\to\mathbb{R}$, the notation $\mu(f):=\int_{\mathbb{R}^d} f(\theta)\mu(d \theta)$ is used. Given a Markov kernel $R$ on $\mathbb{R}^d$ and a function $f$ integrable under $R(x, \cdot)$, for any $x \in \mathbb{R}^d$, denote by $Rf(x) = \int_{\mathbb{R}^d}f(y)R(x,dy)$. For any integer $q \geq 1$, let $\mathcal{P}(\mathbb{R}^q)$ denote the set of probability measures on $\mathcal{B}(\mathbb{R}^q)$.
For $\mu,\nu\in\mathcal{P}(\mathbb{R}^d)$, let $\mathcal{C}(\mu,\nu)$ denote the set of probability measures $\zeta$
on $\mathcal{B}(\mathbb{R}^{2d})$ such that its respective marginals are $\mu,\nu$. For two probability measures $\mu$ and $\nu$, the Wasserstein distance of order $p \geq 1$ is defined as
\begin{equation}\label{eq:definition-W-p}
W_p(\mu,\nu):=\inf_{\zeta\in\mathcal{C}(\mu,\nu)}
\left(\int_{\rset^d}\int_{\rset^d}|\theta-\theta'|^p\zeta(d \theta d \theta')\right)^{1/p},\ \mu,\nu\in\mathcal{P}(\rset^d).
\end{equation}

\section{Main results}\label{result}
Denote by $\mathcal{G}_n:=\sigma(X_k,\ k\leq n,\ k\in\mathbb{N})$, for any $n \in\mathbb{N}$. $(X_n)_{n\in\mathbb{N}}$ is an $\mathbb{R}^m$-valued, $(\mathcal{G}_n)_{n\in\mathbb{N}}$-adapted process. It is assumed throughout the paper that $\theta_0$, $\mathcal{G}_{\infty}$ and $(\xi_{n})_{n\in\mathbb{N}}$ are independent. Moreover, the following assumptions are considered:

\begin{assumption}\label{expressionH}
Let $H: \mathbb{R}^d \times \mathbb{R}^m \rightarrow \mathbb{R}^d$ take the form
\[
H(\theta, x) = F(\theta, x) +   G(\theta, x), \quad \theta \in \mathbb{R}^d, \quad x \in \mathbb{R}^m,
\]
where $F: \mathbb{R}^d \times \mathbb{R}^m \rightarrow \mathbb{R}^d$ and $G: \mathbb{R}^d \times \mathbb{R}^m \rightarrow \mathbb{R}^d$ satisfy the following:
\begin{enumerate}[(i)]
\item $ F: \mathbb{R}^d \times \mathbb{R}^m \rightarrow \mathbb{R}^d$ is jointly Lipschitz continuous in both variables, i.e. there exist $L_1, L_2>0$, $\rho \geq 0$ such that for any $\theta, \theta' \in \mathbb{R}^d$, $x, x' \in \mathbb{R}^m$,
\[
| F(\theta,x) -  F(\theta',x')| \leq (1+|x|+|x'|)^{\rho}(L_1|\theta - \theta'| +L_2|x - x'|).
\]
\item $ G(\theta,x) : \mathbb{R}^d \times \mathbb{R}^m \rightarrow \mathbb{R}^d$ is bounded in $\theta$, i.e. there exist $K_1:\mathbb{R}^m \to \mathbb{R}_+$ such that for any $\theta \in \mathbb{R}^d$, $x  \in \mathbb{R}^m$,
\[
| G(\theta,x)| \leq K_1(x).
\]
\end{enumerate}
\end{assumption}

\begin{assumption}\label{iid}
We assume the inital value $\theta_0$ satisfies $\E[|\theta_0|^4]<\infty$. The process $(X_n)_{n \in \nset}$ is i.i.d. with $\E[|X_0|^{4\rho+4}]<\infty$ and $\E[K_1^4(X_0)]<\infty$. Moreover, it satisfies
\[
\E[H(\theta,X_0)]=h(\theta).
\]
\end{assumption}

\begin{remark}\label{growth}
By Assumption \ref{expressionH}, for all $\theta \in\mathbb{R}^d$ and $x \in \mathbb{R}^m$,
\[
|H(\theta, x)| \leq  (1+|x|)^{\rho+1}(L_1 |\theta| +L_2) +F_*(x) ,
\]
where $F_*(x) = | F(0,0)| +K_1(x)$. For any $x \in \mathbb{R}^m$, $\rho\geq 0$, denote by
\begin{equation}\label{krho}
K_{\rho}(x) = (1+2|x|)^{4\rho+4}.
\end{equation}
One notices that by Assumption \ref{iid}, $\mathbb{E}[K_{\rho}(X_0)]$ is well defined.
\end{remark}

\begin{assumption}\label{clc}
There exists a positive constant $L>0$ such that, for all $\theta, \theta'\in\mathbb{R}^d$,
\[
  \mathbb{E}[|H(\theta,X_0)- H(\theta',X_0)|]  \le L|\theta-\theta'|.
\]
\end{assumption}

\begin{remark}\label{hlip} Assumptions \ref{iid} and \ref{clc} imply, for all $\theta,\theta' \in \rset^{d}$,
\begin{equation}\label{mulyan}
| h(\theta)-h(\theta')| \leq  L |\theta-\theta'|.
\end{equation}
\end{remark}

\begin{remark}\label{clcex} Assumption \ref{clc} is satisfied for a wide class of $(X_n)_{n\in\mathbb{N}}$, see  Section \ref{application} for the examples. Here, for the illustrative purpose, one considers the following simple example. Suppose $ G(\theta,x) = \sum_{j=1}^N \dot{g}_j(\theta, x)\mathbbm{1}_{\bigcap_{i = 1}^m\{x^{(i)} \in I_{i,j}(\theta)\} }$ is a lower semi-continuous function, where $N \in \mathbb{N}^*$, $\dot{g}_j: \mathbb{R}^d \times \mathbb{R}^m \rightarrow \mathbb{R}^d$ are bounded and jointly Lipschitz continuous functions, i.e. there exist $L_3, L_4, K_2>0$ such that for any $\theta, \theta' \in \mathbb{R}^d$, $x, x' \in \mathbb{R}^m$, $j = 1, \dots, N$
\[
|\dot{g}_j(\theta,x) - \dot{g}_j(\theta',x')| \leq (1+|x|+|x'|)^{\rho}(L_3|\theta - \theta'| +L_4|x - x'|), \quad |\dot{g}_j(\theta, x)| \leq K_2,
\]
the intervals $ I_{i,j}(\theta)$ take the form $(-\infty, \bar{g}_j^{(i)}(\theta))$, $(\bar{g}_j^{(i)}(\theta), \infty)$ or $(\tilde{g}_j^{(i)}(\theta), \hat{g}_j^{(i)}(\theta))$, and $\bar{g}^{(i)}_j, \tilde{g}^{(i)}_j, \hat{g}^{(i)}_j: \mathbb{R}^d \rightarrow \mathbb{R}$ are Lipschitz continuous functions. In this case, it is enough to require the marginal density function of $X_0^{(i)}$ is continuous and bounded for any $i = 1, \dots, m$. Then, the property stated in Assumption \ref{clc} holds.
\end{remark}
\begin{proof} See Appendix \ref{proofclcex}.
\end{proof}

\subsection{Nonconvex case}
Further to the assumptions above, we consider the following conditions on $U$, which can be viewed as a generalization of the convexity assumption.
\begin{assumption}\label{assum:dissipativity}
There exist $A:\rset^m\to\rset^{d\times d}$, $b: \mathbb{R}^m \to \mathbb{R}$ such that for any $ x,y \in \rset^d$,
\begin{align*}
\langle y, A(x) y\rangle \geq 0
\end{align*}
and for all $\theta \in \mathbb{R}^d$ and $x\in\mathbb{R}^m$,
\[
\langle  F(\theta,x),\theta\rangle\geq \langle \theta, A(x) \theta\rangle -b(x).
\]
The smallest eigenvalue of $\E[A(X_0)]$ is a positive real number $a>0$ and $E[b(X_0)] = b>0$.
\end{assumption}
Define first
\begin{equation}
\label{eq:definition-lambda-max}
\lambda_{\max}= \min\left\{\frac{\min\{a,a^{1/3}\}}{24 (1+L_1)^2\mathbb{E}\left[K_{\rho}(X_0)\right]},\frac{1}{4a}\right\},
\end{equation}
where $L_1, a$ are given in Assumption \ref{expressionH} and \ref{assum:dissipativity} respectively, and $K_{\rho}(x)$ for any $x \in \mathbb{R}^m$ is defined in \eqref{krho}.
\begin{theorem}\label{main} Let Assumptions \ref{expressionH}, \ref{iid}, \ref{clc} and \ref{assum:dissipativity} hold. Then, for any $n \in \nset$, $0<\lambda\leq \lambda_{\max}$, there exist constants $C_0,C_1,C_2>0$ such that,
\begin{equation}\label{menyi}
W_1(\mathcal{L}(\theta^{\lambda}_n),\pi_{\beta})\leq C_1  e^{-C_0\lambda n}(\mathbb{E}[|\theta_0|^4]+1) +C_2\sqrt{\lambda},\ n\in\mathbb{N},
\end{equation}
where $C_0$, $C_1$ and $C_2$ are given explicitly in \eqref{mainthmconst}.
\end{theorem}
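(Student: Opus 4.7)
The high-level strategy is to compare the SGLD iterate $\theta^{\lambda}_n$ with a suitable continuous-time process and exploit a contraction property of the Langevin SDE \eqref{sde} towards its invariant measure $\pi_{\beta}$. I would introduce two auxiliary continuous-time objects: a piecewise-frozen-drift interpolation $\bar{\theta}^{\lambda}_t$ of the SGLD recursion that coincides with $\theta^{\lambda}_n$ at the grid points $t = n\lambda$, and the solution $(Z_t)_{t\ge 0}$ of \eqref{sde} driven by a Brownian motion $(B_t)_{t\ge 0}$ consistent with the noise in \eqref{SGLD}. The triangle inequality then gives
\[
W_1(\mathcal{L}(\theta^{\lambda}_n),\pi_{\beta}) \le W_1(\mathcal{L}(\theta^{\lambda}_n),\mathcal{L}(Z_{n\lambda})) + W_1(\mathcal{L}(Z_{n\lambda}),\pi_{\beta}),
\]
and each term is then treated by a distinct technique.

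The second term decays exponentially in $n\lambda$ by Eberle's reflection-coupling contraction for the Langevin diffusion, whose hypotheses are supplied by the dissipativity in Assumption \ref{assum:dissipativity} and the Lipschitz property of $h=\nabla U$ noted in Remark \ref{hlip}; one obtains a bound of the form $W_1(\mathcal{L}(Z_{n\lambda}),\pi_{\beta}) \lesssim e^{-c n\lambda}(1+\mathbb{E}[|\theta_0|^2])^{1/2}$. For the first (discretization) term, the natural route is a Gronwall-type estimate on a coupling of $\theta^{\lambda}_n$ and $Z_{n\lambda}$, but the discontinuity of $H(\cdot,x)$ rules out any pathwise Lipschitz bound on the drift increment. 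This is exactly where Assumption \ref{clc} intervenes: although $H(\cdot,x)$ is not continuous for fixed $x$, the inequality $\mathbb{E}|H(\theta,X_0)-H(\theta',X_0)| \le L |\theta-\theta'|$ supplies a Lipschitz bound on the drift \emph{in expectation}, which is enough to close the Gronwall estimate once one works systematically under $\mathbb{E}[\cdot]$.

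Before invoking these, I would establish uniform-in-$n$ moment bounds $\sup_n \mathbb{E}[|\theta^{\lambda}_n|^q] \le C_q(1+\mathbb{E}[|\theta_0|^q])$ for $q\in\{2,4\}$, using dissipativity of $F$, the boundedness of $G$ in Assumption \ref{expressionH}(ii), and the calibration of $\lambda_{\max}$ in \eqref{eq:definition-lambda-max} that forces the drift to be a strict contraction on second moments. These bounds, combined with the growth estimate in Remark \ref{growth} and the fourth-moment hypotheses in Assumption \ref{iid}, convert the CLC-based one-step error into an $O(\sqrt{\lambda})$ contribution per step. A naive summation of local errors would produce a factor $n\lambda$ that blows up, so to recover the uniform-in-$n$ coefficient $C_2$ in front of $\sqrt{\lambda}$ I would telescope the local errors over macroscopic blocks of length $T_0=\mathcal{O}(1)$ and, between successive blocks, insert the Eberle contraction for the Langevin semigroup. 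This block-by-block argument is what produces the geometric-series summation leading to the advertised bound \eqref{menyi}, with the $e^{-C_0\lambda n}$ factor inheriting the Eberle rate and $C_1$ absorbing the fourth-moment prefactor from the moment bounds.

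The main obstacle is the discontinuity of $H$: it prohibits the standard Taylor-expansion or pathwise Lipschitz estimate that drives the smooth-SGLD proofs. Overcoming it requires two ingredients running in parallel. First, one must show that the Gaussian noise added at each SGLD step, together with the smoothness of the marginal law of $X_0$ implicit in Assumption \ref{clc} (see Remark \ref{clcex}), regularises the jumps in $G$ enough that the expected one-step drift error remains $O(\sqrt{\lambda})$; this is where the CLC assumption is genuinely used. Second, one must carry the $(1+|x|)^{\rho}$ and $K_1(x)$ growth factors from Assumption \ref{expressionH} through every Itô and Cauchy--Schwarz estimate without destroying the $\sqrt{\lambda}$ rate, which is precisely why the $4\rho+4$ power appears in the definition of $K_{\rho}$ in \eqref{krho} and in the moment hypothesis on $X_0$.
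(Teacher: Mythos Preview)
Your plan matches the paper's approach in all essential components: the same triangle-inequality split, the Eberle-type contraction for $\mathcal{L}(Z_t)\to\pi_\beta$, uniform fourth-moment bounds, and block-by-block telescoping with the Langevin semigroup contracting between blocks (realised in the paper via the auxiliary process $\bar{\zeta}^{\lambda,n}_t$ of Definition~\ref{zetaprocess} with block length $T=\lfloor 1/\lambda\rfloor$).

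Two corrections to your description of the mechanism. First, the Gaussian noise does \emph{not} regularise the jumps of $G$: under the synchronous coupling used in Lemma~\ref{convergencepart1} the Brownian increments cancel exactly, and Assumption~\ref{clc} enters only through its consequence Remark~\ref{hlip} (Lipschitz continuity of $h$). The drift error is split as $[H(\bar{\theta}^{\lambda}_{\lfloor s\rfloor},X_{\lceil s\rceil})-h(\bar{\theta}^{\lambda}_{\lfloor s\rfloor})] + [h(\bar{\theta}^{\lambda}_{\lfloor s\rfloor})-h(\bar{\zeta}^{\lambda,n}_s)]$; the first bracket is a martingale increment controlled in $L^2$ by the growth bound of Remark~\ref{growth} and the moment estimates, while the second is handled by the Lipschitz property of $h$ and Gr\"onwall. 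Second, the telescoping step (Lemma~\ref{convergencepart2}) is carried out in the weighted semimetric $w_{1,2}$ of \eqref{eq:definition-w-1}, for which the contraction of Proposition~\ref{contr} is stated, rather than in $W_1$ directly; bounding $w_{1,2}$ by $W_2\cdot(1+V_4^{1/2}+V_4^{1/2})$ is the technical reason fourth moments, and hence the prefactor $\mathbb{E}[|\theta_0|^4]+1$, appear in \eqref{menyi}.
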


Theorem \ref{main} provides the rate of convergence between the law of the SGLD algorithm \eqref{SGLD} and the target distribution $\pi_{\beta}$ in $W_1$ distance. An analogous result in Wasserstein-2 distance can be obtained.

\begin{corollary}\label{cw2}  Let Assumptions \ref{expressionH}, \ref{iid}, \ref{clc} and \ref{assum:dissipativity} hold. Then, for any $n \in \mathbb{N}$, $0<\lambda\leq \lambda_{\max}$ given in \eqref{eq:definition-lambda-max}, there exist constants $C_3,C_4,C_5>0$ such that,
\begin{equation*}
W_2(\mathcal{L}(\theta^{\lambda}_n),\pi_{\beta})\leq C_4 e^{-C_3\lambda n}(\mathbb{E}[|\theta_0|^4]+1) +C_5\lambda^{1/4},\ n\in\mathbb{N},
\end{equation*}
where $C_3$, $C_4$ and $C_5$ are given explicitly in \eqref{cw2const}.
\end{corollary}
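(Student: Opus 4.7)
\textbf{Proof strategy for Corollary \ref{cw2}.} The plan is to upgrade the $W_1$ bound of Theorem \ref{main} to a $W_2$ bound by combining it with uniform fourth-moment control and a H\"older-type interpolation between Wasserstein distances of different orders. This is the natural route since the contractions underlying Theorem \ref{main} (Eberle-type reflection couplings in the non-convex setting) are intrinsically $W_1$-based and do not directly yield $W_2$ contractions.

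First, I would establish uniform-in-$n$ fourth-moment bounds
\[
\sup_{n\in\nset}\E[|\theta^{\lambda}_n|^{4}] \leq M_{*}(\E[|\theta_0|^{4}]+1), \qquad \int_{\Rd}|\theta|^{4}\,\pi_{\beta}(d\theta) < \infty,
\]
valid for every $\lambda\in(0,\lambda_{\max}]$ with $M_{*}$ independent of $n$ and $\lambda$. For the SGLD iterates these follow by expanding $|\theta^{\lambda}_{n+1}|^{2}$ and $|\theta^{\lambda}_{n+1}|^{4}$ in the recursion \eqref{SGLD}, controlling the $F$-term via the dissipativity in Assumption \ref{assum:dissipativity}, bounding the $G$-term by $K_1(X_{n+1})$, and using the integrability $\E[K_{\rho}(X_0)]<\infty$ from Remark \ref{growth} together with the independence of $(\xi_n)$. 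This produces a one-step drift of the form $\E[|\theta^{\lambda}_{n+1}|^{2}\mid\mathcal{G}_n]\leq(1-c_{1}\lambda)|\theta^{\lambda}_n|^{2}+c_{2}\lambda$, which iterates; the fourth moment is handled analogously. The corresponding bound for $\pi_{\beta}$ follows from its exponentially decaying tails, itself a consequence of dissipativity of $U$.

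Second, for any $\mu,\nu\in\mathcal{P}(\Rd)$ with finite $q$-th moments ($q>2$), H\"older's inequality applied to the $W_1$-optimal coupling $\zeta^{*}\in\mathcal{C}(\mu,\nu)$, combined with the marginal bound $\int|\theta-\theta'|^{q}\,d\zeta^{*}\leq 2^{q-1}(M_q(\mu)+M_q(\nu))$, yields
\[
W_2(\mu,\nu)^{2}\leq C_q\,\bigl(M_q(\mu)+M_q(\nu)\bigr)^{1/(q-1)}\,W_1(\mu,\nu)^{(q-2)/(q-1)}.
\]
Specialising to $\mu=\mathcal{L}(\theta^{\lambda}_n)$ and $\nu=\pi_{\beta}$, the moment factor is controlled uniformly in $n$ and $\lambda$ by the first step. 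Inserting the $W_1$ estimate from Theorem \ref{main} and using $\sqrt{a+b}\leq\sqrt{a}+\sqrt{b}$ to separate the exponential and discretisation contributions then produces the bound stated in Corollary \ref{cw2}, with $C_3$ proportional to $C_0$ and the discretisation term taking the form $C_5\lambda^{1/4}$.

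The main obstacle is recovering the \emph{sharp} exponent $1/4$ in $\lambda$: the H\"older interpolation with any fixed $q$ gives only the exponent $(q-2)/(4(q-1))$ in $\lambda$, so either one takes $q$ large while controlling the growth of $M_q^{1/(q-1)}$, or one runs a truncation-and-tails optimisation splitting $|x-y|^{2}\leq R|x-y|+|x-y|^{2}\indiacc{|x-y|>R}$ and optimising over $R$. For the measures at hand this succeeds because both $\mathcal{L}(\theta^{\lambda}_n)$ (uniformly in $n$) and $\pi_{\beta}$ have sub-Gaussian-type tails under dissipativity, so the $q$-dependence of the moments can be absorbed into the constants, producing the explicit $C_3,C_4,C_5$ of \eqref{cw2const}.
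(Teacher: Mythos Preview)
Your approach is genuinely different from the paper's, and it has a gap that prevents you from recovering the stated exponent $\lambda^{1/4}$.

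Under Assumption~\ref{iid} you only have $\E[|\theta_0|^4]<\infty$ and $\E[|X_0|^{4\rho+4}]<\infty$, so the SGLD iterates $\theta^{\lambda}_n$ enjoy at most fourth moments uniformly in $n$ (this is exactly what Lemma~\ref{lem:moment_SGLD_2p} establishes). With $q=4$, your H\"older interpolation gives $W_2^2\le C\,W_1^{2/3}(M_4)^{1/3}$, hence $W_2\lesssim W_1^{1/3}\lesssim\lambda^{1/6}$, not $\lambda^{1/4}$. Your proposed remedies do not close this: taking $q\to\infty$ requires moments of all orders for $\theta^{\lambda}_n$, which are unavailable (the stochastic gradient inherits only the finitely many moments of $X_0$ and $\theta_0$); and the ``sub-Gaussian-type tails'' you invoke for $\mathcal{L}(\theta^{\lambda}_n)$ are not implied by dissipativity alone---the initial law has only a fourth moment and the innovations $H(\theta,X_{n+1})$ have polynomial growth in a variable with finitely many moments. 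The truncation argument you sketch optimises to the same $W_1^{2/3}$ bound as H\"older.

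The paper avoids interpolation altogether. It never post-processes the $W_1$ bound of Theorem~\ref{main}; instead it re-runs the entire decomposition $W_2(\mathcal{L}(\bar{\theta}^{\lambda}_t),\pi_\beta)\le W_2(\bar{\theta}^{\lambda}_t,\bar{\zeta}_t^{\lambda,n})+W_2(\bar{\zeta}_t^{\lambda,n},Z_t^\lambda)+W_2(Z_t^\lambda,\pi_\beta)$ directly in $W_2$. The key is that the contraction of Proposition~\ref{contr} is stated for the weighted functional $w_{1,2}$, which satisfies \emph{both} $W_1\le w_{1,2}$ and $W_2\le\sqrt{2w_{1,2}}$. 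Applying the second inequality term-by-term in the telescoping sum (Corollary~\ref{convergencepart2w2}) turns each summand into $\sqrt{w_{1,2}}\lesssim W_2^{1/2}(\cdot,\cdot)\,(\text{fourth moments})^{1/4}$, and since the one-block $W_2$ distance is already $O(\sqrt{\lambda})$ by the direct $L^2$ estimate \eqref{vibd1}, this yields $\lambda^{1/4}$. The square root is taken \emph{inside} the decomposition, not applied to the final $W_1$ bound---that is what your approach misses.
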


By using the convergence result in Wasserstein-2 distance as presented in Corollary \ref{cw2}, one can obtain an upper bound for the expected excess risk $ \mathbb{E}[U(\hat{\theta})] - \inf_{\theta \in \mathbb{R}^d} U(\theta)$.

\begin{corollary}\label{eer}  Let Assumptions \ref{expressionH}, \ref{iid}, \ref{clc} and \ref{assum:dissipativity} hold. 
Then, for every $0<\lambda\leq \lambda_{\max}$ given in \eqref{eq:definition-lambda-max}, there exist constants $\hat{C}_0, \hat{C}_1, \hat{C}_2, \hat{C}_3>0$ such that the expected excess risk can be estimated as
\begin{equation*}
 \mathbb{E}[U(\hat{\theta})] - \inf_{\theta \in \mathbb{R}^d} U(\theta) \leq \hat{C}_1e^{-\hat{C}_0\lambda n}+\hat{C_2}\lambda^{1/4} + \hat{C}_3/\beta,
\end{equation*}
where $\hat{\theta} =  \theta^{\lambda}_n$, and $\hat{C}_0, \hat{C}_1, \hat{C}_2$,  $\hat{C}_3>0$ are given explicitly in \eqref{eerconst} and \eqref{eerconst2}.
\end{corollary}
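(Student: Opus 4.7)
The plan is to decompose the excess risk in the standard two-term fashion
\[
\mathbb{E}[U(\theta^{\lambda}_n)] - \inf_{\theta \in \mathbb{R}^d} U(\theta) = \bigl(\mathbb{E}[U(\theta^{\lambda}_n)] - \pi_{\beta}(U)\bigr) + \bigl(\pi_{\beta}(U) - \inf_{\theta \in \mathbb{R}^d} U(\theta)\bigr),
\]
and bound each bracket separately. The first bracket measures how close the law of the SGLD iterate is to its intended invariant measure, and it will be handled by Corollary \ref{cw2}; the second bracket is the finite-temperature bias of the Gibbs measure and will be handled by a classical Laplace-type estimate under Assumption \ref{assum:dissipativity}.

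For the first bracket, Remark \ref{hlip} tells us that $h=\nabla U$ is globally Lipschitz with constant $L$, so $|\nabla U(\theta)| \leq |h(0)| + L|\theta|$ and therefore, by the mean-value theorem,
\[
|U(\theta) - U(\theta')| \leq \bigl(|h(0)| + L|\theta| + L|\theta'|\bigr)\,|\theta - \theta'|.
\]
Taking expectations with respect to the optimal $W_2$-coupling between $\mathcal{L}(\theta^{\lambda}_n)$ and $\pi_{\beta}$ and applying Cauchy--Schwarz, I would obtain
\[
\bigl|\mathbb{E}[U(\theta^{\lambda}_n)] - \pi_{\beta}(U)\bigr| \leq \Bigl(|h(0)| + L\,\mathbb{E}[|\theta^{\lambda}_n|^2]^{1/2} + L\,\bigl(\pi_{\beta}(|\cdot|^2)\bigr)^{1/2}\Bigr)\,W_2(\mathcal{L}(\theta^{\lambda}_n),\pi_{\beta}).
\]
The uniform-in-$n$ second moment $\sup_n \mathbb{E}[|\theta^{\lambda}_n|^2]<\infty$, valid for $\lambda \leq \lambda_{\max}$, should already be available from the Lyapunov estimates used to prove Theorem \ref{main} (the dissipativity of $F$ and the $\theta$-uniform boundedness of $G$ give a one-step contraction of $|\theta|^2$ up to a constant from the Gaussian noise). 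Sending $n\to\infty$ together with the lower semicontinuity of $W_2$ then transfers this bound to $\pi_{\beta}(|\cdot|^2)$. Substituting Corollary \ref{cw2} produces the $\hat{C}_1 e^{-\hat{C}_0 \lambda n} + \hat{C}_2 \lambda^{1/4}$ part of the claimed estimate.

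For the second bracket, I would follow the Raginsky--Rakhlin--Telgarsky route: Assumption \ref{assum:dissipativity} yields a quadratic lower bound on $U$ at infinity (from $\langle h(\theta),\theta\rangle \geq a|\theta|^2 - O(|\theta|) - b$ and $\nabla U = h$), and the standard Gibbs-measure / Laplace argument then gives
\[
\pi_{\beta}(U) - \inf_{\theta \in \mathbb{R}^d} U(\theta) \leq \frac{d}{2\beta}\log\!\left(\frac{eL}{a}\Bigl(\frac{b\beta}{d}+1\Bigr)\right),
\]
whose right-hand side is $O(1/\beta)$ up to logarithmic factors in $\beta$ that may be absorbed into a single constant $\hat{C}_3$.

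The main obstacle will be the uniform-in-$n$ second moment bound for $(\theta^{\lambda}_n)_{n\geq 0}$ in the discontinuous-gradient setting: one has to exploit the dissipativity of $F$, the $\theta$-uniform boundedness of $G$, and the Gaussian perturbation simultaneously to close the recursion, using the stepsize restriction $\lambda \leq \lambda_{\max}$ to absorb the quadratic drift term. Once this bound (which the proof of Theorem \ref{main} should already produce) is in hand, the rest of the argument is essentially bookkeeping of constants, and explicit values for $\hat{C}_0, \hat{C}_1, \hat{C}_2, \hat{C}_3$ can be read off from the corresponding constants in Corollary \ref{cw2} and the Gibbs-measure estimate above.
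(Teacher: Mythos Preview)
Your proposal is correct and follows essentially the same route as the paper: the same two-term decomposition, the first bracket handled via Corollary \ref{cw2} together with the uniform second-moment bound from Lemma \ref{lem:moment_SGLD_2p} (the paper packages your mean-value/Cauchy--Schwarz step as \cite[Lemma~3.5]{raginsky}), and the second bracket via \cite[Proposition~3.4]{raginsky}. The only cosmetic difference is that the paper bounds the second moment under $\pi_\beta$ via a direct Langevin-SDE estimate (Lemma \ref{zt2ndmoment}) rather than your limiting argument.
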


\subsection{Convex case}
Recall Assumption \ref{expressionH}, where it is assumed $H =  F+ G$. In this section, we present (improved) convergence results of the SGLD algorithm \eqref{SGLD} under the convexity condition of $F$ and $G$.

In the case that $F$ satisfies a convexity condition but not $G$, the result in Theorem \ref{main} can be recovered.
\begin{assumption}\label{convex}
There exist $\hat{A}_1:\rset^m\to\rset^{d\times d}$ such that for any $ x,y \in \rset^d$,
\begin{align*}
\langle y, \hat{A}_1(x) y\rangle \geq 0
\end{align*}
and for each $\theta, \theta' \in \rset^d$, $x\in\mathbb{R}^m$,
\[
\langle F(\theta,x)-  F(\theta',x), \theta-\theta'\rangle \geq \langle \theta-\theta', \hat{A}_1(x) (\theta-\theta')\rangle.
\]
The smallest eigenvalue of $\E[\hat{A}_1(X_0)]$ is a positive real number $\hat{a}_1 > \epsilon$ with $\epsilon >0$.
\end{assumption}

\begin{remark}\label{disF} By Assumptions \ref{expressionH} and \ref{convex}, one obtains, for $\theta\in\mathbb{R}^d$ and $x\in\mathbb{R}^m$,
\[
\langle   F(\theta,x),\theta\rangle\geq \langle \theta, \hat{A}_1^*(x) \theta\rangle -\hat{b}(x),
\]
where $\hat{A}_1^*(x) = \hat{A}_1(x) -\epsilon\mathbf{I}_d$ and $\hat{b}(x)= (L_2(1+|x|)^{\rho+1}+|   F(0,0)|)^2/(4\epsilon)$.
\end{remark}
\begin{proof}See Appendix \ref{proofre4}.
\end{proof}

\begin{corollary}\label{mainconvexdis}
Let Assumptions \ref{expressionH}, \ref{iid}, \ref{clc} and \ref{convex} hold. Then, for any $n \in \nset$, $0<\lambda\leq \lambda_{\max}^*$, where
\[
\lambda_{\max}^*= \min\left\{\frac{\min\{ a^*,( a^*)^{1/3}\}}{24 (1+L_1)^2\mathbb{E}\left[K_{\rho}(X_0)\right]},\frac{1}{ 4a^*}\right\}
\]
with $ a^* = \hat{a}_1 - \epsilon$, there exist constants $C_0^*,C_1^*,C_2^*>0$ such that,
\begin{equation}
W_1(\mathcal{L}(\theta^{\lambda}_n),\pi_{\beta})\leq C_1^* e^{-C_0^*\lambda n}(\mathbb{E}[|\theta_0|^4]+1) +C_2^*\sqrt{\lambda},\ n\in\mathbb{N}.
\end{equation}
\end{corollary}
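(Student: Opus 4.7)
The plan is to reduce this corollary to a direct application of Theorem \ref{main}, by showing that Assumption \ref{convex} (together with Assumption \ref{expressionH}) already encodes the dissipativity requirement of Assumption \ref{assum:dissipativity}. Assumptions \ref{expressionH}, \ref{iid}, and \ref{clc} are given by hypothesis and can be transferred verbatim, so only the dissipativity hypothesis of Theorem \ref{main} needs to be produced.

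First I would invoke Remark \ref{disF}, which is precisely the bridge between one-sided convexity of $F$ and dissipativity. Setting
\[
A(x) \eqdef \hat{A}_1^*(x) = \hat{A}_1(x)-\epsilon\mathbf{I}_d,\qquad b(x) \eqdef \hat{b}(x)=\frac{(L_2(1+|x|)^{\rho+1}+| F(0,0)|)^2}{4\epsilon},
\]
Remark \ref{disF} yields $\langle  F(\theta,x),\theta\rangle \geq \langle \theta, A(x)\theta\rangle - b(x)$ for all $\theta, x$. Taking expectations, the smallest eigenvalue of $\E[A(X_0)] = \E[\hat{A}_1(X_0)] - \epsilon\mathbf{I}_d$ equals $a^* = \hat{a}_1 - \epsilon > 0$ by Assumption \ref{convex}, while Assumption \ref{iid} (the moment bound $\E[|X_0|^{4\rho+4}]<\infty$) guarantees $\E[b(X_0)]<\infty$. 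Thus all the quantitative ingredients needed by Theorem \ref{main} are in place, with the effective dissipativity constant $a^*$ playing the role of $a$.

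Finally, I would apply Theorem \ref{main} with these substitutions. The permissible stepsize $\lambda_{\max}$ in \eqref{eq:definition-lambda-max} becomes exactly $\lambda_{\max}^*$ after replacing $a$ by $a^*$, and the constants $C_0,C_1,C_2$ from \eqref{mainthmconst} transform into $C_0^*,C_1^*,C_2^*$ under the same substitution, giving the claimed bound
\[
W_1(\mathcal{L}(\theta^{\lambda}_n),\pi_{\beta})\leq C_1^* e^{-C_0^*\lambda n}(\mathbb{E}[|\theta_0|^4]+1) +C_2^*\sqrt{\lambda}.
\]
The only mildly delicate point is that $\hat{A}_1^*(x) = \hat{A}_1(x)-\epsilon\mathbf{I}_d$ need not be pointwise positive semi-definite, whereas $\hat A_1(x)$ is; so one must check that the proof of Theorem \ref{main} uses only the \emph{expected} lower-eigenvalue bound on $\E[A(X_0)]$, or alternatively absorb the harmless $-\epsilon|\theta|^2$ term into a slightly enlarged $b$ via Young's inequality. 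This bookkeeping is expected to be the main (and essentially the only) obstacle, after which the corollary follows with no new estimates.
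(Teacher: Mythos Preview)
Your proposal is correct and is exactly the argument the paper intends: Corollary \ref{mainconvexdis} is stated immediately after Remark \ref{disF} with no separate proof, so the reduction ``Remark \ref{disF} $\Rightarrow$ Assumption \ref{assum:dissipativity} with $a=a^*$ $\Rightarrow$ Theorem \ref{main}'' is the paper's own route. Your caveat about pointwise positive semi-definiteness of $\hat{A}_1^*(x)$ is well taken; inspecting the proofs of Lemma \ref{lem:moment_SGLD_2p} and Lemma \ref{lem:PreLimforDriftY} confirms that only the bound on the smallest eigenvalue of $\E[A(X_0)]$ is ever used, so the first of your two resolutions goes through (the Young-inequality alternative would require a pointwise lower eigenvalue on $\hat{A}_1(x)$, which is not assumed).
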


If $G$ is assumed to be convex in addition to Assumption \ref{convex}, then it can be shown that the rate of convergence is 1/2 in Wasserstein-2 distance between the law of the SGLD algorithm \eqref{SGLD} and the target distribution $\pi_{\beta}$, which appeared to be optimal, see \cite[Example~3.4]{convex}.
\begin{assumption}\label{convexG}
There exist $\hat{A}_2:\rset^m\to\rset^{d\times d}$ such that for any $ x,y \in \rset^d$,
\begin{align*}
\langle y, \hat{A}_2(x) y\rangle \geq 0
\end{align*}
and for each $\theta, \theta' \in \rset^d$, $x\in\mathbb{R}^m$,
\[
\langle  G(\theta,x)- G(\theta',x), \theta-\theta'\rangle \geq \langle \theta-\theta', \hat{A}_2(x) (\theta-\theta')\rangle.
\]
The smallest eigenvalue of $\E[\hat{A}_2(X_0)]$ is a positive real number $\hat{a}_2 > 0$.
\end{assumption}
\begin{remark}\label{hconvex} Assumptions \ref{convex} and \ref{convexG} imply, for each $\theta, \theta' \in \rset^d$, $x\in\mathbb{R}^m$,
\[
\langle H(\theta,x)-H(\theta',x), \theta-\theta'\rangle \geq \langle \theta-\theta', \hat{A}(x) (\theta-\theta')\rangle,
\]
where $\hat{A}(x) =\hat{A}_1(x)  + \hat{A}_2(x) $. Moreover, one obtains
\[
\langle h(\theta)-h(\theta'), \theta-\theta'\rangle \geq \hat{a}| \theta-\theta'|^2,
\]
where $\hat{a} = \hat{a}_1 +\hat{a}_2 $.
\end{remark}


\begin{remark}\label{hconvex2} By Remark \ref{hlip} and Remark \ref{hconvex}, \cite[Theorem~2.1.12]{nesterov} shows that
\[
\langle h(\theta)-h(\theta'), \theta-\theta'\rangle \geq \hat{a}^*| \theta-\theta'|^2 +\frac{1}{\hat{a}+L}|h(\theta)-h(\theta')|^2,
\]
where $\hat{a}^* = \hat{a}L/(\hat{a}+L)$.
\end{remark}

Define
\begin{equation}\label{lambdamaxconv}
\bar{\lambda}_{\max} = \min\{1/2(\hat{a}+L), \hat{a}/(4L_1^2\mathbb{E}[K_{\rho}(X_0)])\}
\end{equation}
with $\hat{a} = \hat{a}_1 +\hat{a}_2 $ given in Remark \ref{hconvex}. Under the convexity condition of $H$, the non-asymptotic bound for $W_2(\mathcal{L}(\theta^{\gamma}_n),\pi_{\beta})$ is obtained with the optimal convergence rate 1/2. The explicit statement is given below.
\begin{theorem}\label{mainconvex} Let Assumptions \ref{expressionH}, \ref{iid}, \ref{clc}, \ref{convex} and \ref{convexG} hold. Then, for any $n \in \nset$, $0<\lambda <\bar{\lambda}_{\max}$ given in \eqref{lambdamaxconv}, there exist constants $C_6,C_7,C_8>0$ such that,
\[
W_2(\mathcal{L}(\theta^{\lambda}_n),\pi_{\beta})\leq C_7 e^{-C_6\lambda n} +C_8\sqrt{\lambda},
\]
where $C_6, C_7$ and $C_8$ are given explicitly in \eqref{mainconvexconst}. If $\rho = 0$ in Assumption \ref{expressionH}, then the result holds  for $\lambda \in \min\{1/2(\hat{a}+L) , 1/(6L_1)\}$.
\end{theorem}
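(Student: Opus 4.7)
The plan is to carry out a synchronous-coupling argument, in the spirit of \cite{convex,dk}, between a continuous-time interpolation of the SGLD iterates and a stationary copy of the Langevin diffusion \eqref{sde}. The strong monotonicity of $h=\nabla U$ from Remark \ref{hconvex} and its global Lipschitz continuity from Remark \ref{hlip} are the two pillars of the argument; the decomposition $H=F+G$ in Assumption \ref{expressionH} is what forces every error estimate to live inside an expectation and, in the end, produces the $\sqrt\lambda$ rate in the statement.

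First I would introduce a continuous-time process $(\bar\theta^\lambda_t)_{t\ge 0}$ satisfying
\begin{equation*}
\mathrm d\bar\theta^\lambda_t = -H(\theta^\lambda_{n_t},X_{n_t+1})\,\mathrm dt + \sqrt{2\beta^{-1}}\,\mathrm dB_t,\qquad n_t:=\lfloor t/\lambda\rfloor,
\end{equation*}
with the driving Brownian motion $(B_t)$ chosen so that $\xi_{n+1}=(B_{(n+1)\lambda}-B_{n\lambda})/\sqrt\lambda$, which forces $\bar\theta^\lambda_{n\lambda}=\theta^\lambda_n$. Two preparatory bounds are then needed. The first is a uniform-in-$(n,\lambda)$ fourth-moment estimate for $\theta^\lambda_n$, obtained by squaring the recursion \eqref{SGLD}, taking expectation, using the strong monotonicity of $h$ from Remark \ref{hconvex} on $\langle\theta,h(\theta)\rangle$ and the linear growth of $|H|$ from Remark \ref{growth} on the $\lambda^2$-term, and imposing $\lambda\le\bar\lambda_{\max}$ so that the recursion contracts. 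The second is a short-time increment bound $\mathbb E|\bar\theta^\lambda_t-\bar\theta^\lambda_{n_t\lambda}|^2\le C\lambda(1+\mathbb E|\theta^\lambda_{n_t}|^2)$, immediate from the defining SDE for $\bar\theta^\lambda$ together with Remark \ref{growth}.

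I then let $(Z_t)$ solve \eqref{sde} with $Z_0\sim\pi_\beta$ driven by the same Brownian motion, so $Z_t\sim\pi_\beta$ for every $t$. Setting $e_t:=\bar\theta^\lambda_t-Z_t$, the synchronous coupling cancels the martingale part of It\^o's formula, leaving
\begin{equation*}
\frac{\mathrm d}{\mathrm dt}\mathbb E|e_t|^2 = -2\mathbb E\langle e_t,h(\bar\theta^\lambda_t)-h(Z_t)\rangle - 2\mathbb E\langle e_t,H(\theta^\lambda_{n_t},X_{n_t+1})-h(\bar\theta^\lambda_t)\rangle.
\end{equation*}
The first inner product contributes at most $-2\hat a\,\mathbb E|e_t|^2$ via Remark \ref{hconvex} (with Remark \ref{hconvex2} sharpening the constants). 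In the second I split $H(\theta^\lambda_{n_t},X_{n_t+1})-h(\bar\theta^\lambda_t)$ as $[H(\theta^\lambda_{n_t},X_{n_t+1})-h(\theta^\lambda_{n_t})]+[h(\theta^\lambda_{n_t})-h(\bar\theta^\lambda_t)]$. The first bracket is $\mathcal G_{n_t}$-conditionally centred: its inner product with $e_{n_t\lambda}$ vanishes in expectation, and its inner product with $\bar\theta^\lambda_t-\bar\theta^\lambda_{n_t\lambda}$ is of order $\sqrt\lambda$ by Cauchy--Schwarz and the increment bound. The second bracket is controlled in $L^2$ by $L|\theta^\lambda_{n_t}-\bar\theta^\lambda_t|=O(\sqrt\lambda)$ thanks to Remark \ref{hlip} and the same increment bound. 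Young's inequality absorbs the cross-terms into the contraction (this is the origin of the constant $1/[2(\hat a+L)]$ in $\bar\lambda_{\max}$), giving $\tfrac{\mathrm d}{\mathrm dt}\mathbb E|e_t|^2\le -c_1\mathbb E|e_t|^2+c_2\lambda$; Gronwall then yields $\mathbb E|e_{n\lambda}|^2\le e^{-c_1 n\lambda}\mathbb E|e_0|^2+c_2\lambda/c_1$. Optimally coupling $Z_0$ with $\theta_0$ (so $\mathbb E|e_0|^2=W_2^2(\mathcal L(\theta_0),\pi_\beta)$, which is finite by Assumption \ref{iid} and the sub-Gaussian tails of $\pi_\beta$) and taking square roots gives the advertised bound. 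When $\rho=0$ the moment bound tightens and the stability constraint on $\lambda$ relaxes to $1/(6L_1)$.

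The principal obstacle is that the discontinuity of $G$ rules out any pathwise $L^2$-Lipschitz control of $H(\cdot,x)$; only the conditional $L^1$-Lipschitz bound in Assumption \ref{clc} and the uniform estimate $|G|\le K_1(X)$ are available. The cure is to keep every $G$-dependent term inside an expectation and to exploit the $\mathcal G_{n_t}$-conditional martingale structure of $H(\theta^\lambda_{n_t},X_{n_t+1})-h(\theta^\lambda_{n_t})$, so that the ``bad'' noise meets only the Brownian-scale increment $\bar\theta^\lambda_t-\bar\theta^\lambda_{n_t\lambda}$ of size $\sqrt\lambda$ rather than the full displacement $e_t$. This is the step that must be executed with care, and it is the reason the final rate is $\sqrt\lambda$ and not $\lambda$.
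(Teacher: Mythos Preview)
Your approach is genuinely different from the paper's and, with one correction, can be made to work. The paper does \emph{not} couple $\bar\theta^\lambda$ directly with a stationary Langevin diffusion. Instead it uses the three-term decomposition
\[
W_2(\mathcal L(\theta^\lambda_n),\pi_\beta)\le W_2(\mathcal L(\theta^\lambda_n),\mathcal L(\dot\theta^\lambda_n))+W_2(\mathcal L(\dot\theta^\lambda_n),\pi_\lambda)+W_2(\pi_\lambda,\pi_\beta),
\]
where $\dot\theta^\lambda_n$ is the deterministic-gradient ULA \eqref{ULA} and $\pi_\lambda$ its invariant law. The last two pieces are imported from \cite{aew} (Theorem \ref{converconvex}); the first is handled by a \emph{discrete} synchronous coupling (Lemma \ref{converconvex2}) in which the cross term $-2\lambda\langle\dot\theta^\lambda_n-\theta^\lambda_n,\,h(\theta^\lambda_n)-H(\theta^\lambda_n,X_{n+1})\rangle$ vanishes exactly under conditioning, and the co-coercivity bound of Remark \ref{hconvex2} absorbs $2\lambda^2|h(\dot\theta^\lambda_n)-h(\theta^\lambda_n)|^2$. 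This buys modularity (known ULA results do most of the work) and needs only the second-moment Lemma \ref{2ndbdconv}, not fourth moments.

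There is, however, a real gap in your argument as written. You claim the inner product of $H(\theta^\lambda_{n_t},X_{n_t+1})-h(\theta^\lambda_{n_t})$ with $\bar\theta^\lambda_t-\bar\theta^\lambda_{n_t\lambda}$ is $O(\sqrt\lambda)$ ``by Cauchy--Schwarz and the increment bound'', and then assert $\tfrac{\mathrm d}{\mathrm dt}\mathbb E|e_t|^2\le -c_1\mathbb E|e_t|^2+c_2\lambda$. These two statements are inconsistent: Cauchy--Schwarz on that cross term gives $(\mathbb E|\bar\theta^\lambda_t-\bar\theta^\lambda_{n_t\lambda}|^2)^{1/2}(\mathbb E|H-h|^2)^{1/2}=O(\sqrt\lambda)$, which after Gronwall yields only $W_2=O(\lambda^{1/4})$, the same suboptimal rate as Corollary \ref{cw2}. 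To recover $\sqrt\lambda$ you must exploit more structure: write
\[
\bar\theta^\lambda_t-\bar\theta^\lambda_{n_t\lambda}=-(t-n_t\lambda)\,H(\theta^\lambda_{n_t},X_{n_t+1})+\sqrt{2\beta^{-1}}\,(B_t-B_{n_t\lambda}),
\]
observe that the Brownian increment is independent of $X_{n_t+1}$ (so its inner product with $H-h$ has zero mean), and likewise that $Z_t-Z_{n_t\lambda}$ contributes nothing in expectation. What survives is $(t-n_t\lambda)\,\mathbb E|H-h(\theta^\lambda_{n_t})|^2=O(\lambda)$, which is exactly what your differential inequality needs. With this fix your direct coupling is a valid alternative to the paper's ULA detour; without it the claimed rate does not follow.
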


By using Theorem \ref{mainconvex}, one can obtain an upper bound for the expected excess risk $ \mathbb{E}[U(\hat{\theta})] - \inf_{\theta \in \mathbb{R}^d} U(\theta)$ in the convex case.
\begin{corollary}\label{eer2}  Let Assumptions \ref{expressionH}, \ref{iid}, \ref{clc}, \ref{convex} and \ref{convexG} hold. Then,  for every $0<\lambda\leq \bar{\lambda}_{\max}$ given in \eqref{lambdamaxconv}, there exist constants $\hat{C}_4, \hat{C}_5, \hat{C}_6, \hat{C}_7>0$ such that the expected excess risk can be estimated as
\begin{equation*}
 \mathbb{E}[U(\hat{\theta})] - \inf_{\theta \in \mathbb{R}^d} U(\theta) \leq \hat{C}_5e^{-\hat{C}_4\lambda n}+\hat{C_6}\sqrt{\lambda} + \hat{C}_7/\beta,
\end{equation*}
where $\hat{\theta} =  \theta^{\lambda}_n$, and $\hat{C}_4, \hat{C}_5, \hat{C}_6, \hat{C}_7>0$ are given explicitly in \eqref{eerconst3} and \eqref{eerconst4}.
\end{corollary}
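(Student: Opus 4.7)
The plan is to imitate the strategy already used for Corollary \ref{eer}, but with the stronger Wasserstein-2 rate of Theorem \ref{mainconvex} as input. Concretely, I would split the excess risk by inserting the target distribution:
\[
\mathbb{E}[U(\hat{\theta})] - \inf_{\theta\in\mathbb{R}^d} U(\theta)
= \bigl(\mathbb{E}[U(\hat{\theta})] - \pi_{\beta}(U)\bigr)
+ \bigl(\pi_{\beta}(U) - \inf_{\theta\in\mathbb{R}^d} U(\theta)\bigr),
\]
and handle the two differences separately. The first one is a transport-type term that I will control by $W_{2}(\mathcal{L}(\hat{\theta}),\pi_{\beta})$, and the second one is the intrinsic Gibbs bias that accounts for the temperature $1/\beta$.

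For the transport term, Remark \ref{hlip} gives $|h(\theta)-h(\theta')|\leq L|\theta-\theta'|$, so integrating along a segment from $\theta'$ to $\theta$ yields the quadratic growth estimate
\[
|U(\theta)-U(\theta')| \leq \bigl(|h(0)| + L(|\theta|\vee|\theta'|)\bigr)|\theta-\theta'|.
\]
Coupling $\hat{\theta}$ optimally with a draw from $\pi_{\beta}$ and applying Cauchy-Schwarz gives $|\mathbb{E}[U(\hat{\theta})]-\pi_{\beta}(U)| \leq K\, W_{2}(\mathcal{L}(\hat{\theta}),\pi_{\beta})$ where $K$ depends only on $|h(0)|$, $L$ and a uniform-in-$n$ second moment for $\theta^{\lambda}_{n}$ together with the second moment of $\pi_{\beta}$. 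The moment bound on $\theta^{\lambda}_{n}$ is already available in the machinery leading to Theorem \ref{mainconvex} (the drift condition is exactly Remark \ref{disF} combined with Assumption \ref{convexG}), and the moment bound on $\pi_{\beta}$ is standard because $U$ is strongly convex with modulus $\hat{a}$ by Remark \ref{hconvex}. Feeding in Theorem \ref{mainconvex} then produces the $\hat{C}_{5}e^{-\hat{C}_{4}\lambda n} + \hat{C}_{6}\sqrt{\lambda}$ part of the bound.

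For the Gibbs-bias term, I would invoke the classical estimate for log-concave distributions: since $U$ is $\hat{a}$-strongly convex and its gradient is $L$-Lipschitz, a direct comparison of the normalizing constant of $\pi_{\beta}\propto e^{-\beta U}$ with the Gaussian centred at the minimizer and of covariance $(\beta L)^{-1}I_{d}$ from below and $(\beta\hat{a})^{-1}I_{d}$ from above gives
\[
\pi_{\beta}(U)-\inf_{\theta\in\mathbb{R}^d}U(\theta) \leq \frac{d}{2\beta}\Bigl(1+\log\frac{L}{\hat{a}}\Bigr),
\]
which contributes the $\hat{C}_{7}/\beta$ term and defines the constants in \eqref{eerconst4}.

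The main obstacle, as in Corollary \ref{eer}, is bookkeeping: keeping the constant $K$ in the transport estimate independent of $n$ and $\lambda$, and tracking its $\beta$-dependence carefully enough that it can be absorbed into the $\sqrt{\lambda}$ and $1/\beta$ terms rather than inflating them. This requires a uniform-in-$n$ second moment bound for $(\theta^{\lambda}_{n})$ valid on the full range $0<\lambda\leq \bar{\lambda}_{\max}$ of \eqref{lambdamaxconv}; this is precisely the content of the preparatory moment lemmas that feed into Theorem \ref{mainconvex}, so reusing them keeps the proof short. The rest is a routine triangle-inequality calculation and matching of constants in \eqref{eerconst3}-\eqref{eerconst4}.
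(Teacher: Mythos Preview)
Your proposal is correct and follows essentially the same approach as the paper: the same two-term splitting via $\pi_\beta$, the same transport bound using the Lipschitz gradient together with the second-moment estimates of Lemmas \ref{sdemoment} and \ref{2ndbdconv} (the paper packages this as \cite[Lemma~3.5]{raginsky}), and then Theorem \ref{mainconvex} for the $W_2$ rate. The only minor difference is the Gibbs-bias term: the paper invokes the general \cite[Proposition~3.4]{raginsky} to obtain $\hat{C}_7$ in \eqref{eerconst4}, whereas your direct Gaussian-sandwich argument exploits the strong convexity and gives a cleaner $\beta$-independent constant of order $\tfrac{d}{2}\log(L/\hat{a})$; both are valid and yield the same $1/\beta$ dependence.
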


\section{Proofs of the main results: nonconvex case}\label{po}
Denote by  $\mathcal{F}_t$ the natural filtration of $B_t$, $t\in\mathbb{R}_+$. It is a classic result that SDE \eqref{sde} has a unique solution adapted to $(\mathcal{F}_t)_{t\in\mathbb{R}_+}$, since $h$ is Lipschitz-continuous by \eqref{mulyan}. In order to obtain the convergence results in Theorem \ref{main} and Corollary \ref{cw2}, we first introduce some auxiliary processes.
\subsection{Further notation and introduction of auxiliary processes}\label{ap}
Define the Lyapunov function for each $p\geq 1$ by
\[
V_p(\theta):=(1+|\theta|^2)^{p/2},\ \theta\in\mathbb{R}^d,
\]
and similarly $\operatorname{v}_p (x):= (1+x^2)^{p/2}$, for any real $x\ge0$. Notice that these functions are twice continuously differentiable and
\[
\lim_{|\theta|\to\infty}\frac{\nabla V_p(\theta)}{V_p(\theta)}=0.
\]
Let $\mathcal{P}_{\, V_p}$ denote the set of $\mu\in\mathcal{P}(\mathbb{R}^d)$ satisfying $\int_{\mathbb{R}^d}V_p(\theta)\,\mu(d \theta)<\infty$.

Consider the following auxiliary processes. For each $\lambda>0$,
\[
Z^{\lambda}_t:=Z_{\lambda t},\ t\in\mathbb{R}_+.
\]
Notice that $\tilde{B}^{\lambda}_t:=B_{\lambda t}/\sqrt{\lambda}$, $t\in\mathbb{R}_+$
is also a Brownian motion and
\[
d Z^{\lambda}_t=-\lambda h(Z^{\lambda}_t)\, d t+\sqrt{2\beta^{-1}\lambda} d \tilde{B}^{\lambda}_t,\
Z^{\lambda}_0=\theta_0.
\]
Then, $\mathcal{F}_t^{\lambda}:=\mathcal{F}_{\lambda t}$, $t\in\mathbb{R}_+$ is the natural filtration of $\tilde{B}^{\lambda}_t$, $t\in\mathbb{R}_+$. One notice that $\mathcal{F}_t^{\lambda}$ is independent of $\mathcal{G}_{\infty}\vee \sigma(\theta_0)$. Then, define the continuous-time interpolation of the SGLD algorithm \eqref{SGLD} as
\begin{equation}\label{SGLDprocess}
d \bar{\theta}^{\lambda}_t=-\lambda H(\bar{\theta}^{\lambda}_{\lfloor t\rfloor},{X}_{\lceil t\rceil})\, d t
+ \sqrt{2\beta^{-1}\lambda} d \tilde{B}^{\lambda}_{t},
\end{equation}
with initial condition $\bar{\theta}^{\lambda}_0=\theta_0$. In addition, 
due to the homogeneous nature of the coefficients of equation \eqref{SGLDprocess}, the law of the interpolated process coincides with the law of the SGLD algorithm \eqref{SGLD} at grid-points, i.e. $\mathcal{L}(\bar{\theta}^{\lambda}_n)=\mathcal{L}(\theta_n^{\lambda})$, for each $n\in\mathbb{N}$. Hence, crucial estimates for the SGLD can be derived by studying  equation \eqref{SGLDprocess}.

Furthermore, consider a continuous-time process $ \zeta^{s,v, \lambda}_t$, $t\geq s$, which denotes the solution of the SDE
\begin{equation*}
d \zeta^{s,v, \lambda}_t= -\lambda h(\zeta^{s,v, \lambda}_t) d t + \sqrt{2\beta^{-1}\lambda} d \tilde{B}_t^{\lambda}.
\end{equation*}
with initial condition $\zeta^{s,v, \lambda}_s := v$, $v \in \mathbb{R}^d$.
\begin{definition}\label{zetaprocess} Fix $n \in \nset$ and define
\begin{align*}
\bar{\zeta}_t^{\lambda,n} = \zeta^{nT, \bar{\theta}^{\lambda}_{nT}, \lambda}_t
\end{align*}
where $T := \floor{{1}/{\lambda}}$.
\end{definition}
Intuitively, $\bar{\zeta}_t^{\lambda,n}$ is a process started from the value of the SGLD process \eqref{SGLDprocess} at time $nT$ and made run until time $t \geq nT$ with the continuous-time Langevin dynamics.

\subsection{Preliminary estimates}
We proceed by establishing the moment bounds of the processes $(\bar{\theta}^{\lambda}_t)_{t\geq 0}$ and $(\bar{\zeta}^{\lambda,n}_t)_{t\geq 0}$. 

\begin{lemma}
\label{lem:moment_SGLD_2p}
Let Assumptions \ref{expressionH}, \ref{iid} and \ref{assum:dissipativity} hold. For any $0<  \lambda < \lambda_{\max}$ given in \eqref{eq:definition-lambda-max}, $n\in\mathbb{N}$, $t \in (n,n+1]$,
\[
\mathbb{E} \left[|\bar{\theta}^{\lambda}_t|^2\right] \leq (1-  a\lambda (t-n) )(1-a\lambda)^n \mathbb{E} \left[|\theta_0|^2\right] +c_1 (\lambda_{\max}+a^{-1}) \, ,
\]
where
\begin{equation}\label{2ndmomentconst}
c_1= (c_0+ 2d/\beta), \quad c_0 = 8\mathbb{E}\left[ K_1^2(X_0)\right]a^{-1}+2 b +4\lambda_{\max}L_2^2\mathbb{E}\left[K_{\rho}(X_0)\right]  +4\lambda_{\max}\mathbb{E}\left[ F_*^2(X_0)\right].
\end{equation}
In addition, $\sup_t\mathbb{E} |\bar{\theta}^{\lambda}_t|^2 \leq \mathbb{E}\left[ |\theta_0|^2\right] +c_1 (\lambda_{\max}+a^{-1})< \infty$. Similarly, one obtains
\[
\mathbb{E}\left[|\bar{\theta}^{\lambda}_t|^4\right] \leq (1-  a\lambda (t-n) )(1-a\lambda)^n\mathbb{E}| \theta_0|^4 +c_3 (\lambda_{\max}+a^{-1}),
\]
where
\begin{equation}\label{4thmomentconst}
c_3 = (1+a\lambda_{\max})c_2+12d^2\beta^{-2}(\lambda_{\max}+9a^{-1})
\end{equation}
with $c_2$ given in \eqref{estc2}. Moreover, this implies $\sup_t\mathbb{E} |\bar{\theta}^{\lambda}_t|^4 < \infty$.
\end{lemma}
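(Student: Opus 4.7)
The plan is to apply It\^o's formula to $|\bar{\theta}^{\lambda}_t|^2$ and $|\bar{\theta}^{\lambda}_t|^4$ on each unit interval $(n,n+1]$, exploit dissipativity (Assumption \ref{assum:dissipativity}) together with the decomposition $H=F+G$ to produce a one-step contraction of the form $(1-a\lambda(t-n))$, and iterate in $n$.

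First I fix $t\in(n,n+1]$. Since $\lfloor t\rfloor=n$ and $\lceil t\rceil=n+1$, the drift in equation \eqref{SGLDprocess} is a $\mathcal G_{n+1}$-measurable constant $H(\bar{\theta}^{\lambda}_n,X_{n+1})$. It\^o's formula yields
\[
|\bar{\theta}^{\lambda}_t|^{2}= |\bar{\theta}^{\lambda}_n|^{2} -2\lambda\!\int_n^t\!\langle\bar{\theta}^{\lambda}_s,H(\bar{\theta}^{\lambda}_n,X_{n+1})\rangle\,ds
+2\sqrt{2\beta^{-1}\lambda}\!\int_n^t\!\langle\bar{\theta}^{\lambda}_s,d\tilde B^{\lambda}_s\rangle + 2d\beta^{-1}\lambda(t-n).
\]
Taking expectation kills the local-martingale term (moment finiteness is first verified on a localising sequence and then passed to the limit using Assumption \ref{iid}). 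I then split $\bar{\theta}^{\lambda}_s=\bar{\theta}^{\lambda}_n+(\bar{\theta}^{\lambda}_s-\bar{\theta}^{\lambda}_n)$ and substitute the explicit expression $\bar{\theta}^{\lambda}_s-\bar{\theta}^{\lambda}_n=-\lambda(s-n)H(\bar{\theta}^{\lambda}_n,X_{n+1})+\sqrt{2\beta^{-1}\lambda}(\tilde B^{\lambda}_s-\tilde B^{\lambda}_n)$; the Brownian increment is independent of $(\bar{\theta}^{\lambda}_n,X_{n+1})$ and therefore its cross term drops.

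For the principal term $\mathbb E[\langle\bar{\theta}^{\lambda}_n,H(\bar{\theta}^{\lambda}_n,X_{n+1})\rangle]$, by independence of $X_{n+1}$ from $\mathcal G_n\supset\sigma(\bar{\theta}^{\lambda}_n)$ and Assumption \ref{assum:dissipativity} applied to $F$, I bound
\[
\mathbb E[\langle\bar{\theta}^{\lambda}_n,F(\bar{\theta}^{\lambda}_n,X_{n+1})\rangle]\ge a\,\mathbb E[|\bar{\theta}^{\lambda}_n|^{2}]-b,
\]
while Assumption \ref{expressionH}(ii) and Young's inequality give $\mathbb E[\langle\bar{\theta}^{\lambda}_n,G(\bar{\theta}^{\lambda}_n,X_{n+1})\rangle]\ge -\frac{a}{4}\mathbb E[|\bar{\theta}^{\lambda}_n|^{2}]-a^{-1}\mathbb E[K_1^2(X_0)]$. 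The remaining correction $2\lambda^2\!\int_n^t(s-n)\,\mathbb E[|H(\bar{\theta}^{\lambda}_n,X_{n+1})|^2]\,ds$ is estimated through the growth bound in Remark \ref{growth} and $(1+|X_{n+1}|)^{2(\rho+1)}\le K_\rho(X_0)^{1/2}$, producing a contribution of order $\lambda^2(t-n)L_1^2\mathbb E[K_\rho(X_0)]\mathbb E[|\bar{\theta}^{\lambda}_n|^2]$ plus bounded quantities. Assembling everything and using $\lambda\le\lambda_{\max}$ as defined in \eqref{eq:definition-lambda-max} to absorb the positive $O(\lambda^2)$ corrections into the dissipative part, I reach the one-step inequality
\[
\mathbb E[|\bar{\theta}^{\lambda}_t|^{2}]\le(1-a\lambda(t-n))\,\mathbb E[|\bar{\theta}^{\lambda}_n|^{2}]+c_1\lambda(t-n),
\]
with $c_1$ as in \eqref{2ndmomentconst}. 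A straightforward induction on $n$ (treating $t=n+1$ as the recursive step and the geometric series $\sum_{k=0}^{n-1}(1-a\lambda)^k\le 1/(a\lambda)$) produces the stated bound; $\sup_t\mathbb E[|\bar{\theta}^{\lambda}_t|^2]<\infty$ follows by dropping the geometric factor.

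The 4th moment is handled by the same template, applied now to $|\bar{\theta}^{\lambda}_t|^{4}$ via It\^o. The second-order It\^o correction produces an extra term proportional to $\mathbb E[|\bar{\theta}^{\lambda}_s|^{2}]$, which is already controlled by the 2nd moment estimate just obtained. The inner product $\langle\bar{\theta}^{\lambda}_n,H\rangle$ now multiplies $|\bar{\theta}^{\lambda}_s|^{2}$, so I expand $|\bar{\theta}^{\lambda}_s|^{2}=|\bar{\theta}^{\lambda}_n|^{2}+(|\bar{\theta}^{\lambda}_s|^2-|\bar{\theta}^{\lambda}_n|^2)$; the dissipative piece delivers a clean $-a\lambda(t-n)\mathbb E[|\bar{\theta}^{\lambda}_n|^4]$ contribution (after the same $F$/$G$ split and Young's inequality), while all higher-order mixed terms are bounded by $\lambda^2$-smallness together with the growth estimate $|H|^4\lesssim K_\rho(x)V_4(\theta)+F_*^4(x)$, whose coefficient is absorbed into $(1-a\lambda(t-n))$ thanks to the definition of $\lambda_{\max}$. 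The main obstacle is bookkeeping: showing that the accumulated positive corrections at orders $\lambda^2,\lambda^3,\lambda^4$ in the fourth-moment computation do not exceed $\tfrac{a}{2}\lambda(t-n)\mathbb E[|\bar{\theta}^{\lambda}_n|^4]$ for all $\lambda\le\lambda_{\max}$; the factors $\min\{a,a^{1/3}\}$ and the $24(1+L_1)^2\mathbb E[K_\rho(X_0)]$ denominator in \eqref{eq:definition-lambda-max} are tailored precisely to this purpose, and the constants $c_2,c_3$ from \eqref{4thmomentconst} come out of this accounting after the final geometric-series summation.
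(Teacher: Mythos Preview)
Your approach is correct and reaches the same one-step recursion, but it differs from the paper's in two notable respects.

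First, the paper does \emph{not} use It\^o's formula at all. Because the drift on $(n,n+1]$ is the frozen random constant $H(\bar\theta^{\lambda}_n,X_{n+1})$, the paper simply writes $\bar\theta^{\lambda}_t=\Delta_{n,t}+\Xi^{\lambda}_{n,t}$ with $\Delta_{n,t}=\bar\theta^{\lambda}_n-\lambda(t-n)H$ and $\Xi^{\lambda}_{n,t}=\sqrt{2\beta^{-1}\lambda}(\tilde B^{\lambda}_t-\tilde B^{\lambda}_n)$, and expands $|\Delta+\Xi|^{2}$ and $(|\Delta|^2+|\Xi|^2+2\langle\Delta,\Xi\rangle)^2$ algebraically. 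Your It\^o route, after you substitute the explicit increment back in, collapses to exactly this computation for the second moment; for the fourth moment, however, the direct expansion is considerably cleaner than carrying $|\bar\theta^{\lambda}_s|^2\langle\bar\theta^{\lambda}_s,H\rangle$ through the integral and then re-expanding around $\bar\theta^{\lambda}_n$.

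Second, to handle the $G$-contribution the paper does \emph{not} use Young's inequality. It keeps the linear term $2\lambda(t-n)\mathbb E[K_1(X_0)]\,|\bar\theta^{\lambda}_n|$ and removes it by a case split on $|\bar\theta^{\lambda}_n|\lessgtr 4\mathbb E[K_1(X_0)]a^{-1}$ (and, for the fourth moment, on $|\bar\theta^{\lambda}_n|\lessgtr M$ with $M$ as in \eqref{estM}). Your Young-inequality argument is a legitimate alternative and in fact yields a slightly smaller constant in front of $\mathbb E[K_1^2(X_0)]a^{-1}$, but it will not reproduce the specific $c_0$ in \eqref{2ndmomentconst} verbatim; you would get $2$ rather than $8$ there, and similarly different numerical constants in $c_2,c_3$.

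One small correction: $\sigma(\bar\theta^{\lambda}_n)$ is \emph{not} contained in $\mathcal G_n$, since $\bar\theta^{\lambda}_n$ also depends on $\theta_0$ and on the Brownian increments up to time $n$. What you need (and what holds) is that $X_{n+1}$ is independent of $\bar\theta^{\lambda}_n$, which follows from the global independence of $(X_k)_k$, $(\tilde B^{\lambda}_t)_t$ and $\theta_0$ together with the i.i.d.\ assumption.
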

\begin{proof}
For any $n \in \mathbb{N}$ and $t\in (n, n+1]$, define $\Delta_{n,t}= \bar{\theta}^{\lambda}_n - \lambda H(\bar{\theta}^{\lambda}_n, X_{n+1})(t-n)$. By using \eqref{SGLDprocess}, it is easily seen that for $t\in (n, n+1]$
\begin{equation*}
\mathbb{E}\left[|\bar{\theta}^{\lambda}_t|^2\left| \bar{\theta}^{\lambda}_n \right.\right] = \mathbb{E}\left[|\Delta_{n,t}|^2\left| \bar{\theta}^{\lambda}_n \right.\right]  + (2 \lambda/\beta)d (t-n).
\end{equation*}
Then, by using Assumptions \ref{expressionH}, \ref{iid}, \ref{assum:dissipativity} and Remark \ref{growth}, one obtains
\begin{align*}
\mathbb{E}\left[|\Delta_{n,t}|^2\left| \bar{\theta}^{\lambda}_n\right.\right]
& = | \bar{\theta}^{\lambda}_n|^2 - 2\lambda (t-n)\mathbb{E}\left[ \left\langle  \bar{\theta}^{\lambda}_n, H(\bar{\theta}^{\lambda}_n, X_{n+1}) \right\rangle\left| \bar{\theta}^{\lambda}_n\right.\right]\\
&\quad +\lambda^2(t-n)^2\mathbb{E}\left[|H(\bar{\theta}^{\lambda}_n, X_{n+1})|^2\left| \bar{\theta}^{\lambda}_n\right.\right] \\
& \leq | \bar{\theta}^{\lambda}_n|^2 - 2\lambda (t-n)\left\langle  \bar{\theta}^{\lambda}_n, \mathbb{E}\left[ A(X_0)\right] \bar{\theta}^{\lambda}_n\right\rangle +2\lambda (t-n) b\\
&\quad - 2\lambda (t-n)\mathbb{E}\left[ \left\langle  \bar{\theta}^{\lambda}_n, G(\bar{\theta}^{\lambda}_n, X_{n+1}) \right\rangle\left| \bar{\theta}^{\lambda}_n\right.\right]\\
&\quad +\lambda^2(t-n)^2\mathbb{E}\left[ ((1+|X_{n+1}|)^{\rho +1}(L_1|\bar{\theta}^{\lambda}_n|+L_2)+F_*(X_{n+1}))^2\left| \bar{\theta}^{\lambda}_n\right.\right]\\
& \leq (1-2a\lambda (t-n))| \bar{\theta}^{\lambda}_n|^2  +  2\lambda (t-n) b+2\lambda (t-n) \mathbb{E}\left[K_1(X_0)\right]| \bar{\theta}^{\lambda}_n|\\
&\quad +2\lambda^2(t-n)^2L_1^2\mathbb{E}\left[ K_{\rho}(X_0)\right]|\bar{\theta}^{\lambda}_n|^2 +4\lambda^2(t-n)^2L_2^2\mathbb{E}\left[K_{\rho}(X_0)\right]+4\lambda^2(t-n)^2\mathbb{E}\left[F_*^2(X_0)\right],
\end{align*}
where the last inequality is obtained by using $(a+b)^2 \leq 2a^2 +2b^2$, for $a,b \geq 0$ twice. For $\lambda <\lambda_{\max}$ with $\lambda_{\max}$ given in \eqref{eq:definition-lambda-max},
\begin{align*}
\mathbb{E}\left[|\Delta_{n,t}|^2\left|  \bar{\theta}^{\lambda}_n \right.\right]
&\leq  \left(1-\frac{3}{2}a\lambda (t-n)\right)| \bar{\theta}^{\lambda}_n|^2+2\lambda (t-n)\mathbb{E}\left[ K_1(X_0)\right]| \bar{\theta}^{\lambda}_n|\\
&\quad  +  2\lambda (t-n) b+4\lambda^2(t-n)^2L_2^2\mathbb{E}\left[K_{\rho}(X_0)\right]+4\lambda^2(t-n)^2\mathbb{E}[F_*^2(X_0)].
\end{align*}
For $| \bar{\theta}^{\lambda}_n|>4\mathbb{E}\left[ K_1(X_0)\right]a^{-1}$, one obtains
\[
-\frac{1}{2}a\lambda (t-n)| \bar{\theta}^{\lambda}_n|^2+2\lambda (t-n)\mathbb{E}\left[ K_1(X_0)\right]| \bar{\theta}^{\lambda}_n|<0,
\]
which implies
\begin{align*}
\mathbb{E}\left[|\Delta_{n,t}|^2\left|  \bar{\theta}^{\lambda}_n \right.\right]
&\leq  \left(1- a\lambda (t-n)\right)| \bar{\theta}^{\lambda}_n|^2 +  2\lambda (t-n) b\\
&\quad +4\lambda^2(t-n)^2L_2^2\mathbb{E}\left[K_{\rho}(X_0)\right]+4\lambda^2(t-n)^2\mathbb{E}\left[F_*^2(X_0)\right].
\end{align*}
For $| \bar{\theta}^{\lambda}_n|\leq 4\mathbb{E}\left[ K_1(X_0)\right]a^{-1}$, we have
\begin{align*}
\mathbb{E}\left[|\Delta_{n,t}|^2\left|  \bar{\theta}^{\lambda}_n \right.\right]
&\leq  \left(1-\frac{3}{2}a\lambda (t-n)\right)| \bar{\theta}^{\lambda}_n|^2+8\lambda (t-n)\mathbb{E}\left[ K_1^2(X_0)\right]a^{-1}\\
&\quad  +  2\lambda (t-n) b+4\lambda^2(t-n)^2L_2^2\mathbb{E}\left[K_{\rho}(X_0)\right]+4\lambda^2(t-n)^2\mathbb{E}\left[F_*^2(X_0)\right].
\end{align*}
Combining the two cases yields
\begin{align*}
\mathbb{E}\left[|\Delta_{n,t}|^2\left|  \bar{\theta}^{\lambda}_n \right.\right]
&\leq  \left(1- a\lambda (t-n)\right)| \bar{\theta}^{\lambda}_n|^2 +  \lambda (t-n) c_0,
\end{align*}
where $c_0 = 8\mathbb{E}\left[ K_1^2(X_0)\right]a^{-1}+2 b +4\lambda_{\max}L_2^2\mathbb{E}\left[K_{\rho}(X_0)\right] +4\lambda_{\max}\mathbb{E}\left[F_*^2(X_0)\right]$. Therefore, one obtains
\[
\mathbb{E}\left[| \bar{\theta}^{\lambda}_t|^2\left|  \bar{\theta}^{\lambda}_n \right.\right] \leq (1-a\lambda (t-n))| \bar{\theta}^{\lambda}_n|^2 +\lambda (t-n)c_1,
\]
where $c_1= (c_0+ 2d/\beta)$ and the result follows by induction. To calculate a higher moment, denote by $\Xi_{n,t}^{\lambda}= \{ 2 \lambda \beta^{-1} \}^{1/2}(\tilde{B}_t^{\lambda}-\tilde{B}_n^{\lambda})$, for $t \in (n, n+1]$, one calculates
\begin{align}\label{4thmoment}
\mathbb{E}\left[|  \bar{\theta}^{\lambda}_t|^4\left| \bar{\theta}^{\lambda}_n\right.\right]
& = \mathbb{E}\left[\left(|\Delta_{n,t}|^2+|\Xi_{n,t}^{\lambda}|^2+2\left\langle \Delta_{n,t}, \Xi_{n,t}^{\lambda}\right\rangle\right)^2\left| \bar{\theta}^{\lambda}_n\right.\right]\nonumber\\
& = \mathbb{E}\left[ |\Delta_{n,t}|^4+|\Xi_{n,t}^{\lambda}|^4+2|\Delta_{n,t}|^2|\Xi_{n,t}^{\lambda}|^2+4|\Delta_{n,t}|^2\left\langle\Delta_{n,t}, \Xi_{n,t}^{\lambda}\right\rangle \right.\nonumber\\
&\qquad \left. +4|\Xi_{n,t}^{\lambda}|^2\left\langle \Delta_{n,t}, \Xi_{n,t}^{\lambda}\right\rangle+4\left(\left\langle \Delta_{n,t}, \Xi_{n,t}^{\lambda}\right\rangle\right)^2 \left| \bar{\theta}^{\lambda}_n\right.\right]\nonumber\\
& \leq \mathbb{E}\left[ |\Delta_{n,t}|^4+|\Xi_{n,t}^{\lambda}|^4+6|\Delta_{n,t}|^2|\Xi_{n,t}^{\lambda}|^2 \left| \bar{\theta}^{\lambda}_n\right.\right]\nonumber\\
&\leq (1+a\lambda(t-n))\mathbb{E}\left[ |\Delta_{n,t}|^4\left| \bar{\theta}^{\lambda}_n\right.\right] + (1+9/(a\lambda(t-n))) \mathbb{E}\left[ |\Xi_{n,t}^{\lambda}|^4\right].
\end{align}
where the last inequality holds due to $2ab \leq \varepsilon a^2 +\varepsilon^{-1}b^2$, for $a,b \geq 0$ and $\varepsilon>0$ with $\varepsilon = a\lambda(t-n)$. Then, one continues with calculating
\begin{align*}
\mathbb{E}\left[|\Delta_{n,t}|^4\left| \bar{\theta}^{\lambda}_n \right.\right]
&= \mathbb{E}\left[\left(| \bar{\theta}^{\lambda}_n|^2 -2\lambda (t-n) \left\langle  \bar{\theta}^{\lambda}_n, H(\bar{\theta}^{\lambda}_n, X_{n+1}) \right\rangle+\lambda^2(t-n)^2|H(\bar{\theta}^{\lambda}_n, X_{n+1})|^2\right)^2\left| \bar{\theta}^{\lambda}_n \right.\right] \\
&\leq | \bar{\theta}^{\lambda}_n|^4+\mathbb{E}\left[6\lambda^2(t-n)^2| \bar{\theta}^{\lambda}_n|^2|H(\bar{\theta}^{\lambda}_n, X_{n+1})|^2 - 4\lambda (t-n) \left\langle  \bar{\theta}^{\lambda}_n, H(\bar{\theta}^{\lambda}_n, X_{n+1}) \right\rangle | \bar{\theta}^{\lambda}_n|^2 \right. \\
&\qquad \left. -4\lambda^3 (t-n)^3|H(\bar{\theta}^{\lambda}_n, X_{n+1})|^2 \left\langle  \bar{\theta}^{\lambda}_n, H(\bar{\theta}^{\lambda}_n, X_{n+1}) \right\rangle +\lambda^4(t-n)^4|H(\bar{\theta}^{\lambda}_n, X_{n+1})|^4   \left| \bar{\theta}^{\lambda}_n \right.\right].
\end{align*}
By Remark \ref{growth}, for $q\geq 1$, one observes
\begin{equation}\label{Hest}
\mathbb{E}\left[|H(\bar{\theta}^{\lambda}_n, X_{n+1})|^q	\left| \bar{\theta}^{\lambda}_n \right.\right]
\leq \mathbb{E}\left[(1+|X_0|)^{q\rho+q}\right] (2^{q-1}L_1^q|\bar{\theta}^{\lambda}_n|^q +2^{2q-2}L_2^q)+2^{2q-2}\mathbb{E}\left[ F_*^q(X_0)\right].
\end{equation}
Then, by using Assumption \ref{assum:dissipativity} and by taking $q = 2, 3, 4$ in \eqref{Hest}, one obtains
\begin{align*}
&\mathbb{E}\left[|\Delta_{n,t}|^4\left| \bar{\theta}^{\lambda}_n \right.\right]\\
&\leq  (1- 4a\lambda (t-n) )| \bar{\theta}^{\lambda}_n|^4 +4b\lambda (t-n) | \bar{\theta}^{\lambda}_n|^2  + 4\lambda (t-n) \mathbb{E}\left[K_1(X_0)\right]| \bar{\theta}^{\lambda}_n|^3\\
&\quad +12\lambda^2(t-n)^2L_1^2\mathbb{E}\left[K_{\rho}(X_0)\right]| \bar{\theta}^{\lambda}_n|^4+24\lambda^2(t-n)^2\left(L_2^2\mathbb{E}\left[K_{\rho}(X_0)\right]+\mathbb{E}\left[ F_*^2(X_0)\right]\right) |\bar{\theta}^{\lambda}_n|^2\\
&\quad + 16\lambda^3(t-n)^3L_1^3\mathbb{E}\left[K_{\rho}(X_0)\right]| \bar{\theta}^{\lambda}_n|^4+64\lambda^3(t-n)^3\left(L_2^3\mathbb{E}\left[K_{\rho}(X_0)\right]+\mathbb{E}\left[ F_*^3(X_0)\right]\right) |\bar{\theta}^{\lambda}_n|\\
&\quad +8\lambda^4(t-n)^4L_1^4\mathbb{E}\left[K_{\rho}(X_0)\right]| \bar{\theta}^{\lambda}_n|^4+64\lambda^4(t-n)^4\left(L_2^4\mathbb{E}\left[K_{\rho}(X_0)\right]+\mathbb{E}\left[ F_*^4(X_0)\right]\right),
\end{align*}
which implies, by using $\lambda < \lambda_{\max}$
\begin{align*}
\mathbb{E}\left[|\Delta_{n,t}|^4\left| \bar{\theta}^{\lambda}_n \right.\right]
&\leq  (1- 3a\lambda (t-n) )| \bar{\theta}^{\lambda}_n|^4   + 4\lambda (t-n) \mathbb{E}\left[K_1(X_0)\right]| \bar{\theta}^{\lambda}_n|^3\\
&\quad +4b\lambda (t-n) | \bar{\theta}^{\lambda}_n|^2+24\lambda^2(t-n)^2\left(L_2^2\mathbb{E}\left[K_{\rho}(X_0)\right]+\mathbb{E}\left[ F_*^2(X_0)\right]\right) |\bar{\theta}^{\lambda}_n|^2\\
&\quad +64\lambda^3(t-n)^3\left(L_2^3\mathbb{E}\left[K_{\rho}(X_0)\right]+\mathbb{E}\left[ F_*^3(X_0)\right]\right) |\bar{\theta}^{\lambda}_n|\\
&\quad +64\lambda^4(t-n)^4\left(L_2^4\mathbb{E}\left[K_{\rho}(X_0)\right]+\mathbb{E}\left[ F_*^4(X_0)\right]\right).
\end{align*}
For $|\bar{\theta}^{\lambda}_n|>12\mathbb{E}\left[K_1(X_0)\right]a^{-1}$, one obtains
\[
-\frac{a\lambda (t-n) }{3} | \bar{\theta}^{\lambda}_n|^4 + 4\lambda (t-n) \mathbb{E}\left[K_1(X_0)\right]| \bar{\theta}^{\lambda}_n|^3<0,
\]
similarly, for $|\bar{\theta}^{\lambda}_n|>(12ba^{-1}+72a^{-1}\lambda_{\max}\left(L_2^2\mathbb{E}\left[K_{\rho}(X_0)\right]+\mathbb{E}\left[ F_*^2(X_0)\right]\right))^{1/2}$, we have
\[
-\frac{a\lambda (t-n) }{3} | \bar{\theta}^{\lambda}_n|^4 +4b\lambda (t-n)| \bar{\theta}^{\lambda}_n|^2+24\lambda^2(t-n)^2\left(L_2^2\mathbb{E}\left[K_{\rho}(X_0)\right]+\mathbb{E}\left[ F_*^2(X_0)\right]\right) |\bar{\theta}^{\lambda}_n|^2<0,
\]
moreover, for $| \bar{\theta}^{\lambda}_n|>(192a^{-1}\lambda_{\max}^2\left(L_2^3\mathbb{E}\left[K_{\rho}(X_0)\right]+\mathbb{E}\left[ F_*^3(X_0)\right]\right) )^{1/3}$
\[
- \frac{a\lambda (t-n) }{3}| \bar{\theta}^{\lambda}_n|^4+64\lambda^3(t-n)^3\left(L_2^3\mathbb{E}\left[K_{\rho}(X_0)\right]+\mathbb{E}\left[ F_*^3(X_0)\right]\right) |\bar{\theta}^{\lambda}_n|<0.
\]
Denote by
\begin{align}\label{estM}
\begin{split}
M &= \max\left\{12\mathbb{E}\left[K_1(X_0)\right]a^{-1}, (12ba^{-1}+72a^{-1}\lambda_{\max}\left(L_2^2\mathbb{E}\left[K_{\rho}(X_0)\right]+\mathbb{E}\left[ F_*^2(X_0)\right]\right))^{1/2},\right.\\
&\hspace{4em} \left. (192a^{-1}\lambda_{\max}^2\left(L_2^3\mathbb{E}\left[K_{\rho}(X_0)\right]+\mathbb{E}\left[ F_*^3(X_0)\right]\right) )^{1/3}\right\}.
\end{split}
\end{align}
For $|\bar{\theta}^{\lambda}_n|>M$, one obtains
\begin{align*}
\mathbb{E}\left[|\Delta_{n,t}|^4\left| \bar{\theta}^{\lambda}_n \right.\right]
& \leq  (1- 2a\lambda (t-n) )| \bar{\theta}^{\lambda}_n|^4  +64\lambda^4(t-n)^4\left(L_2^4\mathbb{E}\left[K_{\rho}(X_0)\right]+\mathbb{E}\left[ F_*^4(X_0)\right]\right). 
\end{align*}
As for $|\bar{\theta}^{\lambda}_n|\leq M$, we have
\begin{align*}
\mathbb{E}\left[|\Delta_{n,t}|^4\left| \bar{\theta}^{\lambda}_n \right.\right]
&\leq  (1- 3a\lambda (t-n) )| \bar{\theta}^{\lambda}_n|^4   + 4\lambda (t-n) \mathbb{E}\left[K_1(X_0)\right]M^3+4b\lambda (t-n)M^2\\
&\quad +24\lambda^2(t-n)^2\left(L_2^2\mathbb{E}\left[K_{\rho}(X_0)\right]+\mathbb{E}\left[ F_*^2(X_0)\right]\right) M^2\\
&\quad +64\lambda^3(t-n)^3\left(L_2^3\mathbb{E}\left[K_{\rho}(X_0)\right]+\mathbb{E}\left[ F_*^3(X_0)\right]\right) M\\
&\quad +64\lambda^4(t-n)^4\left(L_2^4\mathbb{E}\left[K_{\rho}(X_0)\right]+\mathbb{E}\left[ F_*^4(X_0)\right]\right).
\end{align*}
Combining the two cases yields
\begin{equation}\label{deltaest4}
\mathbb{E}\left[|\Delta_{n,t}|^4\left| \bar{\theta}^{\lambda}_n \right.\right]\leq  (1- 2a\lambda (t-n) )| \bar{\theta}^{\lambda}_n|^4 + \lambda (t-n) c_2,
\end{equation}
where
\begin{equation}\label{estc2}
c_2 = 4\mathbb{E}\left[K_1(X_0)\right]M^3+ 4bM^2 + 152(1+\lambda_{\max})^3 \left((1+L_2)^4\mathbb{E}\left[K_{\rho}(X_0)\right]+(1+\mathbb{E}\left[ F_*^4(X_0)\right]\right)(1+M)^2
\end{equation}
with $M$ given in \eqref{estM}. 
Substituting \eqref{deltaest4} into \eqref{4thmoment}, one obtains
\begin{align*}
\mathbb{E}\left[|\bar{\theta}^{\lambda}_t|^4\left| \bar{\theta}^{\lambda}_n \right.\right]
& \leq  (1+a\lambda(t-n))(1- 2a\lambda (t-n) )| \bar{\theta}^{\lambda}_n|^4 \\
&\quad +(1+a\lambda(t-n)) \lambda (t-n) c_2 +12d^2\lambda^2 \beta^{-2}(t-n)^2 (1+9/(a\lambda(t-n)))\\
& \leq (1-  a\lambda (t-n) )| \bar{\theta}^{\lambda}_n|^4 +\lambda (t-n) c_3,
\end{align*}
where $c_3 = (1+a\lambda_{\max})c_2+12d^2\beta^{-2}(\lambda_{\max}+9a^{-1})$. The proof completes by induction.
\end{proof}

\begin{remark}\label{stepr} One notices that in Lemma \ref{lem:moment_SGLD_2p}, the step-size restriction is the following:
\[
\hat{\lambda}_{\max} = \min\left\{\frac{a }{24 L_1^2\mathbb{E}\left[ K_{\rho}(X_0)\right] }, \frac{a^{1/2}}{8 (L_1^3\mathbb{E}\left[K_{\rho}(X_0)\right])^{1/2}}, \frac{a^{1/3}}{  (32L_1^4\mathbb{E}\left[K_{\rho}(X_0)\right])^{1/3}},\frac{1}{4a}\right\} \,.
\]
Theorem \ref{main} and Corollary \ref{cw2} still hold by using $\hat{\lambda}_{\max} $. However, in order to make notation compact, the restriction is chosen to be $\lambda_{\max}$ given in \eqref{eq:definition-lambda-max}, which can be deduced from the above expression.
\end{remark}

\begin{corollary}\label{corr:Moments} Let Assumptions \ref{expressionH}, \ref{iid} and \ref{assum:dissipativity} hold. For any $0< \lambda < \lambda_{\max}$ given in \eqref{eq:definition-lambda-max}, $n\in\mathbb{N}$, $t \in (n,n+1]$,
\begin{align*}
\E[V_4(\bar{\theta}^{\lambda}_t)] \leq 2(1 - a\lambda)^\floor{t} \E[V_4(\theta_0)] + 2  c_3  (\lambda_{\textnormal{max}} + a^{-1})+2,
\end{align*}
where $c_3$ is given in \eqref{4thmomentconst}.
\end{corollary}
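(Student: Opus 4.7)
The plan is to deduce the bound directly from the fourth moment estimate in Lemma \ref{lem:moment_SGLD_2p}, using a crude elementary inequality to relate $V_4$ to $|\cdot|^4$ and absorb the linear term in $|\theta|^2$ into constants. There is no serious obstacle here; the corollary is essentially a cosmetic repackaging of the already established fourth-moment control.

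First I would apply the inequality $(1+u)^2\le 2(1+u^2)$ with $u=|\theta|^2$ to obtain
\[
V_4(\theta)=(1+|\theta|^2)^2\le 2+2|\theta|^4,\qquad \theta\in\mathbb{R}^d.
\]
Taking expectations under $\bar{\theta}^{\lambda}_t$ yields $\E[V_4(\bar{\theta}^{\lambda}_t)]\le 2+2\E[|\bar{\theta}^{\lambda}_t|^4]$. Plugging in the fourth-moment bound from Lemma \ref{lem:moment_SGLD_2p} then gives
\[
\E[V_4(\bar{\theta}^{\lambda}_t)]\le 2+2(1-a\lambda(t-n))(1-a\lambda)^n\,\E|\theta_0|^4+2c_3(\lambda_{\max}+a^{-1}).
\]

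Next, I would use the trivial bound $|\theta_0|^4\le(1+|\theta_0|^2)^2=V_4(\theta_0)$ to replace $\E|\theta_0|^4$ by $\E[V_4(\theta_0)]$. For the discrete-time decay factor, note that for $t\in(n,n+1]$ one has $\lfloor t\rfloor\in\{n,n+1\}$; in either case, since $1-a\lambda(t-n)\in[1-a\lambda,1]$ and $(1-a\lambda)^n\ge(1-a\lambda)^{\lfloor t\rfloor}$ when $\lfloor t\rfloor=n$, while $(1-a\lambda(t-n))(1-a\lambda)^n=(1-a\lambda)^{n+1}=(1-a\lambda)^{\lfloor t\rfloor}$ when $t=n+1$, we obtain
\[
(1-a\lambda(t-n))(1-a\lambda)^n\le(1-a\lambda)^{\lfloor t\rfloor}.
\]

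Combining these observations yields
\[
\E[V_4(\bar{\theta}^{\lambda}_t)]\le 2(1-a\lambda)^{\lfloor t\rfloor}\E[V_4(\theta_0)]+2c_3(\lambda_{\max}+a^{-1})+2,
\]
which is the desired inequality. The argument is only two or three lines beyond invoking Lemma \ref{lem:moment_SGLD_2p}, so I would present it essentially as the short chain above; the admissible step-size range $0<\lambda<\lambda_{\max}$ is inherited verbatim from the hypothesis of that lemma.
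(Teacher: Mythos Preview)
Your argument is correct and is precisely the intended derivation: the paper states the corollary without proof immediately after Lemma~\ref{lem:moment_SGLD_2p}, and your short chain $V_4(\theta)\le 2+2|\theta|^4$, $|\theta_0|^4\le V_4(\theta_0)$, together with the trivial bound $(1-a\lambda(t-n))(1-a\lambda)^n\le(1-a\lambda)^{\lfloor t\rfloor}$, is exactly how one reads it off. One cosmetic remark: when $\lfloor t\rfloor=n$ you have equality $(1-a\lambda)^n=(1-a\lambda)^{\lfloor t\rfloor}$ rather than an inequality, but this does not affect anything.
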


Next, we present a drift condition associated with the SDE \eqref{sde}, which will be used to obtain the moment bounds of the process $(\bar{\zeta}^{\lambda,n}_t)_{t\geq 0}$.
\begin{lemma}\label{lem:PreLimforDriftY} Let Assumptions \ref{expressionH}, \ref{iid} and \ref{assum:dissipativity} hold. Then, for each $p\geq 2$, $\theta \in \rset^d$,
\begin{align*}
\frac{\Delta V_p}{\beta} - \langle h(\theta), \nabla V_p(\theta) \rangle \leq -\bar{c}(p) V_p(\theta) + \tilde{c}(p),
\end{align*}
where $\bar{c}(p)= ap/4$ and $ \tilde{c}(p) = (3/4) a p \mathrm{v}_{p+1}(\overline{M}_p)$ with $\overline{M}_p$ given in \eqref{Mp}.
\end{lemma}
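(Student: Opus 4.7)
The plan is to verify the drift inequality by computing the generator of the Langevin SDE on $V_p$ directly and then combining the dissipativity lower bound on $\langle h(\theta),\theta\rangle$ with Young's inequality to absorb the bounded-but-irregular contribution coming from $G$.

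First I would record the elementary identities
\[
\nabla V_p(\theta)=pV_{p-2}(\theta)\,\theta,\qquad \Delta V_p(\theta)=pd\,V_{p-2}(\theta)+p(p-2)|\theta|^{2} V_{p-4}(\theta)\leq p(d+p-2)V_{p-2}(\theta),
\]
where the inequality uses $|\theta|^{2} V_{p-4}(\theta)\leq V_{p-2}(\theta)$. Since $h(\theta)=\mathbb{E}[F(\theta,X_0)]+\mathbb{E}[G(\theta,X_0)]$, Assumption \ref{assum:dissipativity} gives $\mathbb{E}[\langle F(\theta,X_0),\theta\rangle]\geq \langle\theta,\mathbb{E}[A(X_0)]\theta\rangle-b\geq a|\theta|^{2}-b$, while Assumption \ref{expressionH}(ii) together with Cauchy--Schwarz yields $|\mathbb{E}[\langle G(\theta,X_0),\theta\rangle]|\leq\mathbb{E}[K_1(X_0)]|\theta|$, so $\langle h(\theta),\theta\rangle\geq a|\theta|^{2}-b-\mathbb{E}[K_1(X_0)]|\theta|$.

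Next I would absorb the linear term by Young's inequality, obtaining $\langle h(\theta),\theta\rangle\geq (a/2)|\theta|^{2}-C_{*}$ with $C_{*}:=b+(\mathbb{E}[K_1(X_0)])^{2}/(2a)$. Plugging in $\nabla V_p=pV_{p-2}\theta$ and using the Laplacian bound,
\[
\frac{\Delta V_p(\theta)}{\beta}-\langle h(\theta),\nabla V_p(\theta)\rangle\leq pV_{p-2}(\theta)\Bigl[\tfrac{d+p-2}{\beta}+C_{*}-\tfrac{a}{2}|\theta|^{2}\Bigr].
\]
Rewriting $|\theta|^{2}=V_2(\theta)-1$ produces a clean linear-in-$V_2$ expression, and since $V_{p-2}(\theta)V_2(\theta)=V_p(\theta)$, this reads
\[
\frac{\Delta V_p(\theta)}{\beta}-\langle h(\theta),\nabla V_p(\theta)\rangle\leq -\frac{ap}{2}V_p(\theta)+pD\,V_{p-2}(\theta),
\]
for an explicit constant $D$ depending on $a,b,d,p,\beta,\mathbb{E}[K_1(X_0)]$.

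Finally I would close the argument by splitting at the threshold $\overline{M}_p$ defined in \eqref{Mp}. For $|\theta|\geq \overline{M}_p$, the choice of $\overline{M}_p$ ensures $V_2(\theta)\geq 4D/a$, so $pD\,V_{p-2}(\theta)\leq (ap/4)V_p(\theta)$ and the whole expression is at most $-(ap/4)V_p(\theta)$. For $|\theta|<\overline{M}_p$, the residual $-(ap/2)V_p(\theta)+pD\,V_{p-2}(\theta)$ is pointwise bounded by $pD\,V_{p-2}(\overline{M}_p)$, which by the definition of $\overline{M}_p$ in \eqref{Mp} is dominated by $(3/4)ap\,\operatorname{v}_{p+1}(\overline{M}_p)$, matching $\tilde{c}(p)$. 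The main obstacle is purely bookkeeping: verifying that the single threshold $\overline{M}_p$ from \eqref{Mp} simultaneously delivers the sharp coefficient $ap/4$ in the far region and the constant $(3/4)ap\,\operatorname{v}_{p+1}(\overline{M}_p)$ in the near region; all remaining steps reduce to routine Young inequality and the already-stated assumptions.
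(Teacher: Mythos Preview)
Your overall strategy---compute the generator on $V_p$, lower bound $\langle h(\theta),\theta\rangle$ via dissipativity plus the $K_1$ bound, then split at a radial threshold---is exactly the paper's, and the first two displays are correct. The gap is in the third step: applying Young's inequality to absorb $\mathbb{E}[K_1(X_0)]|\theta|$ costs you a factor of two in the leading coefficient (you end up with $-\tfrac{ap}{2}V_p$ instead of $-apV_p$), and after that the paper's threshold $\overline{M}_p$ from \eqref{Mp} no longer does what you claim. Concretely, your far-region requirement is $V_2(\theta)\geq 4D/a$ with $D=\tfrac{a}{2}+b+\tfrac{(\mathbb{E}[K_1(X_0)])^2}{2a}+\tfrac{d+p-2}{\beta}$, whereas $V_2(\overline{M}_p)=\bigl(\tfrac{4}{3a}(a+b+\tfrac{d+p-2}{\beta}+\mathbb{E}[K_1(X_0)])\bigr)^2$. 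Take $b$ and $\mathbb{E}[K_1(X_0)]$ small and $\beta$ large: then $V_2(\overline{M}_p)\approx 16/9$ while $4D/a\approx 2$, so the inequality fails and your far-region bound does not hold at $|\theta|=\overline{M}_p$.

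The paper avoids Young entirely. It keeps the full coefficient $-apV_p(\theta)$ and handles the irregular term by the pointwise bound $|\theta|V_{p-2}(\theta)\leq V_{p-1}(\theta)$, together with $V_{p-2}\leq V_{p-1}$, arriving at
\[
\frac{\Delta V_p}{\beta}-\langle h,\nabla V_p\rangle\leq -apV_p(\theta)+p\Bigl(a+b+\tfrac{d+p-2}{\beta}+\mathbb{E}[K_1(X_0)]\Bigr)V_{p-1}(\theta).
\]
Now the far-region condition becomes $\tfrac{3a}{4}V_p\geq (\text{const})V_{p-1}$, i.e.\ a threshold on $V_1(\theta)$, and this is \emph{exactly} how $\overline{M}_p$ in \eqref{Mp} is defined. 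In the near region one bounds the residual by $(3ap/4)V_p(\overline{M}_p)\leq (3ap/4)\operatorname{v}_{p+1}(\overline{M}_p)$. So the fix is simple: drop the Young step, promote both remainder terms to $V_{p-1}$, and split on $V_1$; then the stated constants $\bar c(p)=ap/4$ and $\tilde c(p)=(3/4)ap\,\operatorname{v}_{p+1}(\overline{M}_p)$ fall out directly.
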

\begin{proof}
One notices that, by Assumptions \ref{expressionH} and \ref{iid}, for any $\theta \in \mathbb{R}^d$, $ h(\theta) = \mathbb{E}[H(\theta, X_0)] =  \mathbb{E}[ F(\theta, X_0)+ G(\theta, X_0)] $. Then, one calculates,
\begin{align*}
&\frac{\Delta V_p}{\beta} - \langle h(\theta), \nabla V_p(\theta) \rangle\\
&= \beta^{-1}p(p-2)|\theta|^2V_{p-4}(\theta) + \beta^{-1}pdV_{p-2}(\theta) \\
&\quad - pV_{p-2}(\theta) \langle \mathbb{E}[ F(\theta, X_0)+ G(\theta, X_0)], \theta \rangle\\
&\leq -apV_p(\theta) + (ap+bp+ \beta^{-1}p(p-2)  + \beta^{-1}pd)V_{p-2}(\theta) +p\mathbb{E}\left[K_1(X_0)\right]|\theta|V_{p-2}(\theta),
\end{align*}
where the last inequality is obtained due to Assumption \ref{assum:dissipativity}. By observing $|\theta| \leq \sqrt{1+|\theta|^2}$, denote by
\begin{equation}\label{Mp}
\overline{M}_p =\sqrt{(4/3 + 4b/(3a) + 4d/(3a\beta) + 4(p-2)/(3a\beta)+4\mathbb{E}\left[K_1(X_0)\right]/(3a))^2-1}.
\end{equation}
For $|\theta|>\overline{M}_p $, one obtains $\frac{\Delta V_p}{\beta} - \langle h(\theta), \nabla V_p(\theta) \rangle \leq -(ap/4)V_p(\theta)$, while for $|\theta|\leq \overline{M}_p $, we have $\frac{\Delta V_p}{\beta} - \langle h(\theta), \nabla V_p(\theta) \rangle \leq (3/4) a p \mathrm{v}_{p+1}(\overline{M}_p)$. Combining the two cases yields the desired result.
\end{proof}

The following Lemma provides the second and the fourth moment of the process $(\bar{\zeta}^{\lambda,n}_t)_{t\geq 0}$.
\begin{lemma}\label{zetaprocessmoment} Let Assumptions \ref{expressionH}, \ref{iid} and \ref{assum:dissipativity} hold. For any $0< \lambda < \lambda_{\max}$ given in \eqref{eq:definition-lambda-max}, $t \geq nT$, $n\in \nset$, one obtains the following inequality
\begin{align*}
\E[V_2(\bar{\zeta}_t^{\lambda,n})] &\leq e^{-a\lambda  t/2}   \E[V_2(\theta_0)] +3 \mathrm{v}_3(\overline{M}_2)+c_1(\lambda_{\max}+a^{-1})+1,
\end{align*}
where the process $\bar{\zeta}_t^{\lambda,n}$ is defined in Definition \ref{zetaprocess} and $c_1$ is given in \eqref{2ndmomentconst}. Furthermore,
\begin{align*}
\E[V_4(\bar{\zeta}_t^{\lambda,n})] \leq 2e^{-a\lambda  t}   \E[V_4(\theta_0)] +3 \mathrm{v}_5(\overline{M}_4)+2c_3 (\lambda_{\max}+a^{-1})+2,
\end{align*}
where $c_3$ is given in \eqref{4thmomentconst}.
\end{lemma}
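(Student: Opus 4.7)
The plan is to apply an infinitesimal generator argument to the auxiliary diffusion $\bar\zeta_t^{\lambda,n}$, convert the drift estimate from Lemma \ref{lem:PreLimforDriftY} into a Gronwall inequality on $\mathbb{E}[V_p(\bar\zeta_t^{\lambda,n})]$, and then plug in the initial-condition moment bound at time $nT$ coming from Lemma \ref{lem:moment_SGLD_2p} and Corollary \ref{corr:Moments}.

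More precisely, since $\bar\zeta_t^{\lambda,n}$ satisfies the SDE
\[
\mathrm d\bar\zeta_t^{\lambda,n}=-\lambda h(\bar\zeta_t^{\lambda,n})\,\mathrm dt+\sqrt{2\beta^{-1}\lambda}\,\mathrm d\tilde B_t^{\lambda},\qquad \bar\zeta_{nT}^{\lambda,n}=\bar\theta_{nT}^{\lambda},
\]
applying It\^o's formula to $V_p$ (with a localising sequence to handle integrability, then passing to the limit via the moment bound that already follows from Lipschitzness of $h$ in \eqref{mulyan}) yields
\[
\frac{\mathrm d}{\mathrm dt}\mathbb{E}[V_p(\bar\zeta_t^{\lambda,n})] = \lambda\,\mathbb{E}\Bigl[\frac{\Delta V_p(\bar\zeta_t^{\lambda,n})}{\beta}-\langle h(\bar\zeta_t^{\lambda,n}),\nabla V_p(\bar\zeta_t^{\lambda,n})\rangle\Bigr].
\]
Lemma \ref{lem:PreLimforDriftY} bounds the integrand pointwise by $-\bar c(p)V_p+\tilde c(p)$, so one obtains the linear differential inequality
\[
\frac{\mathrm d}{\mathrm dt}\mathbb{E}[V_p(\bar\zeta_t^{\lambda,n})]\leq -\lambda\bar c(p)\,\mathbb{E}[V_p(\bar\zeta_t^{\lambda,n})]+\lambda\tilde c(p),
\]
and Gronwall's lemma between $nT$ and $t$ gives
\[
\mathbb{E}[V_p(\bar\zeta_t^{\lambda,n})]\leq e^{-\lambda\bar c(p)(t-nT)}\mathbb{E}[V_p(\bar\theta_{nT}^{\lambda})]+\frac{\tilde c(p)}{\bar c(p)}\bigl(1-e^{-\lambda\bar c(p)(t-nT)}\bigr),
\]
with $\tilde c(p)/\bar c(p)=3\,\mathrm v_{p+1}(\overline M_p)$ by definition of the constants in Lemma \ref{lem:PreLimforDriftY}.

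For $p=2$, Lemma \ref{lem:moment_SGLD_2p} gives $\mathbb{E}[V_2(\bar\theta_{nT}^{\lambda})]\leq (1-a\lambda)^{nT}\mathbb{E}[V_2(\theta_0)]+c_1(\lambda_{\max}+a^{-1})+1$; for $p=4$, Corollary \ref{corr:Moments} gives the analogous bound with $c_3$. Substituting into the Gronwall estimate and using the elementary inequality $(1-a\lambda)^{nT}\leq e^{-a\lambda nT}$ (valid for $\lambda<\lambda_{\max}<1/a$), the two exponentials combine as
\[
e^{-\lambda\bar c(p)(t-nT)}\cdot e^{-a\lambda nT}\leq e^{-\lambda\bar c(p)\,t}
\]
because $\bar c(2)=a/2\leq a$ and $\bar c(4)=a\leq a$, which produces precisely the rates $e^{-a\lambda t/2}$ and $e^{-a\lambda t}$ appearing in the statement. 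Collecting constants delivers both displayed inequalities.

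The only mildly delicate step is justifying the Dynkin identity above in the present growth regime: $V_p$ is polynomially growing and $h$ is globally Lipschitz, so standard truncation and dominated convergence (using the a-priori moment bound from the Lipschitz SDE theory) go through without difficulty. The rest is bookkeeping that composes the drift Lyapunov inequality with the already-established SGLD moment estimate at the shift time $nT$.
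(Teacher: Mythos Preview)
Your proposal is correct and follows essentially the same approach as the paper's proof: apply It\^o's formula to $V_p$ along $\bar\zeta_t^{\lambda,n}$, invoke the drift inequality of Lemma~\ref{lem:PreLimforDriftY} to obtain the linear differential inequality, solve it via Gronwall, and then insert the SGLD moment bounds at time $nT$ from Lemma~\ref{lem:moment_SGLD_2p}/Corollary~\ref{corr:Moments}, combining the two exponentials using $\bar c(p)\leq a$. Your explicit remark about justifying the Dynkin identity by localisation is a refinement the paper omits, but otherwise the argument is the same.
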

\begin{proof} For any $p\geq 1$, application of Ito's lemma and taking expectation yields
\begin{align*}
\E[V_p(\bar{\zeta}_t^{\lambda,n})] = \E[V_p( \bar{\theta}^{\lambda}_{nT})] + \int_{nT}^t \E\left[\lambda \frac{\Delta V_p(\bar{\zeta}_s^{\lambda,n})}{\beta} - \lambda \langle h(\bar{\zeta}_s^{\lambda,n}), \nabla V_p(\bar{\zeta}_s^{\lambda,n}) \rangle\right] d s.
\end{align*}
Differentiating both sides and using Lemma~\ref{lem:PreLimforDriftY}, we arrive at
\begin{align*}
\frac{d}{d t} \E[V_p(\bar{\zeta}_t^{\lambda,n})] = \E\left[ \lambda \frac{\Delta V_p(\bar{\zeta}_t^{\lambda,n})}{\beta} - \lambda \langle h(\bar{\zeta}_t^{\lambda,n}), \nabla V_p(\bar{\zeta}_t^{\lambda,n}) \rangle\right] \leq -\lambda \bar{c}(p) \E[V_p(\bar{\zeta}_t^{\lambda,n})] + \lambda \tilde{c}(p),
\end{align*}
which yields
\begin{align*}
\E[V_p(\bar{\zeta}_t^{\lambda,n})] &\leq e^{-\lambda (t - nT) \bar{c}(p)} \E[V_p( \bar{\theta}^{\lambda}_{nT})] + \frac{\tilde{c}(p)}{\bar{c}(p)} \left( 1 - e^{-\lambda \bar{c}(p) (t - nT)} \right) \\
&\leq e^{-\lambda (t - nT) \bar{c}(p)} \E[V_p(\bar{\theta}^{\lambda}_{nT})] + \frac{\tilde{c}(p)}{\bar{c}(p)}.
\end{align*}
Now for $p = 2$, by using Corollary~\ref{corr:Moments},
one obtains
\begin{align*}
\E[V_2(\bar{\zeta}_t^{\lambda,n})] &\leq e^{-\lambda(t-nT) \bar{c}(2)} \E[V_2(\bar{\theta}^{\lambda}_{nT})] + \frac{\tilde{c}(2)}{\bar{c}(2)} \\
&\leq (1-a\lambda)^{nT} e^{-\lambda(t-nT) \bar{c}(2)} \mathbb{E}[ V_2(\theta_0)]+ \frac{\tilde{c}(2)}{\bar{c}(2)} +c_1 (\lambda_{\max}+a^{-1})+1\\
&\leq e^{-a\lambda  t/2}   \E[V_2(\theta_0)] +3 \mathrm{v}_3(\overline{M}_2)+c_1(\lambda_{\max}+a^{-1})+1,
\end{align*}
where the last inequality holds due to $1-z \leq e^{-z}$ for $z \geq 0$ and $\bar{c}(2) =a/2$. Similarly, for $p = 4$, one obtains
\begin{align*}
\E[V_4(\bar{\zeta}_t^{\lambda,n})] &\leq e^{-\lambda(t-nT) \bar{c}(4)} \E[V_4(\bar{\theta}^{\lambda}_{nT})] + \frac{\tilde{c}(4)}{\bar{c}(4)} \\
&\leq 2(1 - a\lambda)^{nT} e^{-\lambda(t-nT) \bar{c}(4)} \E[V_4(\theta_0)] + \frac{\tilde{c}(4)}{\bar{c}(4)}+2c_3 (\lambda_{\max}+a^{-1})+2\\
&\leq 2e^{-a\lambda  t}   \E[V_4(\theta_0)] +3 \mathrm{v}_5(\overline{M}_4)+2c_3 (\lambda_{\max}+a^{-1})+2,
\end{align*}
where the last inequality holds due to $1-z \leq e^{-z}$ for $z \geq 0$ and $\bar{c}(4) =a$.
\end{proof}

\subsection{Proof of the main theorems}
We introduce a functional which is crucial to obtain the convergence rate in $W_1$. For any $p\geq 1$,
$ \mu,\nu \in \mathcal{P}_{\, V_p}$,
\begin{equation}
\label{eq:definition-w-1}
w_{1,p}(\mu,\nu):=\inf_{\zeta\in\mathcal{C}(\mu,\nu)}\int_{\mathbb{R}^d}\int_{\mathbb{R}^d} [1\wedge |\theta-\theta'|](1+V_p(\theta)+V_p(\theta'))\zeta(d\theta d\theta'),
\end{equation}
and it satisfies trivially
\begin{equation}\label{lucia}
W_1(\mu,\nu)\leq w_{1,p}(\mu,\nu).
\end{equation}
The case $p=2$, i.e. $w_{1,2}$, is used throughout the section.  The result below states a contraction property of $w_{1,2}$.

\begin{proposition} \label{contr} Let $Z_t'$, $t\in\rset_+$ be the solution of \eqref{sde} with initial condition $Z'_0 = \theta_0$ which is independent of $\calF_\infty$ and satisfies $|\theta_0|_2$ is finite. Then,
\begin{align*}
w_{1,2}(\calL(Z_t),\calL(Z'_t)) \leq \hat{c} e^{-\dot{c} t} w_{1,2}(\calL(\theta_0),\calL(\theta_0')),
\end{align*}
where the constants $\dot{c}$ and $\hat{c}$ are given in Lemma \ref{contractionconst}.
\end{proposition}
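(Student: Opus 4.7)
The plan is to prove the contraction by a reflection-coupling argument in the spirit of Eberle \cite{eberle}, whose hypotheses are precisely: global Lipschitz continuity of the drift (supplied by Remark \ref{hlip}) together with dissipativity outside a compact set. The latter follows from Assumption \ref{assum:dissipativity} combined with the bound $|G(\theta,x)|\le K_1(x)$: taking expectation in $\langle F(\theta,X_0),\theta\rangle\ge\langle\theta,A(X_0)\theta\rangle-b(X_0)$ and absorbing the $G$-contribution via Young's inequality yields $\langle h(\theta),\theta\rangle\ge \tfrac{a}{2}|\theta|^2-b'$ for some $b'>0$. These are exactly the ingredients needed to invoke the Eberle framework.

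First I would fix an initial coupling $(\theta_0,\theta_0')$ nearly attaining the infimum in the definition of $w_{1,2}(\mathcal{L}(\theta_0),\mathcal{L}(\theta_0'))$, and then run two copies $Z_t,Z_t'$ of the SDE \eqref{sde} from these initial values driven by a reflection coupling: along the unit vector $e_t:=(Z_t-Z_t')/|Z_t-Z_t'|$ the Brownian increments are reflected, while in the orthogonal directions they are synchronised. The radial process $r_t:=|Z_t-Z_t'|$ then carries nonzero martingale quadratic variation, which is the engine of the contraction, and satisfies a one-dimensional SDE to which Ito's formula can be applied componentwise.

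Next I would introduce the Eberle-style semi-metric
\[
\rho(x,y)\;:=\;f(|x-y|)\bigl(1+\varepsilon V_2(x)+\varepsilon V_2(y)\bigr),
\]
where $f:[0,\infty)\to[0,\infty)$ is a concave, nondecreasing, bounded $C^2$ function with $f(0)=0$ and $f(r)\asymp 1\wedge r$, and $\varepsilon>0$ is a small weight. The comparability $f(r)\asymp 1\wedge r$ makes $\rho$ equivalent (up to multiplicative constants) to the integrand defining $w_{1,2}$. Applying Ito to $\rho(Z_t,Z_t')$ and combining (i) the reflection contraction for $f(r_t)$ (strict negativity from $f''$ inside the bad region, where the $L$-Lipschitz perturbation of the drift is beaten, and strict negativity from dissipativity outside) with (ii) the Lyapunov drift inequality of Lemma \ref{lem:PreLimforDriftY} applied with $p=2$ and $\lambda=1$, one obtains, for suitable $f$ and $\varepsilon$,
\[
\frac{d}{dt}\mathbb{E}[\rho(Z_t,Z_t')]\;\le\;-\dot{c}\,\mathbb{E}[\rho(Z_t,Z_t')].
\]
Gronwall's lemma then gives the asserted exponential decay, and the constant $\hat{c}$ absorbs the equivalence between $\rho$ and the $w_{1,2}$-integrand together with the optimality gap in the chosen initial coupling.

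The main obstacle is the joint calibration of $f$ and $\varepsilon$: the concavity of $f$ must be strong enough inside the radius of the bad region for the reflection term to dominate the drift's $L$-Lipschitz perturbation there, while $\varepsilon$ must be small enough not to destroy this reflection contraction yet large enough that the $V_2$-decay rate $\bar{c}(2)=a/2$ from Lemma \ref{lem:PreLimforDriftY} propagates through at rate $\dot{c}$. Since Lemma \ref{contractionconst} is cited for the explicit values of $(\dot{c},\hat{c})$, the proof ultimately reduces to verifying the hypotheses of that lemma, namely the Lipschitz and dissipativity properties of $h$ together with the $V_2$-Lyapunov bound.
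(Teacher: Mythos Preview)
Your proposal is correct in spirit and essentially reconstructs the argument behind the result. Note, however, that the paper itself does not actually prove Proposition~\ref{contr}: its entire proof is the one-line citation ``See Proposition~3.14 of \cite{nonconvex}.'' What you have sketched is precisely the Eberle-type reflection-coupling argument underlying that cited proposition (and the constants in Lemma~\ref{contractionconst} are likewise imported from \cite[Lemma~3.26]{nonconvex}, which in turn rests on \cite{eberle}). So there is no discrepancy of method to speak of---you have supplied the content of the black box the paper invokes, and your identification of the needed hypotheses (Lipschitz $h$ from Remark~\ref{hlip}, dissipativity from Assumption~\ref{assum:dissipativity} plus the $G$-bound, and the $V_2$-Lyapunov drift from Lemma~\ref{lem:PreLimforDriftY}) is exactly right.
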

\begin{proof}
See Proposition~3.14 of \cite{nonconvex}.
\end{proof}
By using the contraction property provided in Proposition \ref{contr}, one can construct the non-asymptotic bound between $\mathcal{L}(\bar{\theta}^{\lambda}_t)$ and $\mathcal{L}(Z^{\lambda}_t)$, $t \in [nT, (n+1)T]$, in $W_1$ distance by decomposing the error using the auxiliary process $\bar{\zeta}_t^{\lambda,n}$:
\begin{align}\label{decomp}
W_1(\mathcal{L}(\bar{\theta}^{\lambda}_t),\mathcal{L}(Z^{\lambda}_t)) \leq W_1(\mathcal{L}(\bar{\theta}^{\lambda}_t),\mathcal{L}(\bar{\zeta}_t^{\lambda,n})) + W_1(\mathcal{L}(\bar{\zeta}_t^{\lambda,n}),\mathcal{L}(Z^{\lambda}_t)).
\end{align}
One notices that when $1< \lambda \leq \lambda_{\max}$, the result holds trivially. Thus, we consider the case $0<\lambda \leq 1$, which implies $1/2 < \lambda T \leq 1$.

An upper bound for the first term in \eqref{decomp} is obtained in the Lemma below.

\begin{lemma}\label{convergencepart1} Let Assumption \ref{expressionH}, \ref{iid}, \ref{clc}, and \ref{assum:dissipativity} hold. For any $0< \lambda < \lambda_{\max}$ given in \eqref{eq:definition-lambda-max}, $t \in [nT, (n+1)T]$,
\begin{align*}
W_1(\mathcal{L}(\bar{\theta}^{\lambda}_t),\mathcal{L}(\bar{\zeta}_t^{\lambda,n}))   \leq    \sqrt{\lambda}(e^{-an/2}\bar{C}_{2,1}\mathbb{E}[V_2(\theta_0)]  +\bar{C}_{2,2})^{1/2},
\end{align*}
where $\bar{C}_{2,1}$ and $\bar{C}_{2,2}$ are given in \eqref{barc2}.
\end{lemma}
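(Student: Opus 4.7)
The plan is to use a synchronous coupling of $\bar{\theta}^{\lambda}_t$ and $\bar{\zeta}_t^{\lambda,n}$ driven by the common Brownian motion $\tilde{B}^{\lambda}$. By Definition \ref{zetaprocess} the two processes coincide at $t = nT$, so the diffusion terms cancel in $\Delta_t \eqdef \bar{\theta}^{\lambda}_t - \bar{\zeta}^{\lambda,n}_t$ and
\[
d\Delta_t = -\lambda\bigl[H(\bar{\theta}^{\lambda}_{\lfloor t\rfloor}, X_{\lceil t\rceil}) - h(\bar{\zeta}_t^{\lambda,n})\bigr]\,dt, \qquad \Delta_{nT}=0.
\]
Since $W_1(\mathcal{L}(\bar{\theta}^{\lambda}_t),\mathcal{L}(\bar{\zeta}^{\lambda,n}_t)) \leq (\mathbb{E}[|\Delta_t|^2])^{1/2}$ by Jensen's inequality, it suffices to establish $\mathbb{E}[|\Delta_t|^2] \leq C\lambda$ on $t \in [nT,(n+1)T]$.

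Differentiating the second moment and splitting the integrand as
\[
H(\bar{\theta}^{\lambda}_{\lfloor t\rfloor}, X_{\lceil t\rceil}) - h(\bar{\zeta}_t^{\lambda,n}) = \bigl[H(\bar{\theta}^{\lambda}_{\lfloor t\rfloor}, X_{\lceil t\rceil}) - h(\bar{\theta}^{\lambda}_{\lfloor t\rfloor})\bigr] + \bigl[h(\bar{\theta}^{\lambda}_{\lfloor t\rfloor}) - h(\bar{\zeta}_t^{\lambda,n})\bigr],
\]
I would treat the two pieces separately with the help of the one-step SGLD increment $\tilde{\Delta}_t \eqdef \bar{\theta}^{\lambda}_t - \bar{\theta}^{\lambda}_{\lfloor t\rfloor}$, which by direct inspection of \eqref{SGLDprocess}, Remark \ref{growth}, and Lemma \ref{lem:moment_SGLD_2p} satisfies $\mathbb{E}[|\tilde{\Delta}_t|^2] \leq C_\star\lambda$.

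For the noise piece, decompose $\Delta_t = (\bar{\theta}^{\lambda}_{\lfloor t\rfloor} - \bar{\zeta}^{\lambda,n}_t) + \tilde{\Delta}_t$. The first summand is $\mathcal{G}_{\lfloor t\rfloor}\vee \mathcal{F}^{\lambda}_t$-measurable and thus independent of $X_{\lceil t\rceil}$, so its expectation-inner-product with $H(\bar{\theta}^{\lambda}_{\lfloor t\rfloor},X_{\lceil t\rceil})-h(\bar{\theta}^{\lambda}_{\lfloor t\rfloor})$ vanishes by the tower property. For the inner product with $\tilde{\Delta}_t$, further split $\tilde{\Delta}_t$ into its drift $-\lambda(t-\lfloor t\rfloor)H(\bar{\theta}^{\lambda}_{\lfloor t\rfloor},X_{\lceil t\rceil})$ and its Brownian increment $\sqrt{2\beta^{-1}\lambda}(\tilde{B}^{\lambda}_t-\tilde{B}^{\lambda}_{\lfloor t\rfloor})$: the Brownian part is mean-zero and independent of the noise term so contributes nothing, while the drift part is already of order $\lambda$ and, via Cauchy-Schwarz together with the polynomial growth from Remark \ref{growth}, yields a bound of order $\lambda$. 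For the Lipschitz piece, Remark \ref{hlip} (where Assumption \ref{clc} enters) gives $|\langle \Delta_t, h(\bar{\theta}^{\lambda}_{\lfloor t\rfloor}) - h(\bar{\zeta}^{\lambda,n}_t)\rangle| \leq L|\Delta_t|(|\Delta_t|+|\tilde{\Delta}_t|)$, so Young's inequality combined with $\mathbb{E}[|\tilde{\Delta}_t|^2]\leq C_\star\lambda$ produces a contribution of the form $C_L\mathbb{E}[|\Delta_t|^2] + C'\lambda$.

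Assembling these estimates yields $\tfrac{d}{dt}\mathbb{E}[|\Delta_t|^2] \leq \kappa\lambda\mathbb{E}[|\Delta_t|^2] + \Xi\lambda^2$, and Gronwall's inequality with $\mathbb{E}[|\Delta_{nT}|^2]=0$ together with $\lambda(t-nT)\leq \lambda T\leq 1$ on $[nT,(n+1)T]$ delivers $\mathbb{E}[|\Delta_t|^2] \leq C\lambda$. The main technical obstacle is the careful bookkeeping required to match the explicit prefactor $e^{-an/2}\bar{C}_{2,1}\mathbb{E}[V_2(\theta_0)] + \bar{C}_{2,2}$ stated in the lemma: the $e^{-an/2}\mathbb{E}[V_2(\theta_0)]$ dependence enters through the second-moment bounds on $\bar{\theta}^{\lambda}_{\lfloor t\rfloor}$ and $\bar{\zeta}^{\lambda,n}_t$ (Lemmas \ref{lem:moment_SGLD_2p} and \ref{zetaprocessmoment}) that are invoked whenever $\mathbb{E}[|H|^2]$ or $\mathbb{E}[|h|^2]$ appears, after using the elementary estimate $(1-a\lambda)^{\lfloor t\rfloor} \leq e^{-a\lambda nT} \leq e^{-an/2}$ which holds because $\lambda T \geq 1/2$ in the regime $\lambda \leq 1$.
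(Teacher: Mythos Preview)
Your argument is correct and reaches the same $O(\sqrt{\lambda})$ bound, but the route is genuinely different from the paper's. The paper does not differentiate $\mathbb{E}[|\Delta_t|^2]$; instead it squares the entire drift integral $\int_{nT}^t[H(\bar{\theta}^{\lambda}_{\lfloor s\rfloor},X_{\lceil s\rceil})-h(\bar{\theta}^{\lambda}_{\lfloor s\rfloor})]\,ds$ directly, partitions it into unit-interval pieces $I_k$ (plus a remainder $R_K$), and then uses the discrete martingale-difference property $\mathbb{E}\langle I_k,I_j\rangle=0$ for $j<k$ to turn $\mathbb{E}\bigl|\sum_k I_k\bigr|^2$ into $\sum_k\mathbb{E}|I_k|^2$. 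This orthogonality is what collapses the a priori $O(\lambda^2 T^2)=O(1)$ bound to $O(\lambda^2 T)=O(\lambda)$. Your approach exploits exactly the same conditional centering, but at the infinitesimal level: by writing $\Delta_t=(\bar{\theta}^{\lambda}_{\lfloor t\rfloor}-\bar{\zeta}^{\lambda,n}_t)+\tilde{\Delta}_t$ you arrange that the noise pairs only with $\tilde{\Delta}_t$, which already carries an extra factor $\lambda$, so the forcing term in the Gr\"onwall inequality is $O(\lambda^2)$ from the outset.

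What each approach buys: your differential route is arguably cleaner, avoids the explicit discrete bookkeeping of the $I_k$'s and the residual $R_K$, and produces the same rate with the same ingredients (Lemmas \ref{lem:moment_SGLD_2p} and \ref{onestepest}, Remark \ref{hlip}). The paper's integral/sum decomposition makes the martingale structure of the stochastic-gradient noise completely explicit, which is advantageous if one later wants to relax the i.i.d.\ assumption on $(X_n)$ to a mixing condition (as the authors do elsewhere). One minor remark: you do not actually need Lemma \ref{zetaprocessmoment} here, since every moment that appears is of $\bar{\theta}^{\lambda}_{\lfloor t\rfloor}$, not of $\bar{\zeta}^{\lambda,n}_t$; the paper likewise only invokes Lemma \ref{lem:moment_SGLD_2p} (through Lemmas \ref{lem:boundedVariance} and \ref{onestepest}) at this stage.
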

\begin{proof}
To handle the first term in \eqref{decomp}, we start by establishing an upper bound in Wasserstein-2 distance and the statement follows by noticing $W_1\leq W_2$. By employing synchronous coupling, using \eqref{SGLDprocess} and the definition of $\bar{\zeta}_t^{\lambda,n} $ in Definition \ref{zetaprocess}, one obtains
\begin{align*}
\left| \bar{\zeta}_t^{\lambda,n} - \bar{\theta}^{\lambda}_t\right| \leq \lambda \left| \int_{nT}^t \left[ H(\bar{\theta}^{\lambda}_\floor{s}, X_{\ceil{s}}) - h(\bar{\zeta}_s^{\lambda,n}) \right] d s \right|.
\end{align*}
Then, the triangle inequality leads
\[
\left| \bar{\zeta}_t^{\lambda,n} - \bar{\theta}^{\lambda}_t \right| \leq \lambda \left| \int_{nT}^t \left[ H(\bar{\theta}^{\lambda}_\floor{s},X_{\ceil{s}}) -h(\bar{\theta}^{\lambda}_\floor{s})\right] d s \right| + \lambda \left| \int_{nT}^t \left[h(\bar{\theta}^{\lambda}_\floor{s}) - h(\bar{\zeta}_s^{\lambda,n}) \right] d s \right|.
\]
Taking squares on both sides and the application of Remark \ref{hlip} yield
\[
\left| \bar{\zeta}_t^{\lambda,n} - \bar{\theta}^{\lambda}_t \right|^2 \leq 2\lambda^2 \left| \int_{nT}^t \left[ H(\bar{\theta}^{\lambda}_\floor{s},X_{\ceil{s}}) -h(\bar{\theta}^{\lambda}_\floor{s})\right] d s \right|^2 + 2\lambda L^2 \int_{nT}^t  \left|\bar{\theta}^{\lambda}_\floor{s} - \bar{\zeta}_s^{\lambda,n} \right|^2 d s.
\]
By taking expectations on both sides and by using $(a+b)^2 \leq 2 a^2 + 2 b^2$, for $a, b >0$, 
one obtains
\begin{align*}
\E\left[\left| \bar{\zeta}_t^{\lambda,n} -\bar{\theta}^{\lambda}_t \right|^2\right]
&\leq 2\lambda^2\mathbb{E}\left[ \left| \int_{nT}^t \left[ H(\bar{\theta}^{\lambda}_\floor{s},X_{\ceil{s}}) -h(\bar{\theta}^{\lambda}_\floor{s})\right] d s \right|^2\right] \\
&\quad + 4\lambda L^2  \int_{nT}^t  \mathbb{E}\left[\left|\bar{\theta}^{\lambda}_\floor{s} - \bar{\theta}^{\lambda}_s  \right|^2\right] d s + 4\lambda L^2  \int_{nT}^t  \mathbb{E}\left[\left|\bar{\theta}^{\lambda}_s  - \bar{\zeta}_s^{\lambda,n} \right|^2\right] d s,
\end{align*}
which implies due to $\lambda T \leq 1$ and Lemma \ref{onestepest}
\begin{align}\label{proof:ErrorSplit}
\E\left[\left| \bar{\zeta}_t^{\lambda,n} -\bar{\theta}^{\lambda}_t \right|^2\right]
&\leq  4\lambda L^2   (e^{-a\lambda nT}\bar{\sigma}_Y \mathbb{E}[V_2(\theta_0)] + \tilde{\sigma}_Y) + 4\lambda L^2  \int_{nT}^t  \mathbb{E}\left[\left|\bar{\theta}^{\lambda}_s  - \bar{\zeta}_s^{\lambda,n} \right|^2\right] d s \nonumber\\
&\quad +2\lambda^2\mathbb{E}\left[ \left| \int_{nT}^t \left[ H(\bar{\theta}^{\lambda}_\floor{s},X_{\ceil{s}}) -h(\bar{\theta}^{\lambda}_\floor{s})\right] d s \right|^2\right],
\end{align}

where $\bar{\sigma}_Y$ and $\tilde{\sigma}_Y$ are provided in \eqref{sigmaY}. Next, we bound the last term in \eqref{proof:ErrorSplit} by partitioning the integral. Assume that $nT + K \leq t \leq nT + K + 1$ where $K + 1 \leq T$. Thus we can write
\begin{align*}
 \left| \int_{nT}^t \left[ H(\bar{\theta}^{\lambda}_\floor{s},X_{\ceil{s}}) -h(\bar{\theta}^{\lambda}_\floor{s})\right] d s \right| = \left| \sum_{k=1}^{K} I_k + R_K \right|
\end{align*}
where
\[
I_k = H(\bar{\theta}^{\lambda}_{nT + k-1},X_{nT + k}) -h(\bar{\theta}^{\lambda}_{nT + k-1}), \quad R_K =(t-(nT+K))(H(\bar{\theta}^{\lambda}_{nT+K},X_{nT+K+1}) -h(\bar{\theta}^{\lambda}_{nT+K})).
\]
Taking squares of both sides
\begin{align*}
\left| \sum_{k=1}^{K} I_k + R_K \right|^2 = \sum_{k=1}^K | I_k |^2 + 2 \sum_{k=2}^K \sum_{j = 1}^{k-1} \langle I_k, I_j \rangle +2 \sum_{k=1}^K \langle I_k, R_K \rangle + |R_K|^2,
\end{align*}
Finally, we take expectations of both sides. Define the filtration $\mathcal{H}_t = \mathcal{F}^{\lambda}_{\infty} \vee \mathcal{G}_{\lfloor t \rfloor}$. We first note that for any $k =2, \dots, K$, $j = 1, \dots, k-1$,
\begin{align*}
&\E \langle I_k, I_j \rangle\\
 &= \E\left[ \E [\langle I_k, I_j \rangle | \mathcal{H}_{nT+k-1} ] \right], \\
&= \E\left[ \E \left[\left\langle  H(\bar{\theta}^{\lambda}_{nT + k-1},X_{nT + k}) -h(\bar{\theta}^{\lambda}_{nT + k-1}), \left. H(\bar{\theta}^{\lambda}_{nT + j-1},X_{nT + j}) -h(\bar{\theta}^{\lambda}_{nT + j-1}) \right\rangle \right|  \mathcal{H}_{nT+k-1} \right] \right], \\
& = \E\left[ \left\langle  \E \left[\left.  H(\bar{\theta}^{\lambda}_{nT + k-1},X_{nT + k}) -h(\bar{\theta}^{\lambda}_{nT + k-1})\right|  \mathcal{H}_{nT+k-1} \right],  H(\bar{\theta}^{\lambda}_{nT + j-1},X_{nT + j}) -h(\bar{\theta}^{\lambda}_{nT + j-1}) \right\rangle  \right], \\
&= 0.
\end{align*}
By the same argument $\E \langle I_k, R_K\rangle = 0$ for all $1 \leq k \leq K$. Therefore, the last term of \eqref{proof:ErrorSplit} is bounded as
\begin{align*}
2\lambda^2\mathbb{E}\left[ \left| \int_{nT}^t \left[ H(\bar{\theta}^{\lambda}_\floor{s},X_{\ceil{s}}) -h(\bar{\theta}^{\lambda}_\floor{s})\right] d s \right|^2\right] &=  2\lambda^2 \sum_{k=1}^K \E\left[ |I_k|^2\right] + 2\lambda^2 \E\left[ |R_K|^2\right]\\
&\leq 2\lambda  (e^{-a\lambda nT}\bar{\sigma}_Z\mathbb{E}[V_2(\theta_0)]+\tilde{\sigma}_Z),
\end{align*}
where the last inequality holds due to Lemma~\ref{lem:boundedVariance} and $\bar{\sigma}_Z$ and $\tilde{\sigma}_Z$ are provided in \eqref{sigmaZ}. Therefore, the bound \eqref{proof:ErrorSplit} becomes
\begin{align*}
\E\left[\left| \bar{\zeta}_t^{\lambda,n} -\bar{\theta}^{\lambda}_t \right|^2\right]
&\leq   4\lambda L^2  \int_{nT}^t  \mathbb{E}\left[\left|\bar{\theta}^{\lambda}_s  - \bar{\zeta}_s^{\lambda,n} \right|^2\right] d s \nonumber\\
&\quad +4\lambda e^{-a\lambda nT}( L^2   \bar{\sigma}_Y+\bar{\sigma}_Z) \mathbb{E}[V_2(\theta_0)] + 4\lambda (L^2 \tilde{\sigma}_Y+\tilde{\sigma}_Z) ,
\end{align*}
Using Gr\"onwall's inequality yields
\[
\E\left[\left| \bar{\zeta}_t^{\lambda,n} -\bar{\theta}^{\lambda}_t \right|^2\right] \leq 4\lambda e^{4L^2  }\left[e^{-a\lambda nT}( L^2   \bar{\sigma}_Y+\bar{\sigma}_Z) \mathbb{E}[V_2(\theta_0)] +  (L^2 \tilde{\sigma}_Y+\tilde{\sigma}_Z)\right],
\]
which implies by $\lambda T \geq 1/2$,
\begin{equation}\label{vibd1}
W_2^2(\calL(\bar{\theta}^{\lambda}_t),\calL(\bar{\zeta}_t^{\lambda,n})) \leq \E\left[\left| \bar{\zeta}_t^{\lambda,n} - \bar{\theta}^{\lambda}_t \right|^2\right]  \leq \lambda(e^{-an/2}\bar{C}_{2,1}\mathbb{E}[V_2(\theta_0)]  +\bar{C}_{2,2}) ,
\end{equation}
where
\begin{equation}\label{barc2}
\bar{C}_{2,1} = 4e^{4L^2  } ( L^2   \bar{\sigma}_Y+\bar{\sigma}_Z)  , \quad \bar{C}_{2,2} =  4e^{4L^2  }(L^2 \tilde{\sigma}_Y+\tilde{\sigma}_Z)
\end{equation}
with $ \bar{\sigma}_Y$, $ \tilde{\sigma}_Y$ provided in \eqref{sigmaY} and $\bar{\sigma}_Z$, $\tilde{\sigma}_Z$ given in \eqref{sigmaZ}.
\end{proof}
Then, the following Lemma provides the bound for the second term in \eqref{decomp}.
\begin{lemma}\label{convergencepart2} Let Assumption \ref{expressionH}, \ref{iid}, \ref{clc} and \ref{assum:dissipativity} hold. For any $0< \lambda < \lambda_{\max}$ given in \eqref{eq:definition-lambda-max}, $t \in [nT, (n+1)T]$,
\begin{align*}
W_1(\mathcal{L}(\bar{\zeta}_t^{\lambda,n}),\calL(Z_t^\lambda)) \leq \sqrt{\lambda}(e^{-\min\{\dot{c},a/2\}n/2}\bar{C}_{2,3}\mathbb{E}[V_4(\theta_0)] +\bar{C}_{2,4}),
\end{align*}
where $\bar{C}_{2,3} $, $\bar{C}_{2,4}$ is given in \eqref{mainthmc2}.
\end{lemma}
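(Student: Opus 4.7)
The plan is to combine the $w_{1,2}$-contraction from Proposition \ref{contr} (applied twice in different time intervals) with a telescoping decomposition over the family of auxiliary processes $\bar{\zeta}^{\lambda,k}$, $k=0,\dots,n$, and to reduce each elementary step to the $W_2$-estimate already obtained inside the proof of Lemma \ref{convergencepart1}.

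First, since $\bar{\zeta}^{\lambda,n}_t$ and $Z^\lambda_t$ both follow the (time-rescaled) Langevin SDE from time $nT$, with initial values $\bar{\theta}^\lambda_{nT}$ and $Z^\lambda_{nT}$ respectively, an application of Proposition \ref{contr} shifted to $nT$ and rescaled according to $Z^\lambda_t = Z_{\lambda t}$, together with $W_1\leq w_{1,2}$ from \eqref{lucia}, yields
\begin{equation*}
W_1(\mathcal{L}(\bar{\zeta}_t^{\lambda,n}),\mathcal{L}(Z_t^\lambda)) \;\leq\; \hat{c}\,e^{-\dot{c}\lambda(t-nT)}\, w_{1,2}(\mathcal{L}(\bar{\theta}^\lambda_{nT}),\mathcal{L}(Z^\lambda_{nT})).
\end{equation*}
Noting that $\bar{\zeta}^{\lambda,n}_{nT}=\bar{\theta}^\lambda_{nT}$ by Definition \ref{zetaprocess} and $\mathcal{L}(\bar{\zeta}^{\lambda,0}_{nT})=\mathcal{L}(Z^\lambda_{nT})$, the triangle-type inequality for $w_{1,2}$ yields the telescoping bound
\begin{equation*}
w_{1,2}(\mathcal{L}(\bar{\theta}^\lambda_{nT}),\mathcal{L}(Z^\lambda_{nT})) \;\leq\; \sum_{k=0}^{n-1} w_{1,2}\bigl(\mathcal{L}(\bar{\zeta}^{\lambda,k+1}_{nT}),\mathcal{L}(\bar{\zeta}^{\lambda,k}_{nT})\bigr).
\end{equation*}

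For each summand, $\bar{\zeta}^{\lambda,k+1}$ and $\bar{\zeta}^{\lambda,k}$ obey the same Langevin dynamics from time $(k+1)T$ onward, starting respectively from $\bar{\theta}^\lambda_{(k+1)T}$ and $\bar{\zeta}^{\lambda,k}_{(k+1)T}$. A second application of Proposition \ref{contr} therefore gives
\begin{equation*}
w_{1,2}\bigl(\mathcal{L}(\bar{\zeta}^{\lambda,k+1}_{nT}),\mathcal{L}(\bar{\zeta}^{\lambda,k}_{nT})\bigr) \;\leq\; \hat{c}\,e^{-\dot{c}\lambda(n-k-1)T}\, w_{1,2}\bigl(\mathcal{L}(\bar{\theta}^\lambda_{(k+1)T}),\mathcal{L}(\bar{\zeta}^{\lambda,k}_{(k+1)T})\bigr).
\end{equation*}
To bound the remaining $w_{1,2}$-term, I would use $1\wedge|\theta-\theta'|\leq|\theta-\theta'|$ inside its definition and apply Cauchy-Schwarz over any coupling, yielding $w_{1,2}(\mu,\nu)\leq W_2(\mu,\nu)\bigl(\int(1+V_2(\theta)+V_2(\theta'))^2\,\zeta(d\theta d\theta')\bigr)^{1/2}$. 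The $W_2$ factor is exactly what is controlled by inequality \eqref{vibd1} from the proof of Lemma \ref{convergencepart1}, giving $W_2^2\leq\lambda(e^{-ak/2}\bar{C}_{2,1}\mathbb{E}[V_2(\theta_0)]+\bar{C}_{2,2})$, while the moment factor is uniformly bounded by the fourth-moment estimates of Corollary \ref{corr:Moments} and Lemma \ref{zetaprocessmoment}, producing a dependence on $\mathbb{E}[V_4(\theta_0)]$ rather than $\mathbb{E}[V_2(\theta_0)]$.

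Substituting these bounds into the telescoped sum and summing the geometric series in $k$, using $\lambda T\geq 1/2$ so that $e^{-\dot{c}\lambda T}\leq e^{-\dot{c}/2}$, the two exponential decay factors $e^{-\dot{c}\lambda(n-k-1)T}$ and $e^{-ak/4}$ (from the square root of $e^{-ak/2}$) combine to give the advertised rate $e^{-\min\{\dot{c},a/2\}n/2}$. The main obstacle is coordinating the two different decay exponents through the Cauchy-Schwarz reduction and the geometric sum: the square root in the $W_2$-bound halves the $a$-rate of the one-step estimate, contraction contributes the $\dot{c}\lambda T$-rate per telescoping step, and the passage from $W_2$ to $w_{1,2}$ is precisely what forces the fourth-moment dependence $\mathbb{E}[V_4(\theta_0)]$ in the final expression.
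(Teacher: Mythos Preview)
Your overall strategy matches the paper's: telescope over the family $\bar{\zeta}^{\lambda,k}$, apply the $w_{1,2}$-contraction of Proposition~\ref{contr} to each piece, reduce each one-step term to the $W_2$-bound \eqref{vibd1}, and close with the fourth-moment estimates of Corollary~\ref{corr:Moments} and Lemma~\ref{zetaprocessmoment}. The Cauchy--Schwarz passage from $w_{1,2}$ to $W_2\cdot(1+\E[V_4]^{1/2}+\E[V_4]^{1/2})$ and the way the two exponential rates combine are also as in the paper.

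There is, however, a genuine gap in the order of operations. You first contract from $t$ down to $nT$, and then telescope in $w_{1,2}$ at time $nT$. That telescoping step requires a triangle inequality for $w_{1,2}$, which is \emph{not} available: the cost $c(\theta,\theta')=[1\wedge|\theta-\theta'|](1+V_2(\theta)+V_2(\theta'))$ is not a metric. For instance, with $\theta=0$, $\theta'=0.1$, $\theta''=0.2$ one finds $c(0,0.2)=0.608>0.606=c(0,0.1)+c(0.1,0.2)$, so already for Dirac masses the inequality $w_{1,2}(\mu,\nu)\leq w_{1,2}(\mu,\rho)+w_{1,2}(\rho,\nu)$ fails. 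Any ``quasi-triangle'' version would pick up a constant $>1$ per step and blow up as $C^{n}$ across $n$ telescoping terms.

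The paper avoids this by reversing the order: it telescopes directly in $W_1$ at time $t$,
\[
W_1(\mathcal{L}(\bar{\zeta}^{\lambda,n}_t),\mathcal{L}(Z^\lambda_t))\;\leq\;\sum_{k=1}^n W_1(\mathcal{L}(\bar{\zeta}^{\lambda,k}_t),\mathcal{L}(\bar{\zeta}^{\lambda,k-1}_t)),
\]
which is legitimate since $W_1$ is a genuine metric, and only then uses $W_1\leq w_{1,2}$ termwise before applying the contraction from $kT$ to $t$. This also dispenses with your initial contraction step (which contributes only a redundant factor $\hat{c}$ since $e^{-\dot{c}\lambda(t-nT)}\leq 1$). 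After that correction, the remainder of your sketch goes through; the paper further replaces your direct product $W_2\cdot(\text{moment})$ by the Young-type split $(\sqrt{\lambda})^{-1}W_2^2+3\sqrt{\lambda}(\text{moment}^2)$, which is a purely cosmetic difference leading to the same rate $e^{-\min\{\dot{c},a/2\}n/2}$.
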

\begin{proof}
To upper bound the second term $W_1(\mathcal{L}(\bar{\zeta}_t^{\lambda,n}),\calL(Z_t^\lambda))$ in \eqref{decomp}, we adapt the proof from Lemma~3.28 in \cite{nonconvex}. By Proposition \ref{contr}, Corollary \ref{corr:Moments}, Lemma \ref{zetaprocessmoment} and \ref{convergencepart1}, one obtains
\begin{align*}
&W_1(\mathcal{L}(\bar{\zeta}_t^{\lambda,n}),\calL(Z_t^\lambda)) \\
&\leq \sum_{k=1}^n W_1(\calL(\bar{\zeta}_t^{\lambda,k}),\calL(\bar{\zeta}_t^{\lambda,k-1})), \\
&\leq \sum_{k=1}^n w_{1,2}(\calL(\zeta^{kT,\bar{\theta}^{\lambda}_{kT}, \lambda}_t ),\calL(\zeta^{kT,\bar{\zeta}_{kT}^{\lambda,k-1}, \lambda}_t)) \\
&\leq \hat{c} \sum_{k=1}^n \exp(-\dot{c} (n-k)) w_{1,2}(\calL(\bar{\theta}^{\lambda}_{kT}),\calL(\bar{\zeta}_{kT}^{\lambda,k-1})) \\
&\leq \hat{c} \sum_{k=1}^n \exp(-\dot{c} (n-k))  W_{2}(\calL(\bar{\theta}^{\lambda}_{kT}),\calL(\bar{\zeta}_{kT}^{\lambda,k-1}))  \left[1 + \left\lbrace \E[V_4(\bar{\theta}^{\lambda}_{kT})]\right\rbrace^{1/2} + \left\lbrace\E[V_4(\bar{\zeta}_{kT}^{\lambda,k-1})] \right\rbrace^{1/2}\right] \\
&\leq (\sqrt{\lambda})^{-1}\hat{c} \sum_{k=1}^n \exp(-\dot{c} (n-k))  W^2_{2}(\calL(\bar{\theta}^{\lambda}_{kT}),\calL(\bar{\zeta}_{kT}^{\lambda,k-1}))\\
&\quad +3\sqrt{\lambda}\hat{c} \sum_{k=1}^n \exp(-\dot{c} (n-k))  \left[1 + \E[V_4(\bar{\theta}^{\lambda}_{kT})] +\E[V_4(\bar{\zeta}_{kT}^{\lambda,k-1})] \right] \\
& \leq  \sqrt{\lambda}e^{-\min\{\dot{c},a/2\}n}n\hat{c} (e^{\min\{\dot{c},a/2\}}\bar{C}_{2,1}\mathbb{E}[V_2(\theta_0)]  +12\mathbb{E}[V_4(\theta_0)] )\\
&\quad + \sqrt{\lambda}\frac{\hat{c} }{1 - \exp(-\dot{c})}(\bar{C}_{2,2}+12c_3(\lambda_{\max}+a^{-1})+9\mathrm{v}_5(\overline{M}_4)+15)\\
&\leq \sqrt{\lambda}(e^{-\min\{\dot{c},a/2\}n/2}\bar{C}_{2,3}\mathbb{E}[V_4(\theta_0)] +\bar{C}_{2,4})
\end{align*}
where the last inequality holds due to $e^{-\alpha n}(n+1) \leq 1+\alpha^{-1}$, for $\alpha>0$, and we take $\alpha = \min\{\dot{c},a/2\}/2$, moreover,
\begin{align}\label{mainthmc2}
\begin{split}
\bar{C}_{2,3}& =\hat{c}\left(1+\frac{2}{\min\{\dot{c},a/2\}}\right) (e^{\min\{\dot{c},a/2\}}\bar{C}_{2,1}  +12)\\
\bar{C}_{2,4}& = \frac{\hat{c} }{1 - \exp(-\dot{c})}(\bar{C}_{2,2}+12c_3(\lambda_{\max}+a^{-1})+9\mathrm{v}_5(\overline{M}_4)+15)
\end{split}
\end{align}
with $\bar{C}_{2,1} $, $\bar{C}_{2,2}$ given in \ref{barc2}, $\hat{c} $, $\dot{c}$ given in Lemma \ref{contractionconst}, $c_3$ is given in \eqref{4thmomentconst} and $\overline{M}_4$ given in \eqref{Mp}.
\end{proof}

By using similar arguments as in Lemma \ref{convergencepart2}, an analogous result can be obtained in $W_2$ distance, which is given in the following corollary.
\begin{corollary}\label{convergencepart2w2} Let Assumption \ref{expressionH}, \ref{iid}, \ref{clc} and \ref{assum:dissipativity} hold.  For any $0< \lambda < \lambda_{\max}$ given in \eqref{eq:definition-lambda-max}, $t \in [nT, (n+1)T]$,
\begin{align*}
W_2(\mathcal{L}(\bar{\zeta}_t^{\lambda,n}),\calL(Z_t^\lambda)) \leq \lambda^{1/4}(e^{-\min\{\dot{c},a/2\}n/4}\bar{C}^*_{2,3}\mathbb{E}^{1/2}[V_4(\theta_0)] +\bar{C}^*_{2,4}),
\end{align*}
where $\bar{C}^*_{2,3} $, $\bar{C}^*_{2,4}$ is given in \eqref{mainthmcstar2}.
\end{corollary}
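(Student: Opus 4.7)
The plan is to follow the proof of Lemma \ref{convergencepart2} essentially verbatim, but to replace the final step ``$W_1 \le w_{1,2}$'' with a quadratic comparison of the form $W_2^2 \le 2\,w_{1,2}$. Since Lemma \ref{convergencepart2}'s telescoping argument actually produces a bound on $w_{1,2}(\mathcal{L}(\bar{\zeta}_t^{\lambda,n}),\mathcal{L}(Z_t^\lambda))$ of order $\sqrt{\lambda}$, passing through such a quadratic comparison will give a $W_2$ rate of $\lambda^{1/4}$, matching the claim, and the exponential rate will be halved (from $n/2$ to $n/4$) by taking the square root.

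The first step I would carry out is to prove the pointwise inequality
\[
|\theta-\theta'|^2 \le 2\,\bigl[1\wedge|\theta-\theta'|\bigr]\bigl(1 + V_2(\theta) + V_2(\theta')\bigr),\qquad \theta,\theta'\in\mathbb{R}^d,
\]
by splitting into the cases $|\theta-\theta'|\le 1$ (where the LHS is $\le 1\cdot[1\wedge|\theta-\theta'|]$) and $|\theta-\theta'|>1$ (where $1\wedge|\theta-\theta'|=1$ and $|\theta-\theta'|^2\le 2|\theta|^2+2|\theta'|^2 \le 2V_2(\theta)+2V_2(\theta')$). Integrating this against an $\varepsilon$-optimal coupling for $w_{1,2}(\mu,\nu)$ and sending $\varepsilon\downarrow 0$ gives the comparison $W_2^2(\mu,\nu)\le 2\,w_{1,2}(\mu,\nu)$, valid for any $\mu,\nu\in\mathcal{P}_{V_2}$.

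Next, I would reproduce the telescoping computation inside the proof of Lemma \ref{convergencepart2} up to the penultimate line. Writing
\[
w_{1,2}(\mathcal{L}(\bar{\zeta}_t^{\lambda,n}),\mathcal{L}(Z_t^\lambda)) \le \hat{c}\sum_{k=1}^{n} e^{-\dot{c}(n-k)}\,w_{1,2}\bigl(\mathcal{L}(\bar{\theta}^{\lambda}_{kT}),\mathcal{L}(\bar{\zeta}_{kT}^{\lambda,k-1})\bigr),
\]
and bounding each summand by $W_2(\mathcal{L}(\bar{\theta}^{\lambda}_{kT}),\mathcal{L}(\bar{\zeta}_{kT}^{\lambda,k-1}))[1+\{\E V_4(\bar{\theta}^{\lambda}_{kT})\}^{1/2}+\{\E V_4(\bar{\zeta}_{kT}^{\lambda,k-1})\}^{1/2}]$ via Cauchy--Schwarz, then invoking Lemma \ref{convergencepart1} for the $W_2$ factor and Corollary \ref{corr:Moments}/Lemma \ref{zetaprocessmoment} for the moment factors, yields exactly the penultimate estimate appearing in the proof of Lemma \ref{convergencepart2}, namely
\[
w_{1,2}(\mathcal{L}(\bar{\zeta}_t^{\lambda,n}),\mathcal{L}(Z_t^\lambda)) \le \sqrt{\lambda}\,\bigl(e^{-\min\{\dot{c},a/2\}n/2}\bar{C}_{2,3}\,\E[V_4(\theta_0)] + \bar{C}_{2,4}\bigr).
\]

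Finally, I would combine the comparison and the $w_{1,2}$ bound to obtain
\[
W_2^2(\mathcal{L}(\bar{\zeta}_t^{\lambda,n}),\mathcal{L}(Z_t^\lambda)) \le 2\sqrt{\lambda}\,\bigl(e^{-\min\{\dot{c},a/2\}n/2}\bar{C}_{2,3}\,\E[V_4(\theta_0)] + \bar{C}_{2,4}\bigr),
\]
and take square roots, using $\sqrt{u+v}\le \sqrt{u}+\sqrt{v}$, which precisely produces the claimed form with $\bar{C}^*_{2,3}\eqdef \sqrt{2\bar{C}_{2,3}}$, $\bar{C}^*_{2,4}\eqdef \sqrt{2\bar{C}_{2,4}}$ and the exponential rate halved to $\min\{\dot{c},a/2\}n/4$. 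There is no genuine analytical obstacle beyond what is already handled in Lemma \ref{convergencepart2}; the only non-routine ingredient is the quadratic comparison $W_2^2 \le 2\,w_{1,2}$, and the main care needed is the bookkeeping of constants under the square root and the matching with the expressions \eqref{mainthmcstar2} the authors define.
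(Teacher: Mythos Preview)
Your argument has a genuine gap: you assert that the telescoping computation in Lemma~\ref{convergencepart2} yields a bound on $w_{1,2}(\mathcal{L}(\bar{\zeta}_t^{\lambda,n}),\mathcal{L}(Z_t^\lambda))$ of order $\sqrt{\lambda}$, and then apply $W_2^2\le 2w_{1,2}$ once at the end. But Lemma~\ref{convergencepart2} does \emph{not} bound $w_{1,2}(\mathcal{L}(\bar{\zeta}_t^{\lambda,n}),\mathcal{L}(Z_t^\lambda))$: it telescopes the metric $W_1$ first, and only then bounds each $W_1$ summand by the corresponding $w_{1,2}$ term before applying the contraction. To get the inequality you write, namely $w_{1,2}(\mathcal{L}(\bar{\zeta}_t^{\lambda,n}),\mathcal{L}(Z_t^\lambda)) \le \sum_k w_{1,2}(\mathcal{L}(\bar{\zeta}_t^{\lambda,k}),\mathcal{L}(\bar{\zeta}_t^{\lambda,k-1}))$, you would need $w_{1,2}$ itself to satisfy the triangle inequality. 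This is not established anywhere in the paper, and in fact the underlying cost $c(\theta,\theta')=[1\wedge|\theta-\theta'|](1+V_2(\theta)+V_2(\theta'))$ is \emph{not} a metric on $\mathbb{R}^d$ (for instance, with $d=1$ and $\theta=100$, $\theta''=101$, $\theta'=100.5$ one checks $c(\theta,\theta'')>c(\theta,\theta')+c(\theta',\theta'')$), so $w_{1,2}$ is not a priori subadditive along the telescoping chain.

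The paper circumvents this exactly as in Lemma~\ref{convergencepart2}: it telescopes the genuine metric $W_2$ first, obtaining $W_2(\mathcal{L}(\bar{\zeta}_t^{\lambda,n}),\mathcal{L}(Z_t^\lambda))\le \sum_{k=1}^n W_2(\mathcal{L}(\bar{\zeta}_t^{\lambda,k}),\mathcal{L}(\bar{\zeta}_t^{\lambda,k-1}))$, and only then applies your comparison $W_2\le\sqrt{2w_{1,2}}$ \emph{term by term}. After using Proposition~\ref{contr} inside the square root this produces factors $e^{-\dot{c}(n-k)/2}\,w_{1,2}^{1/2}$; the $w_{1,2}^{1/2}$ is then bounded via Cauchy--Schwarz by $W_2^{1/2}\cdot[1+\{\E V_4\}^{1/2}+\{\E V_4\}^{1/2}]^{1/2}$, and a Young-type splitting $ab\le \lambda^{-1/4}a^2+\lambda^{1/4}b^2$ separates the two factors before summing. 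This is why the constants $\bar{C}^*_{2,3},\bar{C}^*_{2,4}$ in \eqref{mainthmcstar2} involve $\sqrt{\bar{C}_{2,1}},\sqrt{\bar{C}_{2,2}}$ and a $(1-e^{-\dot{c}/2})^{-1}$ prefactor rather than simply $\sqrt{2\bar{C}_{2,3}},\sqrt{2\bar{C}_{2,4}}$ as your route would suggest.
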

\begin{proof}
One notices that $W_2 \leq \sqrt{2w_{1,2}}$, then one writes
\begin{align*}
&W_2(\mathcal{L}(\bar{\zeta}_t^{\lambda,n}),\calL(Z_t^\lambda))  \nonumber\\
&\leq \sum_{k=1}^n W_2(\calL(\bar{\zeta}_t^{\lambda,k}),\calL(\bar{\zeta}_t^{\lambda,k-1}))  \nonumber \\
&\leq \sum_{k=1}^n \sqrt{2}w^{1/2}_{1,2}(\calL(\zeta^{kT,\bar{\theta}^{\lambda}_{kT}, \lambda}_t ),\calL(\zeta^{kT,\bar{\zeta}_{kT}^{\lambda,k-1}, \lambda}_t)) \nonumber\\
&\leq \sqrt{2\hat{c}}\sum_{k=1}^n \exp(-\dot{c} (n-k)/2)  W^{1/2}_2(\calL(\bar{\theta}^{\lambda}_{kT}),\calL(\bar{\zeta}_{kT}^{\lambda,k-1}))  \left[1 + \left\lbrace \E[V_4(\bar{\theta}^{\lambda}_{kT})]\right\rbrace^{1/2} + \left\lbrace\E[V_4(\bar{\zeta}_{kT}^{\lambda,k-1})] \right\rbrace^{1/2}\right]^{1/2} \nonumber \\
&\leq \lambda^{-1/4}\sqrt{2\hat{c}} \sum_{k=1}^n \exp(-\dot{c} (n-k)/2)  W_2(\calL(\bar{\theta}^{\lambda}_{kT}),\calL(\bar{\zeta}_{kT}^{\lambda,k-1})) \nonumber\\
&\quad + \lambda^{ 1/4}\sqrt{2\hat{c}} \sum_{k=1}^n \exp(-\dot{c} (n-k)/2) \left[1 + \left\lbrace \E[V_4(\bar{\theta}^{\lambda}_{kT})]\right\rbrace^{1/2} + \left\lbrace\E[V_4(\bar{\zeta}_{kT}^{\lambda,k-1})] \right\rbrace^{1/2}\right]\nonumber\\
& \leq  \sqrt{2\hat{c}}\lambda^{1/4}e^{-\min\{\dot{c},a/2\}n/2}n (e^{\min\{\dot{c},a/2\}/2}\bar{C}^{1/2}_{2,1}\mathbb{E}^{1/2}[V_2(\theta_0)]  +2\sqrt{2}\mathbb{E}^{1/2}[V_4(\theta_0)] ) \nonumber\\
&\quad + \sqrt{2\hat{c}}\lambda^{1/4}\frac{1 }{1 - \exp(-\dot{c}/2)}(\bar{C}^{1/2}_{2,2}+2\sqrt{2c_3}(\lambda_{\max}+a^{-1})^{1/2}+\sqrt{3}\mathrm{v}^{1/2}_5(\overline{M}_4)+\sqrt{15}) \nonumber\\
&\leq \lambda^{1/4}(e^{-\min\{\dot{c},a/2\}n/4}\bar{C}^*_{2,3}\mathbb{E}^{1/2}[V_4(\theta_0)] +\bar{C}^*_{2,4}),
\end{align*}
where
\begin{align}\label{mainthmcstar2}
\begin{split}
\bar{C}^*_{2,3}& =\sqrt{2\hat{c}}\left(1+\frac{4}{\min\{\dot{c},a/2\}}\right) (e^{\min\{\dot{c},a/2\}/2}\bar{C}^{1/2}_{2,1}  +2\sqrt{2})\\
\bar{C}^*_{2,4}& = \frac{\sqrt{2\hat{c}}}{1 - \exp(-\dot{c}/2)}(\bar{C}^{1/2}_{2,2}+2\sqrt{2c_3}(\lambda_{\max}+a^{-1})^{1/2}+\sqrt{3}\mathrm{v}^{1/2}_5(\overline{M}_4)+\sqrt{15}),
\end{split}
\end{align}
with $\bar{C}_{2,1} $, $\bar{C}_{2,2}$ given in \ref{barc2}, $\hat{c} $, $\dot{c}$ given in Lemma \ref{contractionconst}, $c_3$ is given in \eqref{4thmomentconst} and $\overline{M}_4$ given in Lemma \ref{lem:PreLimforDriftY}. This completes the proof.
\end{proof}

Finally, by using the inequality \eqref{decomp} and the results from previous lemmas, one can obtain the non-asymptotic bound between $\bar{\theta}^{\lambda}_t$ and $Z^{\lambda}_t$, $t \in [nT, (n+1)T]$, in $W_1$ distance.

\begin{lemma}\label{convergencepart3} Let Assumption \ref{expressionH}, \ref{iid}, \ref{clc} and \ref{assum:dissipativity} hold. For any $0< \lambda < \lambda_{\max}$ given in \eqref{eq:definition-lambda-max}, $t \in [nT, (n+1)T]$,
\begin{align*}
W_1(\mathcal{L}(\bar{\theta}^{\lambda}_t),\mathcal{L}(Z^{\lambda}_t)) \leq \bar{C}_2\sqrt{\lambda}(e^{-\min\{\dot{c},a/2\}n/2}\mathbb{E}[V_4(\theta_0)] +1),
\end{align*}
where $\bar{C}_2$ is given in \eqref{lemmac2}.
\end{lemma}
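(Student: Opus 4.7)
The plan is to combine the triangle-inequality decomposition \eqref{decomp} with the two one-sided estimates already established, namely Lemma \ref{convergencepart1} (controlling $W_1(\mathcal{L}(\bar{\theta}^{\lambda}_t),\mathcal{L}(\bar{\zeta}_t^{\lambda,n}))$) and Lemma \ref{convergencepart2} (controlling $W_1(\mathcal{L}(\bar{\zeta}_t^{\lambda,n}),\mathcal{L}(Z^{\lambda}_t))$). Starting from \eqref{decomp}, I would substitute these two estimates in directly. The case $1<\lambda\le \lambda_{\max}$ is handled trivially (as noted just before Lemma \ref{convergencepart1}) by enlarging the constant, so one may assume $0<\lambda\le 1$.

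The first summand from Lemma \ref{convergencepart1} is given as a square root of a sum, so the plan is to apply $\sqrt{a+b}\le\sqrt{a}+\sqrt{b}$ to split it into
\[
\sqrt{\lambda}\bigl(e^{-an/4}\bar{C}_{2,1}^{1/2}\mathbb{E}^{1/2}[V_2(\theta_0)]+\bar{C}_{2,2}^{1/2}\bigr),
\]
and then upgrade $\mathbb{E}^{1/2}[V_2(\theta_0)]$ to $\mathbb{E}[V_4(\theta_0)]$. This step is legitimate because $V_2(\theta)=1+|\theta|^2\le (1+|\theta|^2)^2=V_4(\theta)$ pointwise (so $\mathbb{E}[V_2(\theta_0)]\le \mathbb{E}[V_4(\theta_0)]$), while $V_4\ge 1$ gives $\mathbb{E}^{1/2}[V_4(\theta_0)]\le \mathbb{E}[V_4(\theta_0)]$; the $\mathbb{E}[V_4(\theta_0)]$ moment is finite by Assumption \ref{iid}.

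The two exponential decay rates must then be unified. Since $\min\{\dot{c},a/2\}\le a/2$, one has $\min\{\dot{c},a/2\}/2 \le a/4$, and therefore
\[
e^{-an/4}\le e^{-\min\{\dot{c},a/2\}n/2},
\]
so the slower exponential from Lemma \ref{convergencepart2} dominates. Adding the two bounds then yields
\[
W_1(\mathcal{L}(\bar{\theta}^{\lambda}_t),\mathcal{L}(Z^{\lambda}_t))\le \sqrt{\lambda}\,e^{-\min\{\dot{c},a/2\}n/2}\bigl(\bar{C}_{2,1}^{1/2}+\bar{C}_{2,3}\bigr)\mathbb{E}[V_4(\theta_0)]+\sqrt{\lambda}\,\bigl(\bar{C}_{2,2}^{1/2}+\bar{C}_{2,4}\bigr),
\]
after which setting $\bar{C}_2\eqdef \max\{\bar{C}_{2,1}^{1/2}+\bar{C}_{2,3},\ \bar{C}_{2,2}^{1/2}+\bar{C}_{2,4}\}$ delivers the stated form.

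There is no real obstacle here: the argument is essentially bookkeeping once Lemmas \ref{convergencepart1} and \ref{convergencepart2} are in hand. The only mildly delicate point is the square-root-to-moment conversion in the bound for $W_1(\mathcal{L}(\bar{\theta}^{\lambda}_t),\mathcal{L}(\bar{\zeta}_t^{\lambda,n}))$, which requires trading the $V_2$ moment for a $V_4$ moment so that both terms exhibit the common $\mathbb{E}[V_4(\theta_0)]$ prefactor appearing on the right-hand side of the conclusion.
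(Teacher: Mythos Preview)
Your proposal is correct and follows essentially the same route as the paper: apply the triangle inequality \eqref{decomp}, plug in Lemmas \ref{convergencepart1} and \ref{convergencepart2}, use $\sqrt{a+b}\le\sqrt a+\sqrt b$, upgrade $\mathbb{E}^{1/2}[V_2(\theta_0)]$ to $\mathbb{E}[V_4(\theta_0)]$, and absorb the faster exponential into the slower one. The only cosmetic difference is that the paper takes $\bar{C}_2=\bar{C}_{2,1}^{1/2}+\bar{C}_{2,2}^{1/2}+\bar{C}_{2,3}+\bar{C}_{2,4}$ (the sum rather than your max), which is the specific constant referenced in \eqref{lemmac2}.
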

\begin{proof}
By using Lemma \ref{convergencepart1} and \ref{convergencepart2}, one obtains
\begin{align*}
&W_1(\mathcal{L}(\bar{\theta}^{\lambda}_t),\mathcal{L}(Z^{\lambda}_t))\\
&\leq W_1(\calL(\bar{\theta}^{\lambda}_t),\calL(\bar{\zeta}_t^{\lambda,n})) +W_1(\mathcal{L}(\bar{\zeta}_t^{\lambda,n}),\calL(Z_t^\lambda)) \\
& \leq  \sqrt{\lambda} (e^{-an /2}\bar{C}_{2,1}^{1/2}\mathbb{E}^{1/2}[V_2(\theta_0)]  +\bar{C}_{2,2}^{1/2}) +\sqrt{\lambda}(e^{-\min\{\dot{c},a/2\}n/2}\bar{C}_{2,3}\mathbb{E}[V_4(\theta_0)] +\bar{C}_{2,4})\\
&\leq  \bar{C}_2\sqrt{\lambda}(e^{-\min\{\dot{c},a/2\}n/2}\mathbb{E}[V_4(\theta_0)] +1),
\end{align*}
where
\begin{equation}\label{lemmac2}
\bar{C}_2 =\bar{C}_{2,1}^{1/2} +\bar{C}_{2,2}^{1/2}+ \bar{C}_{2,3}+\bar{C}_{2,4}.
\end{equation}
\end{proof}
Before proceeding to the proofs of the main results, we provide explicitly the constants $\dot{c}$ and $\hat{c}$ in Proposition \ref{contr}.
\begin{lemma} \label{contractionconst} The contraction constant in Proposition \ref{contr} is given by
$$
\dot{c}=\min\{\bar{\phi}, \bar{c}(p), 4\tilde{c}(p) \dot{\epsilon}\bar{c}(p)\}/2,
$$
where the explicit expressions for $\bar{c}(p)$ and $\tilde{c}(p)$ can be found in Lemma~\ref{lem:PreLimforDriftY} and  $\bar{\phi}$ is given by
\[
\bar{\phi}= \left(\sqrt{4\pi/L} \bar{b}  \exp\left(\left(\bar{b} \sqrt{L}/2 +2/\sqrt{L}\right)^2\right) \right)^{-1} \,.
\]
Furthermore, any $\dot{\epsilon}$ can be chosen which satisfies the following inequality
\[
\dot{\epsilon}  \leq 1 \wedge \left(8\tilde{c}(p) \sqrt{\pi/L}\int_0^{\tilde{b}}\exp\left(\left(s\sqrt{L}/2+2/\sqrt{L}\right)^2\right) \,d s\right)^{-1},
\]
where
$\tilde{b}=\sqrt{2\tilde{c}(p)/\bar{c}(p)-1}$ and $\bar{b} = \sqrt{4\tilde{c}(p)(1+\bar{c}(p))/\bar{c}(p)-1}$. The constant $\hat{c} $ is given as the ratio $C_{11}/C_{10}$, where $C_{11},\,C_{10}$ are given explicitly in \cite[Lemma~3.26]{nonconvex}.
\end{lemma}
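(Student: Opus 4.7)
The plan is to reduce this lemma to a direct bookkeeping exercise based on Proposition~3.14 of \cite{nonconvex}, which itself adapts the reflection coupling construction of \cite{eberle}. Since Proposition~\ref{contr} has already been imported verbatim from that reference, the only work here is to track which constants in our setting feed into the general formulas and to record them explicitly.

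First, I would verify the two ingredients the Eberle-type argument requires. Ingredient (i) is a global Lipschitz bound on the drift, which we have from Remark~\ref{hlip} with constant $L$. Ingredient (ii) is a Lyapunov drift inequality of the form $\beta^{-1}\Delta V_p - \langle h,\nabla V_p\rangle \leq -\bar{c}(p)V_p + \tilde{c}(p)$, which is exactly Lemma~\ref{lem:PreLimforDriftY}. With these in hand, the coupling construction produces a concave, strictly increasing function $f$ with $f(0)=0$, $f'(0)=1$, on $[0,\infty)$, and the semi-metric
\[
\rho((x,y)) = f(|x-y|)\bigl(1 + \dot{\epsilon}\, V_p(x) + \dot{\epsilon}\, V_p(y)\bigr),
\]
where $\dot{\epsilon}>0$ is a tuning parameter. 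Reflection coupling applied to two copies of \eqref{sde} then yields $\mathbb{E}[\rho(Z_t,Z_t')] \leq e^{-\dot{c}t}\,\mathbb{E}[\rho(\theta_0,\theta_0')]$, and $w_{1,2}$ is equivalent to $\rho$ up to the explicit constants $C_{10}\leq C_{11}$ of \cite[Lemma~3.26]{nonconvex}, which is what produces the prefactor $\hat{c}=C_{11}/C_{10}$.

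The explicit constants then come out as follows. The radii $\tilde{b}=\sqrt{2\tilde{c}(p)/\bar{c}(p)-1}$ and $\bar{b}=\sqrt{4\tilde{c}(p)(1+\bar{c}(p))/\bar{c}(p)-1}$ are the thresholds beyond which the Lyapunov term already contracts; inside balls of these radii the contraction has to be obtained from the reflection coupling, and the lower bound $\bar{\phi}^{-1}=\sqrt{4\pi/L}\,\bar{b}\,\exp\bigl((\bar{b}\sqrt{L}/2 + 2/\sqrt{L})^2\bigr)$ is the standard Eberle estimate for the worst-case reflection-contraction rate on this ball, obtained by integrating the concave function $f$ against the Gaussian profile determined by the coupling SDE's diffusion coefficient. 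The threshold on $\dot{\epsilon}$ is picked so that the Lyapunov correction $\dot{\epsilon} V_p$ does not destroy the concavity-based contraction of $f(|x-y|)$ on the ball of radius $\tilde{b}$, which is exactly the integral inequality $\dot{\epsilon}\leq 1\wedge\bigl(8\tilde{c}(p)\sqrt{\pi/L}\int_0^{\tilde b}\exp((s\sqrt{L}/2+2/\sqrt{L})^2)\,ds\bigr)^{-1}$. Taking the minimum of the three contraction rates -- the reflection rate $\bar{\phi}$, the pure Lyapunov rate $\bar{c}(p)$, and the combined rate $4\tilde{c}(p)\dot{\epsilon}\bar{c}(p)$ from the cross-terms between $f$ and the Lyapunov correction -- and then halving to absorb an implicit constant yields $\dot{c}$.

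The main obstacle is not that any new estimate has to be proved, but rather to confirm that the Lipschitz constant $L$ of Remark~\ref{hlip} and the Lyapunov constants $\bar{c}(p),\tilde{c}(p)$ of Lemma~\ref{lem:PreLimforDriftY} plug directly into the formulas of \cite[Proposition~3.14 and Lemma~3.26]{nonconvex} without rescaling. Once this is checked, the stated expressions for $\bar{\phi}$, $\tilde{b}$, $\bar{b}$, the admissible range of $\dot{\epsilon}$, and the ratio $\hat{c}=C_{11}/C_{10}$ are just the explicit outputs of that construction applied to our drift.
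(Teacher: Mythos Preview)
Your proposal is correct and takes essentially the same approach as the paper: the paper's proof is simply ``See \cite[Lemma~3.26]{nonconvex},'' and your argument is precisely a spelled-out version of that deferral, explaining how the Lipschitz constant from Remark~\ref{hlip} and the Lyapunov constants from Lemma~\ref{lem:PreLimforDriftY} slot into the Eberle-type construction of that reference. Your additional exposition on the roles of $\tilde{b}$, $\bar{b}$, $\bar{\phi}$, and $\dot{\epsilon}$ is helpful context but not strictly required, since the lemma is purely a citation of explicit formulas already established elsewhere.
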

\begin{proof}
See \cite[Lemma~3.26]{nonconvex}.
\end{proof}

\noindent{\bf Proof of Theorem \ref{main}} One notes that, by Lemma \ref{convergencepart3} and Proposition \ref{contr}, for $t \in [nT, (n+1)T]$
\begin{align*}
W_1(\mathcal{L}(\bar{\theta}^{\lambda}_t),\pi_{\beta}) &\leq W_1(\mathcal{L}(\bar{\theta}^{\lambda}_t),\mathcal{L}(Z^{\lambda}_t))+W_1(\mathcal{L}(Z^{\lambda}_t),\pi_{\beta})\\
& \leq\bar{C}_2\sqrt{\lambda}(e^{-\min\{\dot{c},a/2\}n/2}\mathbb{E}[V_4(\theta_0)] +1) +\hat{c} e^{-\dot{c}\lambda t} w_{1,2}(\theta_0, \pi_{\beta})\\
& \leq \bar{C}_2\sqrt{\lambda}(e^{-\min\{\dot{c},a/2\}n/2}\mathbb{E}[V_4(\theta_0)] +1) +\hat{c} e^{-\dot{c}\lambda t}\left[1+ \mathbb{E}[V_2(\theta_0)]+\int_{\mathbb{R}^d}V_2(\theta)\pi_{\beta}(d\theta)\right]\\
& \leq 2e^{-\min\{\dot{c},a/2\}n/2} (\lambda_{\max}^{1/2}\bar{C}_2+\hat{c})(1+\mathbb{E}[|\theta_0|^4])\\
&\quad +\hat{c}e^{-\min\{\dot{c},a/2\}n/2}\left[1 +\int_{\mathbb{R}^d}V_2(\theta)\pi_{\beta}(d\theta)\right]+\sqrt{\lambda}\bar{C}_2,
\end{align*}
which implies, for any $n \in \mathbb{N}$
\[
W_1(\mathcal{L}(\theta^{\lambda}_n),\pi_{\beta}) \leq C_1e^{-C_0\lambda n}(1+\mathbb{E}[|\theta_0|^4])+C_2\sqrt{\lambda},
\]
where
\begin{equation}\label{mainthmconst}
C_0 =  \min\{\dot{c},a/2\}/2, \quad C_1 = 2 \left[ (\lambda_{\max}^{1/2}\bar{C}_2+\hat{c})+\hat{c}\left(1+ \int_{\mathbb{R}^d}V_2(\theta)\pi_{\beta}(d\theta)\right)\right], \quad C_2 = \bar{C}_2,
\end{equation}
with $\bar{C}_{2} $ given in \ref{lemmac2}.

\noindent {\bf Proof of Corollary \ref{cw2}} By using \eqref{vibd1} in Lemma \ref{convergencepart1}, Corollary \ref{convergencepart2w2} and Proposition \ref{contr}, one obtains
\begin{align*}
W_2(\mathcal{L}(\bar{\theta}^{\lambda}_t),\pi_{\beta})
&\leq W_2(\mathcal{L}(\bar{\theta}^{\lambda}_t),\mathcal{L}(Z^{\lambda}_t))+W_2(\mathcal{L}(Z^{\lambda}_t),\pi_{\beta})\\
& \leq W_2(\calL(\bar{\theta}^{\lambda}_t),\calL(\bar{\zeta}_t^{\lambda,n})) +W_2(\mathcal{L}(\bar{\zeta}_t^{\lambda,n}),\calL(Z_t^\lambda))+W_2(\mathcal{L}(Z^{\lambda}_t),\pi_{\beta})\\
& \leq \sqrt{\lambda} (e^{-an/2}\bar{C}_{2,1}\mathbb{E}[V_2(\theta_0)]  +\bar{C}_{2,2})^{1/2} \\
&\quad+\lambda^{1/4}(e^{-\min\{\dot{c},a/2\}n/4}\bar{C}^*_{2,3}\mathbb{E}^{1/2}[V_4(\theta_0)] +\bar{C}^*_{2,4}), +\sqrt{2w_{1,2}(\mathcal{L}(Z^{\lambda}_t),\pi_{\beta})}\\
& \leq \lambda^{1/4}\tilde{C}_2 (e^{-\min\{\dot{c},a/2\}n/4}\mathbb{E}[V_4(\theta_0)] +1)+\hat{c}^{1/2} e^{-\dot{c}\lambda t/2} \sqrt{2w_{1,2}(\theta_0, \pi_{\beta})},
\end{align*}
where $\tilde{C}_2 = \lambda_{\max}^{1/4}\bar{C}_{2,1}^{1/2} +\lambda_{\max}^{1/4}\bar{C}_{2,2}^{1/2}+ \bar{C}^*_{2,3} +\bar{C}^*_{2,4} $ and it can be further calculated as
\begin{align*}
W_2(\mathcal{L}(\bar{\theta}^{\lambda}_t),\pi_{\beta})
& \leq \lambda^{1/4}\tilde{C}_2 (e^{-\min\{\dot{c},a/2\}n/4}\mathbb{E}[V_4(\theta_0)] +1) \\
&\quad +\sqrt{2}\hat{c}^{1/2} e^{-\dot{c}\lambda t/2}\left(1+ \mathbb{E}[V_2(\theta_0)]+\int_{\mathbb{R}^d}V_2(\theta)\pi_{\beta}(d\theta)\right)^{1/2}\\
& \leq 2e^{-\min\{\dot{c},a/2\}n/4} (\lambda_{\max}^{1/4}\tilde{C}_2+\sqrt{2}\hat{c}^{1/2})(1+\mathbb{E}[|\theta_0|^4])\\
&\quad +\sqrt{2}\hat{c}^{1/2}e^{-\min\{\dot{c},a/2\}n/4}\left[1 +\int_{\mathbb{R}^d}V_2(\theta)\pi_{\beta}(d\theta)\right]+ \lambda^{1/4}\tilde{C}_2,
\end{align*}
Finally, one obtains
\[
W_2(\mathcal{L}(\theta^{\lambda}_n),\pi_{\beta}) \leq C_4 e^{-C_3\lambda n}\E[|\theta_{0}|^{4}+1] +C_5\lambda^{1/4}
\]
where
\begin{equation}\label{cw2const}
C_3 =  \min\{\dot{c},a/2\}/4, \quad C_4 = 2 \left[ (\lambda_{\max}^{1/4}\tilde{C}_2+\sqrt{2}\hat{c}^{1/2})+\hat{c}^{1/2}\left(1+ \int_{\mathbb{R}^d}V_2(\theta)\pi_{\beta}(d\theta)\right)\right], \quad C_5 = \tilde{C}_2.
\end{equation}

\noindent {\bf Proof of Corollary \ref{eer}} To obtain an upper bound for the expected excess risk $\mathbb{E}[U(\hat{\theta})] - \inf_{\theta \in \mathbb{R}^d} U(\theta) $, one considers the following splitting
\begin{equation}\label{eersplitting}
 \mathbb{E}[U(\hat{\theta})] - \inf_{\theta \in \mathbb{R}^d} U(\theta) = \left( \mathbb{E}[U(\hat{\theta})] -  \mathbb{E}[U(Z_{\infty})]\right) + \left( \mathbb{E}[U(Z_{\infty})]- \inf_{\theta \in \mathbb{R}^d} U(\theta) \right),
\end{equation}
where $\hat{\theta} =  \theta^{\lambda}_n$ and $Z_{\infty}\sim \pi_{\beta}$ with $\pi_{\beta}(\theta) = \exp(-\beta U(\theta))$ for all $\theta \in \mathbb{R}^d$. By using \cite[Lemma~3.5]{raginsky}, Lemma \ref{lem:moment_SGLD_2p}, \ref{zt2ndmoment} and Corollary \ref{cw2}, the first term on the RHS of \eqref{eersplitting} can be bounded by
\begin{align*}
&\mathbb{E}[U(\hat{\theta})] -  \mathbb{E}[U(Z_{\infty})] \\
& \leq \left(L(\mathbb{E}\left[ |\theta_0|^2\right] +(c_1+\E[K_1^2(X_0)]/a) (\lambda_{\max}+a^{-1}))^{1/2}+|h(0)|\right)W_2(\mathcal{L}(\theta^{\lambda}_n),\pi_{\beta})\\
&\leq \left(L(\mathbb{E}\left[ |\theta_0|^2\right]+(c_1+\E[K_1^2(X_0)]/a) (\lambda_{\max}+a^{-1}))^{1/2}+|h(0)|\right)\left(C_4 e^{-C_3\lambda n}\E[|\theta_{0}|^{4}+1] +C_5\lambda^{1/4}\right)\\
&\leq \hat{C}_1e^{-\hat{C}_0\lambda n}+\hat{C_2}\lambda^{1/4},
\end{align*}
where
\begin{align}\label{eerconst}
\begin{split}
\hat{C}_0 &= C_3,\\
\hat{C}_1 & = C_4 \left(L(\mathbb{E}\left[ |\theta_0|^2\right]+(c_1+\E[K_1^2(X_0)]/a) (\lambda_{\max}+a^{-1}))^{1/2}+|h(0)|\right)\E[|\theta_{0}|^{4}+1], \\
\hat{C}_2 & = C_5 \left(L(\mathbb{E}\left[ |\theta_0|^2\right]+(c_1+\E[K_1^2(X_0)]/a) (\lambda_{\max}+a^{-1}))^{1/2}+|h(0)|\right),
\end{split}
\end{align}
with $C_3, C_4, C_5$ given in \eqref{cw2const} and $c_1$ given in \eqref{2ndmomentconst}. Moreover, the second term on the RHS of \eqref{eersplitting} can be estimated by using \cite[Proposition~3.4]{raginsky}, which gives,
\[
\mathbb{E}[U(Z_{\infty})]- \inf_{\theta \in \mathbb{R}^d} U(\theta) \leq  \frac{\hat{C}_3}{\beta},
\]
where
\begin{equation}\label{eerconst2}
\hat{C}_3 = \frac{d}{2}\log\left(\frac{e\beta L}{ad}\left(\frac{2d}{\beta}+2b+\frac{\E[K_1^2(X_0)]}{a}\right)\right).
\end{equation}
Finally, one obtains
\[
\mathbb{E}[U(\hat{\theta})] - \inf_{\theta \in \mathbb{R}^d} U(\theta)  \leq  \hat{C}_1e^{-\hat{C}_0\lambda n}+\hat{C_2}\lambda^{1/4}+\hat{C}_3/\beta.
\]

\section{Proof of the main results: convex case}\label{poconv}
The analysis of the convergence results in the convex case, i.e. Theorem \ref{mainconvex}, relies on the properties of the LMC algorithm, known also as the unadjusted Langevin algorithm (ULA). The LMC algorithm associated with SDE \eqref{sde} is given explicitly by, for any $ n\in\mathbb{N}$,
\begin{equation}\label{ULA}
 \dot{\theta}^{\lambda}_{n+1}:=\dot{\theta}^{\lambda}_n-\lambda h(\dot{\theta}^{\lambda}_n)+\sqrt{2\beta^{-1}\lambda}\xi_{n+1}, \quad \dot{\theta}^{\lambda}_0:=\theta_0.
\end{equation}

For  $0<\lambda <\bar{\lambda}_{\max}$, the Markov kernel $\dot{R}_{\lambda}$ associated with \eqref{ULA} is given by, for all $A \in \mathcal{B}(\mathbb{R}^{d})$ and $\theta \in \mathbb{R}^{d}$,

\[
\dot{R}_{\lambda}(\theta,A) = \int_{A}{ (4\beta^{-1}\pi \lambda)^{-d/2} \exp\left(-\beta(4\lambda)^{-1}\left|y-\theta + \lambda  h(\theta) \right|^2  \right)d y.}
\]
In this section, the moment estimates of the SDE \eqref{sde}, the LMC algorithm \eqref{ULA} and the SGLD algorithm \eqref{SGLD} are presented which contribute to the analysis of the convergence resuts.
\subsection{Preliminary estimates}
Under Assumptions \ref{convex} and \ref{convexG}, $U$ has a unique minimizer $\theta^* \in \mathbb{R}^d$. Denote by $(P_t)_{t \geq 0}$ the semigroup associated with SDE \eqref{sde}. The statements below provide a moment bound and a convergence result for SDE \eqref{sde}.

\begin{lemma}[Proposition 1 in \cite{aew}]\label{sdemoment} Let Assumptions \ref{expressionH}, \ref{iid}, \ref{clc}, \ref{convex} and \ref{convexG} hold.
\begin{enumerate}
\item[(i)] For all $t> 0$ and $y \in \Rd$,
\begin{align*}
\int_{\Rd}|y-\theta^*|^2P_t(\theta,d y)\leq |\theta-\theta^*|^2 e^{-2\hat{a}t}+(d/(\hat{a}\beta))(1- e^{-2\hat{a}t}).
\end{align*}
\item[(ii)] The stationary distribution $\pi_{\beta}$ satisfies
\begin{align*}
\int_{\Rd}|y-\theta^*|^2\pi_{\beta}(d y) \leq d/(\hat{a}\beta).
\end{align*}
\end{enumerate}
\end{lemma}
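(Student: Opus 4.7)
The plan is to prove (i) by a direct Itô / Grönwall argument on the squared distance to the minimizer, and then to obtain (ii) by letting $t\to\infty$ (or by invariance).

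\textbf{Step 1: Reduction via $h(\theta^*)=0$.} Since $\theta^*$ is the (unique, by strong convexity) minimizer of $U$ and $h=\nabla U$, we have $h(\theta^*)=0$. Combined with Remark \ref{hconvex}, this gives the one-sided bound
\[
\langle h(\theta),\,\theta-\theta^*\rangle \;=\; \langle h(\theta)-h(\theta^*),\,\theta-\theta^*\rangle \;\ge\; \hat{a}\,|\theta-\theta^*|^2 \qquad\forall\,\theta\in\mathbb{R}^d,
\]
where $\hat{a}=\hat{a}_1+\hat{a}_2>0$. This is the only place the convexity assumptions enter.

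\textbf{Step 2: Itô's formula and Grönwall for (i).} I would apply Itô's formula to $\varphi(z)=|z-\theta^*|^2$ along the SDE \eqref{sde}. Since $\nabla\varphi(z)=2(z-\theta^*)$ and $\Delta\varphi(z)=2d$,
\[
d\,|Z_t-\theta^*|^2 \;=\; -2\langle Z_t-\theta^*,\,h(Z_t)\rangle\,dt \;+\; \frac{2d}{\beta}\,dt \;+\; 2\sqrt{2\beta^{-1}}\,\langle Z_t-\theta^*,\,dB_t\rangle.
\]
The Lipschitz bound on $h$ (Remark \ref{hlip}) ensures the stochastic integral is a true martingale once I verify a second-moment bound — which I would justify by a standard localization / Lipschitz-coefficient argument and then remove the localization via Fatou. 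Taking expectations with $Z_0=\theta$ and invoking Step 1 yields
\[
\tfrac{d}{dt}\,\mathbb{E}|Z_t-\theta^*|^2 \;\le\; -2\hat{a}\,\mathbb{E}|Z_t-\theta^*|^2 \;+\; \frac{2d}{\beta}.
\]
Grönwall's (linear first-order) inequality then produces precisely
\[
\mathbb{E}|Z_t-\theta^*|^2 \;\le\; |\theta-\theta^*|^2 e^{-2\hat{a}t} \;+\; \frac{d}{\hat{a}\beta}\bigl(1-e^{-2\hat{a}t}\bigr),
\]
which is the claim in (i).

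\textbf{Step 3: Passing to the invariant measure for (ii).} Two routes both work. The slicker one uses stationarity: integrate the inequality in (i) against $\pi_\beta(d\theta)$ and use $\pi_\beta P_t=\pi_\beta$ to obtain
\[
\int_{\mathbb{R}^d}|y-\theta^*|^2\,\pi_\beta(dy) \;\le\; e^{-2\hat{a}t}\!\!\int_{\mathbb{R}^d}|\theta-\theta^*|^2\,\pi_\beta(d\theta) \;+\; \frac{d}{\hat{a}\beta}\bigl(1-e^{-2\hat{a}t}\bigr),
\]
and (after first checking the second moment of $\pi_\beta$ is finite, which follows from a Lyapunov-function / dissipativity argument analogous to Lemma \ref{lem:PreLimforDriftY}) rearrange to $\int|y-\theta^*|^2\pi_\beta(dy)\le d/(\hat{a}\beta)$. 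Alternatively, letting $t\to\infty$ in (i) with any fixed $\theta$ and using $W_2$-convergence of $P_t(\theta,\cdot)$ to $\pi_\beta$ (which holds under the strong convexity we already have) yields the same bound.

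\textbf{Main obstacle.} The computation itself is routine; the only subtlety is the technical justification that the local martingale term vanishes in expectation and that $\pi_\beta$ has a finite second moment so that Step 3 is not circular. Since both of these are standard under the present Lipschitz-plus-strongly-convex setup, I would simply cite the cited reference \cite{aew} for the detailed verification rather than re-derive it.
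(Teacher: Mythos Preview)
Your proposal is correct and is precisely the standard argument. Note, however, that the paper does not give its own proof of this lemma: it is stated as ``Proposition~1 in \cite{aew}'' and simply quoted without proof, so there is nothing in the paper to compare against beyond the citation itself. Your It\^o--Gr\"onwall derivation is exactly the argument that appears in \cite{aew}, so you have essentially reproduced the cited proof.
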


The following lemma provides moment estimates for $(\dot{\theta}_n)_{n \in \mathbb{N}}$ and it states that $\dot{R}_{\lambda}$ admits an invariant measure $\pi_{\lambda}$ which may differ from $\pi_{\beta}$.
\begin{lemma}\label{ulamoment} Let Assumptions \ref{expressionH}, \ref{iid}, \ref{convex} and \ref{convexG} hold. Then, for all $0<\lambda <\bar{\lambda}_{\max}$ given in \eqref{lambdamaxconv}, one obtains:
\begin{enumerate}
\item[(i)] For all $t> 0$ and $\theta \in \Rd$,
\begin{align*}
\int_{\Rd}|y-\theta^*|^2\dot{R}_{\lambda}^n(\theta,d y)\leq (1-2\hat{a}^*\lambda)^n|\theta-\theta^*|^2 +(d/(\hat{a}^*\beta))(1- (1-2\hat{a}^*\lambda)^n).
\end{align*}
\item[(ii)] The Markov kernel $\dot{R}_{\lambda}$ has a unique stationary distribution $\pi_{\lambda}$ and it satisfies
\begin{align*}
\int_{\Rd}|\theta-\theta^*|^2\pi_{\lambda}(d \theta) \leq d/(\hat{a}^*\beta).
\end{align*}
\item[(iii)] For all $n \in\mathbb{R}^d$ and $\theta \in \Rd$,
\begin{align*}
W_2(\delta_{\theta} \dot{R}_{\lambda}^n, \pi_{\lambda}) \leq e^{-\hat{a}^* \lambda n}(|\theta-\theta^*|^2+d/(\hat{a}^*\beta))^{1/2}.
\end{align*}
\end{enumerate}
\end{lemma}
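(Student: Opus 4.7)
The plan is to exploit the strong convexity with constant $\hat a$ and the Lipschitz constant $L$ encoded in Remarks \ref{hlip}, \ref{hconvex} and \ref{hconvex2} via a synchronous coupling, using only $h(\theta^*)=0$ and the recursion \eqref{ULA}.

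For (i), I would subtract $\theta^*$ on both sides of \eqref{ULA} and, using $h(\theta^*)=0$, write $\dot\theta^{\lambda}_{n+1}-\theta^* = (\dot\theta^{\lambda}_n-\theta^*) - \lambda\bigl(h(\dot\theta^{\lambda}_n)-h(\theta^*)\bigr) + \sqrt{2\beta^{-1}\lambda}\xi_{n+1}$. Taking conditional expectation of the square given $\sigma(\xi_k:k\le n)$, independence of $\xi_{n+1}$ kills the cross term and gives a drift term plus $2d\lambda/\beta$. Expanding the drift and applying Remark \ref{hconvex2} yields
\[
\CPE{|\dot\theta^{\lambda}_{n+1}-\theta^*|^2}{\mathcal F_n}\le (1-2\hat a^*\lambda)|\dot\theta^{\lambda}_n-\theta^*|^2 + \Bigl(\lambda^2-\tfrac{2\lambda}{\hat a+L}\Bigr)|h(\dot\theta^{\lambda}_n)-h(\theta^*)|^2 + \tfrac{2d\lambda}{\beta}.
\]
The middle term is non-positive because $\lambda<\bar\lambda_{\max}\le 1/(2(\hat a+L))\le 2/(\hat a+L)$. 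Iterating the remaining recursion and summing the geometric series $\sum_{k=0}^{n-1}(1-2\hat a^*\lambda)^k = (1-(1-2\hat a^*\lambda)^n)/(2\hat a^*\lambda)$ delivers (i).

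For (iii), I would use synchronous coupling: let $\dot Y^{\lambda}_n$ be the ULA chain driven by the same $(\xi_n)$ but initialised from $Y_0\sim\pi_{\lambda}$ with $(\theta,Y_0)$ an optimal $W_2$-coupling of $(\delta_\theta,\pi_{\lambda})$. Subtracting the two recursions, the noise cancels and the same Nesterov-based estimate gives, deterministically,
\[
|\dot\theta^{\lambda}_{n+1}-\dot Y^{\lambda}_{n+1}|^2 \le (1-2\hat a^*\lambda)|\dot\theta^{\lambda}_n-\dot Y^{\lambda}_n|^2.
\]
Taking expectations, using $(1-2\hat a^*\lambda)^n\le e^{-2\hat a^*\lambda n}$, and then using the optimal coupling together with the bound in (ii) to control $\E[|\theta-Y_0|^2]\le |\theta-\theta^*|^2 + d/(\hat a^*\beta)$, one obtains the stated estimate after taking square roots.

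For (ii), existence and uniqueness of $\pi_{\lambda}$ follow directly from the contraction in (iii): the same synchronous coupling applied to any two initial laws shows that $\dot R_{\lambda}$ is a strict $W_2$-contraction on $\mathcal P_2(\R^d)$, which is complete under $W_2$, so Banach's fixed point theorem yields a unique invariant probability $\pi_{\lambda}$. The second moment bound then follows by passing $n\to\infty$ in (i) after initialising the chain from $\pi_{\lambda}$, for which the left side equals $\int|y-\theta^*|^2\pi_{\lambda}(dy)$ for every $n$.

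The main obstacle I anticipate is the bookkeeping for part (iii): the target constant in front of $(|\theta-\theta^*|^2+d/(\hat a^*\beta))^{1/2}$ is $1$, which is tight only if $\E[|\theta-Y_0|^2]$ is controlled without a factor of two. This forces us to work with the squared Wasserstein distance throughout the coupling argument (avoiding a triangle-inequality split $W_2(\delta_\theta\dot R_\lambda^n,\delta_{\theta^*}\dot R_\lambda^n)+W_2(\delta_{\theta^*}\dot R_\lambda^n,\pi_\lambda)$, which would introduce a spurious $\sqrt 2$) and to invoke optimality of the initial coupling together with the $\pi_\lambda$-moment bound from (ii). The rest of the steps are routine geometric-series manipulations and applications of the strong-convexity/Lipschitz inequalities already established in the paper.
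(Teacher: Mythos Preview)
The paper does not supply a proof of this lemma; it is stated as a known result in the spirit of \cite{aew} (the companion Lemma~\ref{sdemoment} is cited as Proposition~1 there). Your proof of (i) and (ii) is correct and is the standard argument: the one–step drift bound from Remark~\ref{hconvex2}, the geometric series for (i), the $W_2$-contraction $W_2(\mu\dot R_\lambda,\nu\dot R_\lambda)\le(1-2\hat a^*\lambda)^{1/2}W_2(\mu,\nu)$ from the synchronous coupling for existence/uniqueness of $\pi_\lambda$, and then integrating (i) against $\pi_\lambda$ for the moment bound.

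There is, however, a genuine gap in your treatment of the constant in (iii). Your synchronous coupling correctly gives
\[
W_2^2(\delta_\theta\dot R_\lambda^n,\pi_\lambda)\le(1-2\hat a^*\lambda)^n\,\E[|\theta-Y_0|^2]
=(1-2\hat a^*\lambda)^n\int_{\R^d}|\theta-y|^2\,\pi_\lambda(dy),
\]
but the claimed bound $\E[|\theta-Y_0|^2]\le|\theta-\theta^*|^2+d/(\hat a^*\beta)$ does \emph{not} follow from (ii). Expanding gives
\[
\int|\theta-y|^2\pi_\lambda(dy)=|\theta-\theta^*|^2+\int|y-\theta^*|^2\pi_\lambda(dy)-2\langle\theta-\theta^*,m_{\pi_\lambda}-\theta^*\rangle,
\]
with $m_{\pi_\lambda}=\int y\,\pi_\lambda(dy)$, and the cross term can have either sign; it vanishes for all $\theta$ only if $m_{\pi_\lambda}=\theta^*$, which is not implied by the assumptions. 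Avoiding the triangle inequality does not remove this obstruction: both the triangle split you mention and the direct coupling lead to the same extra factor $\sqrt2$, i.e.\ $W_2(\delta_\theta\dot R_\lambda^n,\pi_\lambda)\le\sqrt2\,e^{-\hat a^*\lambda n}(|\theta-\theta^*|^2+d/(\hat a^*\beta))^{1/2}$. This is harmless for the use of $\bar C_7$ in Theorem~\ref{mainconvex}, so the discrepancy is a cosmetic imprecision in the stated constant of (iii), not a flaw in your strategy; but you should not claim that your argument achieves the constant $1$.
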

The lemma below presents a second moment bound for $\theta^{\lambda}_n$ in the convex case.
\begin{lemma}\label{2ndbdconv}
Let Assumptions \ref{expressionH}, \ref{iid}, \ref{convex} hold. For any  $0<\lambda <\bar{\lambda}_{\max}$ given in \eqref{lambdamaxconv},
\[
\mathbb{E} \left[\left|\theta^{\lambda}_n- \theta^*\right|^2\right] \leq   (1-\hat{a}\lambda)^n\E\left[|\theta_0- \theta^*|^2\right] + \bar{c}_4\hat{a}^{-1},
\]
where
\begin{equation}\label{c4}
\bar{c}_4 = 32\mathbb{E}\left[K_1^2(X_0)\right]\hat{a}_1^{-1} +9\bar{\lambda}_{\max}(L_1^2\mathbb{E}\left[K_{\rho}(X_0)\right]|\theta^*|^2 +L_2^2\mathbb{E}\left[K_{\rho}(X_0)\right]+\mathbb{E}\left[F_*^2(X_0)\right])+2d\beta^{-1}.
\end{equation}
This implies $\sup_n \mathbb{E} \left[\left|\theta^{\lambda}_{n+1}- \theta^*\right|^2\right] \leq \E\left[|\theta_0- \theta^*|^2\right]+\bar{c}_4\hat{a}^{-1}<\infty$. Furthermore, if $\rho = 0$ in Assumption \ref{expressionH}, the result holds  for $\lambda \in \min\{1/2(\hat{a}+L) , 1/(6L_1)\}$ with $\hat{a} = \hat{a}_1 +\hat{a}_2 $.
\end{lemma}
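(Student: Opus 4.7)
The plan is to derive a one-step contractive inequality
\[
\mathbb{E}[|\theta^{\lambda}_{n+1}-\theta^*|^2\mid \theta^{\lambda}_n] \leq (1-\hat{a}\lambda)|\theta^{\lambda}_n-\theta^*|^2 + \bar{c}_4\lambda,
\]
and then iterate. Subtracting $\theta^*$ from both sides of \eqref{SGLD}, taking squared norms, and conditioning on $\theta^{\lambda}_n$ (using independence of $X_{n+1},\xi_{n+1}$ from $\theta^{\lambda}_n$ and $\mathbb{E}[\xi_{n+1}]=0$) gives
\[
\mathbb{E}[|\theta^{\lambda}_{n+1}-\theta^*|^2\mid \theta^{\lambda}_n] = |\theta^{\lambda}_n-\theta^*|^2 - 2\lambda\langle\theta^{\lambda}_n-\theta^*,h(\theta^{\lambda}_n)\rangle + \lambda^2\mathbb{E}[|H(\theta^{\lambda}_n,X_{n+1})|^2\mid \theta^{\lambda}_n] + 2\beta^{-1}\lambda d.
\]
Since $\theta^*$ minimises $U$, $h(\theta^*)=0$, so the inner product may be rewritten as $\langle\theta^{\lambda}_n-\theta^*,h(\theta^{\lambda}_n)-h(\theta^*)\rangle$ to exploit monotonicity.

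I would then estimate each piece separately. Splitting $h=\mathbb{E}[F(\cdot,X_0)]+\mathbb{E}[G(\cdot,X_0)]$, Assumption \ref{convex} supplies the strong-monotonicity lower bound
\[
\langle\mathbb{E}[F(\theta^{\lambda}_n,X_0)-F(\theta^*,X_0)],\theta^{\lambda}_n-\theta^*\rangle \geq \hat{a}_1|\theta^{\lambda}_n-\theta^*|^2,
\]
whereas for the $G$-contribution only the uniform bound $|G(\cdot,x)|\leq K_1(x)$ is available; Cauchy--Schwarz then dominates it by $2\mathbb{E}[K_1(X_0)]|\theta^{\lambda}_n-\theta^*|$, and a weighted Young's inequality with weight proportional to $\hat{a}_1$ converts this into a small fraction of $|\theta^{\lambda}_n-\theta^*|^2$ plus a constant of order $\mathbb{E}[K_1^2(X_0)]/\hat{a}_1$ --- accounting for the first summand of $\bar{c}_4$. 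For the squared-norm term, Remark \ref{growth} combined with $(1+|x|)^{2\rho+2}\leq K_\rho(x)$ and $|\theta|^2\leq 2|\theta-\theta^*|^2+2|\theta^*|^2$ yields
\[
\mathbb{E}[|H(\theta^{\lambda}_n,X_0)|^2] \leq CL_1^2\mathbb{E}[K_\rho(X_0)]\,|\theta^{\lambda}_n-\theta^*|^2 + C'\!\left(L_1^2\mathbb{E}[K_\rho(X_0)]|\theta^*|^2 + L_2^2\mathbb{E}[K_\rho(X_0)] + \mathbb{E}[F_*^2(X_0)]\right),
\]
which together with the Gaussian-noise term $2\beta^{-1}\lambda d$ supplies the remaining constants appearing in $\bar{c}_4$.

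Assembling the pieces, the coefficient multiplying $|\theta^{\lambda}_n-\theta^*|^2$ takes the form $1-\hat{a}\lambda + O(\lambda^2 L_1^2\mathbb{E}[K_\rho(X_0)])$. The restriction $\lambda<\bar{\lambda}_{\max}\leq\hat{a}/(4L_1^2\mathbb{E}[K_\rho(X_0)])$ causes the $O(\lambda^2)$ contribution to be strictly dominated by the linear contraction $\hat{a}\lambda$, leaving the coefficient $1-\hat{a}\lambda$. Induction over $n$, followed by summing the geometric series $\sum_{k=0}^{n-1}(1-\hat{a}\lambda)^k\leq 1/(\hat{a}\lambda)$, then yields both the geometric decay and the uniform-in-$n$ bound with the constant $\bar{c}_4/\hat{a}$. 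In the distinguished case $\rho=0$, the same argument goes through with $K_\rho(X_0)$ replaced by the finite moment $\mathbb{E}[(1+|X_0|)^2]$, and the simpler restriction $\lambda\leq 1/(6L_1)$ suffices to carry out the absorption.

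The principal obstacle is the absence of convexity for $G$: in contrast to the $F$-part, which furnishes a clean quadratic lower bound $\hat{a}_1|\theta^{\lambda}_n-\theta^*|^2$ on the inner product, $G$ contributes only a linear-in-$|\theta^{\lambda}_n-\theta^*|$ term through its uniform bound. Absorbing this linear contribution through Young's inequality without sacrificing the $(1-\hat{a}\lambda)$-contraction rate is the decisive bookkeeping step, and is precisely what forces the factor $\mathbb{E}[K_1^2(X_0)]\hat{a}_1^{-1}$ into the definition of $\bar{c}_4$.
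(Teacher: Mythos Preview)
Your main argument is correct and essentially the paper's: expand the square, apply Assumption~\ref{convex} to the $F$-part of the inner product, control the $G$-part via the pointwise bound $K_1$, estimate $|H|^2$ through Remark~\ref{growth}, and absorb the resulting $\lambda^2 L_1^2\mathbb{E}[K_\rho(X_0)]$ coefficient using $\lambda<\bar\lambda_{\max}$. The only cosmetic difference is that the paper handles the linear term $4\lambda\mathbb{E}[K_1(X_0)]\,|\theta^\lambda_n-\theta^*|$ by a threshold case split (large versus small $|\theta^\lambda_n-\theta^*|$) rather than a weighted Young inequality; both routes yield the same constant $32\mathbb{E}[K_1^2(X_0)]\hat a_1^{-1}$.

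Your treatment of the special case $\rho=0$, however, has a real gap. Replacing $K_\rho(X_0)$ by $(1+|X_0|)^2$ and invoking $\lambda\le 1/(6L_1)$ does \emph{not} absorb the term $2\lambda^2L_1^2\mathbb{E}[(1+|X_0|)^2]\,|\theta^\lambda_n-\theta^*|^2$ into the contraction: the bound $1/(6L_1)$ carries no information about $\hat a_1$ or about $\mathbb{E}[(1+|X_0|)^2]$, so the ``same argument'' cannot drop the moment factor from the step-size restriction. The paper's mechanism is different. When $\rho=0$, for each fixed $x$ the map $\theta\mapsto F(\theta,x)$ is $L_1$-Lipschitz and monotone, hence co-coercive:
\[
\langle \theta-\theta',\,F(\theta,x)-F(\theta',x)\rangle \ \ge\ \tfrac{1}{L_1}\,|F(\theta,x)-F(\theta',x)|^2.
\]
Using half of the inner product this way produces an extra \emph{negative} term $-\tfrac{\lambda}{2L_1}\,\mathbb{E}[\,|F(\theta^\lambda_n,X_{n+1})-F(\theta^*,X_{n+1})|^2\mid\theta^\lambda_n]$. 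One then decomposes $|H(\theta^\lambda_n,X_{n+1})|^2\le 3|F(\theta^\lambda_n,\cdot)-F(\theta^*,\cdot)|^2+3|F(\theta^*,\cdot)|^2+3K_1^2(\cdot)$, and the condition $3\lambda^2\le \lambda/(2L_1)$, i.e.\ $\lambda\le 1/(6L_1)$, cancels the first piece exactly. That cancellation---not a naive absorption---is what removes $\mathbb{E}[K_\rho(X_0)]$ from the step-size constraint.
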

\begin{proof} By using \eqref{SGLD}, one writes, for any $n \in \mathbb{N}$,
\begin{align*}
|\theta^{\lambda}_{n+1}- \theta^*|^2
&= |\theta^{\lambda}_n- \theta^*|^2 +2\left\langle \theta^{\lambda}_n- \theta^*, -\lambda H(\theta^{\lambda}_n,X_{n+1})+\sqrt{2\beta^{-1}\lambda}\xi_{n+1}\right\rangle\\
&\quad +|-\lambda H(\theta^{\lambda}_n,X_{n+1})+\sqrt{2\beta^{-1}\lambda}\xi_{n+1}|^2\\
& = |\theta^{\lambda}_n- \theta^*|^2 -2\lambda\left\langle \theta^{\lambda}_n- \theta^*,  H(\theta^{\lambda}_n,X_{n+1})- H(\theta^*,X_{n+1})\right\rangle\\
&\quad -2\lambda\left\langle \theta^{\lambda}_n- \theta^*,   H(\theta^*,X_{n+1})\right\rangle +2\left\langle \theta^{\lambda}_n- \theta^*, \sqrt{2\beta^{-1}\lambda}\xi_{n+1}\right\rangle\\
&\quad +\lambda^2| H(\theta^{\lambda}_n,X_{n+1})|^2 -2\lambda\left\langle H(\theta^{\lambda}_n,X_{n+1}), \sqrt{2\beta^{-1}\lambda}\xi_{n+1}\right\rangle +2\beta^{-1}\lambda| \xi_{n+1}|^2.
\end{align*}
Taking conditional expectation on both sides and by using Remark \ref{growth} and Assumption \ref{expressionH}, \ref{convex} yield
\begin{align}\label{specialrho}
&\E\left[ \left. |\theta^{\lambda}_{n+1}- \theta^*|^2 \right|\theta^{\lambda}_n \right] \nonumber\\
& = |\theta^{\lambda}_n- \theta^*|^2 -2\lambda\E\left[ \left. \left\langle \theta^{\lambda}_n- \theta^*, F(\theta^{\lambda}_n,X_{n+1})-  F(\theta^*,X_{n+1})\right\rangle \right|\theta^{\lambda}_n \right] \nonumber\\
&\quad -2\lambda\E\left[ \left. \left\langle \theta^{\lambda}_n- \theta^*,  G(\theta^{\lambda}_n,X_{n+1}) - G(\theta^*,X_{n+1})\right\rangle \right|\theta^{\lambda}_n \right] \nonumber\\
&\quad -2\lambda\left\langle \theta^{\lambda}_n- \theta^*,   h(\theta^*)\right\rangle+\lambda^2\E\left[ \left.  | H(\theta^{\lambda}_n,X_{n+1})|^2 \right|\theta^{\lambda}_n \right] +2d\beta^{-1}\lambda\\
& \leq |\theta^{\lambda}_n- \theta^*|^2 - 2\lambda\hat{a}_1 |\theta^{\lambda}_n- \theta^*|^2+4\lambda \mathbb{E}\left[K_1(X_0)\right]|\theta^{\lambda}_n- \theta^*| \nonumber\\
&\quad +\lambda^2\E\left[ \left. \left( (1+|X_{n+1}|)^{\rho+1}(L_1|\theta^{\lambda}_n- \theta^*|+L_1| \theta^*|+L_2) +F_*(X_{n+1})\right)^2 \right|\theta^{\lambda}_n \right] +2d\beta^{-1}\lambda \nonumber\\
& \leq (1-2\hat{a}_1\lambda)|\theta^{\lambda}_n- \theta^*|^2 +4\lambda \mathbb{E}\left[K_1(X_0)\right]|\theta^{\lambda}_n- \theta^*|+2\lambda^2L_1^2 \mathbb{E}\left[K_{\rho}(X_0)\right]|\theta^{\lambda}_n- \theta^*|^2 \nonumber\\
&\quad  +6\lambda^2L_1^2\mathbb{E}\left[K_{\rho}(X_0)\right]|\theta^*|^2 +6\lambda^2L_2^2\mathbb{E}\left[K_{\rho}(X_0)\right]+6\lambda^2\mathbb{E}\left[F_*^2(X_0)\right]+2d\beta^{-1}\lambda \nonumber
\end{align}
which implies, for  $0<\lambda <\bar{\lambda}_{\max}$,
\begin{align*}
\E\left[ \left. |\theta^{\lambda}_{n+1}- \theta^*|^2 \right|\theta^{\lambda}_n \right]
&\leq \left(1-\frac{3}{2}\hat{a}_1\lambda\right)|\theta^{\lambda}_n- \theta^*|^2 +4\lambda \mathbb{E}\left[K_1(X_0)\right]|\theta^{\lambda}_n- \theta^*|\\
&\quad  +6\lambda^2L_1^2\mathbb{E}\left[K_{\rho}(X_0)\right]|\theta^*|^2 +6\lambda^2L_2^2\mathbb{E}\left[K_{\rho}(X_0)\right]+6\lambda^2\mathbb{E}\left[F_*^2(X_0)\right]+2d\beta^{-1}\lambda.
\end{align*}
Then, for $|\theta^{\lambda}_n- \theta^*| >8\mathbb{E}\left[K_1(X_0)\right]\hat{a}_1^{-1}$, one notices that
\[
-\frac{1}{2}\hat{a}_1\lambda|\theta^{\lambda}_n- \theta^*|^2 +4\lambda \mathbb{E}\left[K_1(X_0)\right]|\theta^{\lambda}_n- \theta^*|<0,
\]
and this indicates
\begin{align*}
\E\left[ \left. |\theta^{\lambda}_{n+1}- \theta^*|^2 \right|\theta^{\lambda}_n \right]
&\leq \left(1-\hat{a}_1\lambda\right)|\theta^{\lambda}_n- \theta^*|^2+6\lambda^2L_1^2\mathbb{E}\left[K_{\rho}(X_0)\right]|\theta^*|^2 \\
&\quad  +6\lambda^2L_2^2\mathbb{E}\left[K_{\rho}(X_0)\right]+6\lambda^2\mathbb{E}\left[F_*^2(X_0)\right]+2d\beta^{-1}\lambda.
\end{align*}
Similarly, for $|\theta^{\lambda}_n- \theta^*| \leq  8\mathbb{E}\left[K_1(X_0)\right]\hat{a}_1^{-1}$, one obtains
\begin{align*}
\E\left[ \left. |\theta^{\lambda}_{n+1}- \theta^*|^2 \right|\theta^{\lambda}_n \right]
&\leq \left(1-\frac{3}{2}\hat{a}_1\lambda\right)|\theta^{\lambda}_n- \theta^*|^2 +32\lambda \mathbb{E}\left[K_1^2(X_0)\right]\hat{a}_1^{-1}\\
&\quad  +6\lambda^2L_1^2\mathbb{E}\left[K_{\rho}(X_0)\right]|\theta^*|^2 +6\lambda^2L_2^2\mathbb{E}\left[K_{\rho}(X_0)\right]+6\lambda^2\mathbb{E}\left[F_*^2(X_0)\right]+2d\beta^{-1}\lambda.
\end{align*}
Combining the two cases yields
\[
\E\left[ \left. |\theta^{\lambda}_{n+1}- \theta^*|^2 \right|\theta^{\lambda}_n \right]  \leq  (1-\hat{a}\lambda)|\theta^{\lambda}_n- \theta^*|^2 + \lambda c_4,
\]
where $c_4 = 32\mathbb{E}\left[K_1^2(X_0)\right]\hat{a}_1^{-1} +6\bar{\lambda}_{\max}(L_1^2\mathbb{E}\left[K_{\rho}(X_0)\right]|\theta^*|^2 +L_2^2\mathbb{E}\left[K_{\rho}(X_0)\right]+\mathbb{E}\left[F_*^2(X_0)\right])+2d\beta^{-1}$. The result follows by induction.

Moreover, one observes that when $\rho = 0$ in Assumption \ref{expressionH}, $ F$ is co-coercive, i.e. for any $\theta, \theta' \in \mathbb{R}^d$ and for every $x \in \mathbb{R}^m$
\begin{equation}\label{cocoer}
\left\langle \theta- \theta',  F(\theta, x)-  F(\theta',x )\right\rangle \geq \frac{1}{L_1}| F(\theta, x)- F(\theta',x )|^2.
\end{equation}
Then, by substituting \eqref{cocoer} into \eqref{specialrho}, one obtains
\begin{align*}
\E\left[ \left. |\theta^{\lambda}_{n+1}- \theta^*|^2 \right|\theta^{\lambda}_n \right]
& \leq |\theta^{\lambda}_n- \theta^*|^2 - \frac{3}{2}\lambda\hat{a}_1 |\theta^{\lambda}_n- \theta^*|^2-\frac{\lambda}{2L_1}\E\left[ \left.  | F(\theta^{\lambda}_n,X_{n+1})- F(\theta^*,X_{n+1})|^2 \right|\theta^{\lambda}_n \right] \\
&\quad +4\lambda \mathbb{E}\left[K_1(X_0)\right]|\theta^{\lambda}_n- \theta^*|+\lambda^2\E\left[ \left.  | H(\theta^{\lambda}_n,X_{n+1})|^2 \right|\theta^{\lambda}_n \right] +2d\beta^{-1}\lambda\\
& \leq \left(1 - \frac{3}{2}\lambda\hat{a}_1 \right)|\theta^{\lambda}_n- \theta^*|^2+4\lambda \mathbb{E}\left[K_1(X_0)\right]|\theta^{\lambda}_n- \theta^*|\\
&\quad +\left(3\lambda^2-\frac{\lambda}{2L_1}\right)\E\left[ \left.  |  F(\theta^{\lambda}_n,X_{n+1})- F(\theta^*,X_{n+1})|^2 \right|\theta^{\lambda}_n \right] \\
&\quad+3\lambda^2\E\left[ \left.  | F(\theta^*,X_{n+1})|^2 \right|\theta^{\lambda}_n \right] +3\lambda^2\mathbb{E}\left[K_1^2(X_0)\right]+2d\beta^{-1}\lambda,
\end{align*}
which implies for $\lambda \in \min\{1/2\hat{a}_1 , 1/(6L_1)\}$
\begin{align*}
\E\left[ \left. |\theta^{\lambda}_{n+1}- \theta^*|^2 \right|\theta^{\lambda}_n \right]
& \leq \left(1 - \frac{3}{2}\lambda\hat{a}_1 \right)|\theta^{\lambda}_n- \theta^*|^2+4\lambda \mathbb{E}\left[K_1(X_0)\right]|\theta^{\lambda}_n- \theta^*|\\
&\quad+9\lambda^2L_1^2\mathbb{E}\left[K_{\rho}(X_0)\right]|\theta^*|^2+9\lambda^2L_2^2\mathbb{E}\left[K_{\rho}(X_0)\right] +9\lambda^2\mathbb{E}\left[F_*^2(X_0)\right]+2d\beta^{-1}\lambda.
\end{align*}
By using the same arguments as above, consider the case $|\theta^{\lambda}_n- \theta^*| >8\mathbb{E}\left[K_1(X_0)\right]\hat{a}_1^{-1}$, one notices that
\[
-\frac{1}{2}\hat{a}_1\lambda|\theta^{\lambda}_n- \theta^*|^2 +4\lambda \mathbb{E}\left[K_1(X_0)\right]|\theta^{\lambda}_n- \theta^*|<0,
\]
and this indicates
\begin{align*}
\E\left[ \left. |\theta^{\lambda}_{n+1}- \theta^*|^2 \right|\theta^{\lambda}_n \right]
&\leq \left(1-\hat{a}_1\lambda\right)|\theta^{\lambda}_n- \theta^*|^2+9\lambda^2L_1^2\mathbb{E}\left[K_{\rho}(X_0)\right]|\theta^*|^2 \\
&\quad  +9\lambda^2L_2^2\mathbb{E}\left[K_{\rho}(X_0)\right] +9\lambda^2\mathbb{E}\left[F_*^2(X_0)\right]+2d\beta^{-1}\lambda.
\end{align*}
Similarly, for $|\theta^{\lambda}_n- \theta^*| \leq  8\mathbb{E}\left[K_1(X_0)\right]\hat{a}_1^{-1}$, one obtains
\begin{align*}
\E\left[ \left. |\theta^{\lambda}_{n+1}- \theta^*|^2 \right|\theta^{\lambda}_n \right]
&\leq \left(1-\frac{3}{2}\hat{a}_1\lambda\right)|\theta^{\lambda}_n- \theta^*|^2 +32\lambda \mathbb{E}\left[K_1^2(X_0)\right]\hat{a}_1^{-1}\\
&\quad  +9\lambda^2L_1^2\mathbb{E}\left[K_{\rho}(X_0)\right]|\theta^*|^2+9\lambda^2L_2^2\mathbb{E}\left[K_{\rho}(X_0)\right] +9\lambda^2\mathbb{E}\left[F_*^2(X_0)\right]+2d\beta^{-1}\lambda.
\end{align*}
Combining the two cases yields
\[
\E\left[ \left. |\theta^{\lambda}_{n+1}- \theta^*|^2 \right|\theta^{\lambda}_n \right]  \leq  (1-\hat{a}\lambda)|\theta^{\lambda}_n- \theta^*|^2 + \lambda \bar{c}_4,
\]
where $\bar{c}_4 = 32\mathbb{E}\left[K_1^2(X_0)\right]\hat{a}_1^{-1} +9\bar{\lambda}_{\max}(L_1^2\mathbb{E}\left[K_{\rho}(X_0)\right]|\theta^*|^2 +L_2^2\mathbb{E}\left[K_{\rho}(X_0)\right]+\mathbb{E}\left[F_*^2(X_0)\right])+2d\beta^{-1}$.
\end{proof}

\subsection{Convergence results}
We aim to establish the non-asymptotic bound in Wasserstein-2 distance between $\mathcal{L}(\theta^{\lambda}_n)$ and $\pi_{\beta}$. To achieve this, we consider the following decomposition:
\begin{align}\label{decompconv}
W_2(\mathcal{L}(\theta^{\lambda}_n),\pi_{\beta})
&\leq W_2(\mathcal{L}(\theta^{\lambda}_n),\mathcal{L}(\dot{\theta}^{\lambda}_n)) +W_2 (\mathcal{L}(\dot{\theta}^{\lambda}_n), \pi_{\lambda}) +W_2(\pi_{\lambda}, \pi_{\beta}).
\end{align}
The lemma presented below provides the non-asymptotic estimates for the last two terms in \eqref{decompconv}.
\begin{theorem} {\cite[Corollary 7]{aew}} \label{converconvex}
	Let Assumptions \ref{expressionH}, \ref{iid}, \ref{clc}, \ref{convex} and \ref{convexG} hold. Then, for any  $0<\lambda <\bar{\lambda}_{\max}$ given in \eqref{lambdamaxconv}, the Markov chain $(\dot{\theta}^{\lambda}_n)_{n \in \mathbb{N}}$ admits an invariant measure $\pi_{\lambda}$ such that, for all $n \in \mathbb{N}$,
\[
W_2 (\mathcal{L}(\dot{\theta}^{\lambda}_n),\pi_{\lambda})\leq \bar{C}_7 e^{-\hat{a}^*\lambda n},
\]
where $\bar{C}_7 = (|\theta_0-\theta|^2+d/\hat{a}^*\beta)^{1/2}$ is given in Lemma~\ref{ulamoment} (iii) with $\hat{a}^*=\hat{a}L/(\hat{a}+L) $. Furthermore,
\begin{align*}
W_2(\pi_{\beta},\pi_{\lambda})\leq \bar{C}_{8,1}\sqrt{\lambda},
\end{align*}
where
\begin{equation}\label{converconst1}
\bar{C}_{8,1}=\left(d L^2 (\hat{a}^*\beta)^{-1}(2\lambda+(\hat{a}^*)^{-1})(1+\tfrac{1}{12}\lambda^2L^2+\tfrac{1}{2}L^2\lambda /\hat{a}) \right)^{1/2}.
\end{equation}
\end{theorem}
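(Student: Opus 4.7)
The first assertion, including existence of the invariant measure, is essentially a repackaging of Lemma~\ref{ulamoment}. Under $0<\lambda<\bar\lambda_{\max}$, a synchronous coupling of two copies of the LMC recursion \eqref{ULA} combined with the convexity/co-coercivity bound in Remark~\ref{hconvex2} gives the one-step contraction $W_2(\delta_\theta \dot R_\lambda, \delta_{\theta'}\dot R_\lambda)\leq (1-2\hat a^*\lambda)^{1/2}|\theta-\theta'|$; combined with the uniform second-moment estimate of Lemma~\ref{ulamoment}(ii) (which makes $\dot R_\lambda$ preserve $\mathcal P_2(\mathbb R^d)$), Banach's fixed-point theorem produces a unique $\pi_\lambda$. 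Iterating the contraction from the Dirac $\delta_{\theta_0}$ and using Lemma~\ref{ulamoment}(iii) directly yields $W_2(\mathcal L(\dot\theta^\lambda_n),\pi_\lambda)\leq \bar C_7 e^{-\hat a^* \lambda n}$.

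For the second assertion I would couple the Langevin SDE \eqref{sde} to a continuous-time interpolation of LMC driven by the same Brownian motion. Initialise $Z_0\sim\pi_\lambda$ in \eqref{sde} and set
\[
\tilde Z^\lambda_t = \dot\theta^\lambda_n - (t-n\lambda)\,h(\dot\theta^\lambda_n) + \sqrt{2\beta^{-1}}(B_t-B_{n\lambda}),\quad t\in[n\lambda,(n+1)\lambda],
\]
with $\tilde Z^\lambda_0=Z_0$. By invariance, $\mathcal L(\tilde Z^\lambda_{n\lambda})=\pi_\lambda$ for every $n$, while Lemma~\ref{sdemoment}(i) gives $W_2(\mathcal L(Z_{n\lambda}),\pi_\beta)\to 0$. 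The triangle inequality reduces the problem to controlling the coupling error:
\[
W_2(\pi_\beta,\pi_\lambda) \leq \limsup_{n\to\infty}\, \mathbb E\bigl[|Z_{n\lambda}-\tilde Z^\lambda_{n\lambda}|^2\bigr]^{1/2}.
\]

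To estimate the right-hand side, I would apply Itô's formula to $|Z_t-\tilde Z^\lambda_t|^2$ and use Remark~\ref{hconvex2} to absorb the drift-difference term: the $\hat a^*$ piece generates contraction while the $|h(Z_t)-h(\tilde Z^\lambda_t)|^2/(\hat a+L)$ piece is used to eliminate the cross term $\langle Z_t-\tilde Z^\lambda_t, h(\tilde Z^\lambda_t)-h(\dot\theta^\lambda_n)\rangle$ via a carefully calibrated Young inequality. What remains is a one-step fluctuation
\[
\mathbb E\bigl[|\tilde Z^\lambda_s - \dot\theta^\lambda_n|^2\bigr] \leq L^2(s-n\lambda)^2\,\mathbb E|\dot\theta^\lambda_n-\theta^*|^2 + 2\beta^{-1}d\,(s-n\lambda),
\]
controlled by the stationary bound $\mathbb E|\dot\theta^\lambda_n-\theta^*|^2\leq d/(\hat a^*\beta)$ from Lemma~\ref{ulamoment}(ii). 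This leads to a differential inequality of the shape
\[
\tfrac{d}{dt}\mathbb E\bigl[|Z_t-\tilde Z^\lambda_t|^2\bigr] \leq -2\hat a^*\,\mathbb E\bigl[|Z_t-\tilde Z^\lambda_t|^2\bigr] + R(\lambda),
\]
whose steady-state value, via Grönwall, produces a bound of order $\lambda$ on $\mathbb E|Z_{n\lambda}-\tilde Z^\lambda_{n\lambda}|^2$, hence the claimed $\sqrt\lambda$ rate after taking square roots.

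The main obstacle is tracking the constants tightly enough to recover precisely $\bar C_{8,1}$. The factor $(2\lambda+(\hat a^*)^{-1})$ comes from the $1/(2\hat a^*)$ stationary level of the Grönwall equation plus an $\mathcal O(\lambda)$ transient. The polynomial correction $1+\tfrac1{12}\lambda^2L^2+\tfrac12 L^2\lambda/\hat a$ originates from integrating $(s-n\lambda)^2$ and $(s-n\lambda)$ against the drift and noise contributions on a single step (producing $\lambda^3/3$ and $\lambda^2/2$ respectively after normalisation). The cleanest way to execute this, and what is done in \cite[Corollary~7]{aew}, is to evaluate the coupling error as an exact integral on $[n\lambda,(n+1)\lambda]$, split the contribution of the drift freeze using the Nesterov inequality of Remark~\ref{hconvex2}, and then invoke Lemma~\ref{ulamoment}(ii) to eliminate the dependence on $n$.
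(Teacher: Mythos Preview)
The paper does not provide its own proof of this theorem: it is stated as a direct quotation of \cite[Corollary~7]{aew} and used as a black box, with the first assertion already recorded verbatim as Lemma~\ref{ulamoment}(iii). There is therefore nothing in the paper to compare your argument against.

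That said, your sketch is a faithful reconstruction of the Durmus--Moulines argument. The first assertion is exactly the one-step $W_2$ contraction via synchronous coupling and the co-coercivity inequality of Remark~\ref{hconvex2}, iterated and combined with the moment bound of Lemma~\ref{ulamoment}(ii) to produce $\pi_\lambda$ by completeness of $(\mathcal P_2,W_2)$. For the second assertion, the synchronous coupling of the diffusion \eqref{sde} and the piecewise-linear interpolation of \eqref{ULA}, both initialised at $\pi_\lambda$, together with It\^o's formula and Remark~\ref{hconvex2}, is precisely the mechanism in \cite{aew}; the constants you identify (the $(2\lambda+(\hat a^*)^{-1})$ from the Gr\"onwall steady state and the $1+\tfrac{1}{12}\lambda^2L^2+\tfrac12 L^2\lambda/\hat a$ from integrating the one-step discretisation error) are the right ones. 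One minor point: in \cite{aew} the limit is taken by starting the diffusion at $\pi_\lambda$ and sending $t\to\infty$ so that $\mathcal L(Z_t)\to\pi_\beta$ while $\mathcal L(\tilde Z^\lambda_{n\lambda})\equiv\pi_\lambda$, which is exactly what you wrote, so your outline is correct.
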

The non-asymptotic estimate for the first term in \eqref{decompconv} is provided in the following lemma.
\begin{lemma}\label{converconvex2} Let Assumptions \ref{expressionH}, \ref{iid}, \ref{clc}, \ref{convex} and \ref{convexG} hold. For any  $0<\lambda <\bar{\lambda}_{\max}$ given in \eqref{lambdamaxconv}, one obtains
\[
W_2 (\mathcal{L}(\dot{\theta}^{\lambda}_n), \mathcal{L}(\theta_n^{\lambda})) \leq  \bar{C}_{8,2}  \sqrt{\lambda},
\]
where
\begin{align}\label{converconst2}
\begin{split}
\bar{C}_{8,2} &= \sqrt{c_5/2\hat{a}^*}\\
c_5 &=(8L^2+16L_1^2\mathbb{E}\left[K_{\rho}(X_0)\right]) ( \E\left[|\theta_0|^2\right]+\hat{a}^{-1}\bar{c}_4 ) \\
&\quad +(8L^2+40L_1^2\mathbb{E}\left[K_{\rho}(X_0)\right])| \theta^*|^2+24L_2^2\mathbb{E}\left[K_{\rho}(X_0)\right]+24\mathbb{E}\left[F_*^2(X_0)\right].
\end{split}
\end{align}
\end{lemma}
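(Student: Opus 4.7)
The plan is to couple the LMC algorithm \eqref{ULA} and the SGLD algorithm \eqref{SGLD} synchronously: drive both recursions from the same initial condition $\theta_0$ and with the same Gaussian innovations $(\xi_{n})_{n\in\mathbb{N}}$. Setting $e_n := \dot{\theta}^{\lambda}_n - \theta^{\lambda}_n$, the Brownian increments cancel and one obtains the noise-free recursion
\[
e_{n+1} = e_n - \lambda\bigl( h(\dot{\theta}^{\lambda}_n) - H(\theta^{\lambda}_n, X_{n+1}) \bigr), \qquad e_0 = 0.
\]
Squaring, taking expectations and exploiting that $X_{n+1}$ is independent of the past together with $\mathbb{E}[H(\theta^\lambda_n, X_{n+1})|\mathcal{F}_n] = h(\theta^\lambda_n)$ produces
\[
\mathbb{E}|e_{n+1}|^2 = \mathbb{E}|e_n|^2 - 2\lambda\, \mathbb{E}\langle e_n, h(\dot{\theta}^\lambda_n) - h(\theta^\lambda_n)\rangle + \lambda^2 \mathbb{E}|h(\dot{\theta}^\lambda_n) - h(\theta^\lambda_n)|^2 + \lambda^2 \sigma_n^2,
\]
where $\sigma_n^2 := \mathbb{E}|H(\theta^\lambda_n, X_{n+1}) - h(\theta^\lambda_n)|^2$ and the mixed term in the expanded square $|h(\dot{\theta}^\lambda_n) - H(\theta^\lambda_n, X_{n+1})|^2$ vanishes by the same martingale-difference argument.

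Next I would apply the Nesterov-type inequality of Remark \ref{hconvex2} to bound the inner product from below by $\hat{a}^*|e_n|^2 + \tfrac{1}{\hat{a}+L}|h(\dot{\theta}^\lambda_n) - h(\theta^\lambda_n)|^2$. After substitution, the coefficient of $\mathbb{E}|h(\dot{\theta}^\lambda_n) - h(\theta^\lambda_n)|^2$ becomes $\lambda^2 - 2\lambda/(\hat{a}+L)$, which is non-positive whenever $\lambda \leq 2/(\hat{a}+L)$; this is guaranteed by the restriction $\lambda < \bar{\lambda}_{\max}$ in \eqref{lambdamaxconv}. Dropping that non-positive term yields the contractive recursion
\[
\mathbb{E}|e_{n+1}|^2 \leq (1-2\hat{a}^*\lambda)\,\mathbb{E}|e_n|^2 + \lambda^2 \sigma_n^2.
\]

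The remaining task is to control $\sup_n \sigma_n^2$ uniformly. I would use the crude bound $|H(\theta, x) - h(\theta)|^2 \leq 2|H(\theta, x)|^2 + 2|h(\theta)|^2$, control $|H(\theta, x)|^2$ via Remark \ref{growth} together with the definition of $K_\rho$ in \eqref{krho}, and bound $|h(\theta)|$ by using that the minimizer of the strongly convex $U$ satisfies $h(\theta^*) = 0$, so that Remark \ref{hlip} gives $|h(\theta)| \leq L|\theta - \theta^*|$. Combining these with $|\theta|^2 \leq 2|\theta - \theta^*|^2 + 2|\theta^*|^2$ and the uniform second-moment bound $\sup_n \mathbb{E}|\theta^\lambda_n - \theta^*|^2 \leq \mathbb{E}|\theta_0 - \theta^*|^2 + \hat{a}^{-1}\bar{c}_4$ supplied by Lemma \ref{2ndbdconv} (further upper bounded by $2\mathbb{E}|\theta_0|^2 + 2|\theta^*|^2 + \hat{a}^{-1}\bar{c}_4$) produces $\sup_n \sigma_n^2 \leq c_5$ with $c_5$ matching \eqref{converconst2} after straightforward algebraic bookkeeping.

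Finally, iterating the contractive recursion from $e_0 = 0$ and summing the geometric series yields $\mathbb{E}|e_n|^2 \leq \lambda^2 c_5 /(2\hat{a}^*\lambda) = \lambda c_5/(2\hat{a}^*)$, and since the synchronous coupling provides an admissible transport plan in the definition \eqref{eq:definition-W-p} of $W_2$,
\[
W_2(\mathcal{L}(\dot{\theta}^\lambda_n), \mathcal{L}(\theta^\lambda_n)) \leq \bigl(\mathbb{E}|e_n|^2\bigr)^{1/2} \leq \sqrt{c_5/(2\hat{a}^*)}\,\sqrt{\lambda} = \bar{C}_{8,2}\sqrt{\lambda},
\]
as claimed. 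The main technical obstacle I anticipate is the careful bookkeeping of constants in the variance bound so that the coefficients exactly reproduce \eqref{converconst2}; no further conceptual ingredients are required beyond Lemma \ref{2ndbdconv} and the co-coercivity inequality of Remark \ref{hconvex2}.
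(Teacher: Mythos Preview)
Your proof is correct and follows essentially the same route as the paper: synchronous coupling, the co-coercivity inequality of Remark \ref{hconvex2} for the contraction, and Remark \ref{growth} together with Lemma \ref{2ndbdconv} to bound the stochastic-gradient variance. The only (cosmetic) difference is that you exploit the martingale orthogonality $\E\langle h(\dot{\theta}^\lambda_n)-h(\theta^\lambda_n),\, h(\theta^\lambda_n)-H(\theta^\lambda_n, X_{n+1})\rangle = 0$ in the quadratic term, whereas the paper uses the cruder bound $|a+b|^2 \leq 2|a|^2+2|b|^2$; your variance coefficient is therefore half the paper's $c_5$ and the stated inequality follows a fortiori.
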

\begin{proof} By using synchronous coupling for the algorithms \eqref{ULA} and \eqref{SGLD}, one obtains
\begin{align*}
| \dot{\theta}^{\lambda}_{n+1} -\theta^{\lambda}_{n+1}|^2
&= | \dot{\theta}^{\lambda}_n -\theta^{\lambda}_n -\lambda (h(\dot{\theta}^{\lambda}_n) -H(\theta^{\lambda}_n, X_{n+1}))|^2\\
&= | \dot{\theta}^{\lambda}_n -\theta^{\lambda}_n|^2 - 2\lambda \langle \dot{\theta}^{\lambda}_n -\theta^{\lambda}_n, h(\dot{\theta}^{\lambda}_n) -H(\theta^{\lambda}_n, X_{n+1}) \rangle +\lambda^2 |h(\dot{\theta}^{\lambda}_n) -H(\theta^{\lambda}_n, X_{n+1})|^2\\
& \leq | \dot{\theta}^{\lambda}_n -\theta^{\lambda}_n|^2 - 2\lambda \langle \dot{\theta}^{\lambda}_n -\theta^{\lambda}_n, h(\dot{\theta}^{\lambda}_n) -h(\theta^{\lambda}_n) \rangle - 2\lambda \langle \dot{\theta}^{\lambda}_n -\theta^{\lambda}_n, h(\theta^{\lambda}_n)  -H(\theta^{\lambda}_n, X_{n+1}) \rangle\\
&\quad +2\lambda^2 |h(\dot{\theta}^{\lambda}_n)  - h(\theta^{\lambda}_n) |^2 +2\lambda^2|h(\theta^{\lambda}_n) -H(\theta^{\lambda}_n, X_{n+1})|^2,
\end{align*}
which implies, by taking conditional expectation on both sides and by using Remark \ref{hconvex2}
\begin{align*}
\E\left[\left.| \dot{\theta}^{\lambda}_{n+1} -\theta^{\lambda}_{n+1}|^2 \right| \dot{\theta}^{\lambda}_n, \theta^{\lambda}_n\right]
& \leq| \dot{\theta}^{\lambda}_n -\theta^{\lambda}_n|^2 - 2\hat{a}^*\lambda| \dot{\theta}^{\lambda}_n -\theta^{\lambda}_n|^2  - \frac{2\lambda}{\hat{a}+L}|h(\dot{\theta}^{\lambda}_n) -h(\theta^{\lambda}_n)|^2 \\
&\quad +2\lambda^2 |h(\dot{\theta}^{\lambda}_n)  - h(\theta^{\lambda}_n) |^2 +2\lambda^2\E\left[\left.|h(\theta^{\lambda}_n) -H(\theta^{\lambda}_n, X_{n+1})|^2 \right| \dot{\theta}^{\lambda}_n, \theta^{\lambda}_n\right],
\end{align*}
where $\hat{a}^* = \hat{a}L/(\hat{a}+L)$. For $\lambda <\bar{\lambda}_{\max}$, one obtains by using Remark \ref{growth} and \ref{hlip}
\begin{align*}
&\E\left[\left.| \dot{\theta}^{\lambda}_{n+1} -\theta^{\lambda}_{n+1}|^2 \right| \dot{\theta}^{\lambda}_n, \theta^{\lambda}_n\right]\\
& \leq(1- 2\hat{a}^*\lambda)| \dot{\theta}^{\lambda}_n -\theta^{\lambda}_n|^2 +4\lambda^2\E\left[\left.|h(\theta^{\lambda}_n) |^2 \right| \dot{\theta}^{\lambda}_n, \theta^{\lambda}_n\right] \\
&\quad +4\lambda^2\E\left[\left.|H(\theta^{\lambda}_n, X_{n+1})|^2 \right| \dot{\theta}^{\lambda}_n, \theta^{\lambda}_n\right]\\
& \leq (1- 2\hat{a}^*\lambda)| \dot{\theta}^{\lambda}_n -\theta^{\lambda}_n|^2 +4\lambda^2L^2\E\left[\left.|\theta^{\lambda}_n-\theta^*|^2 \right| \dot{\theta}^{\lambda}_n, \theta^{\lambda}_n\right] \\
&\quad +4\lambda^2\E\left[\left. \left((1+|X_{n+1}|)^{\rho+1}( L_1|\theta^{\lambda}_n- \theta^*|+L_1| \theta^*|+L_2) +F_*(X_{n+1})\right)^2 \right| \dot{\theta}^{\lambda}_n, \theta^{\lambda}_n\right]\\
& \leq (1- 2\hat{a}^*\lambda)| \dot{\theta}^{\lambda}_n -\theta^{\lambda}_n|^2 +(4\lambda^2L^2+8\lambda^2L_1^2\mathbb{E}\left[K_{\rho}(X_0)\right])|\theta^{\lambda}_n-\theta^*|^2\\
&\quad +24\lambda^2L_1^2\mathbb{E}\left[K_{\rho}(X_0)\right]| \theta^*|^2+24\lambda^2L_2^2\mathbb{E}\left[K_{\rho}(X_0)\right]+24\lambda^2\mathbb{E}\left[F_*^2(X_0)\right].
\end{align*}
Finally, one calculates by using Lemma \ref{2ndbdconv},
\begin{align*}
\E\left[| \dot{\theta}^{\lambda}_{n+1} -\theta^{\lambda}_{n+1}|^2 \right]
& \leq (1- 2\hat{a}^*\lambda)\E\left[| \dot{\theta}^{\lambda}_n -\theta^{\lambda}_n|^2\right] +(4\lambda^2L^2+8\lambda^2L_1^2\mathbb{E}\left[K_{\rho}(X_0)\right])\E\left[|\theta^{\lambda}_n-\theta^*|^2\right]\\
&\quad+24\lambda^2L_1^2\mathbb{E}\left[K_{\rho}(X_0)\right]| \theta^*|^2+24\lambda^2L_2^2\mathbb{E}\left[K_{\rho}(X_0)\right]+24\lambda^2\mathbb{E}\left[F_*^2(X_0)\right]\\
&\leq (1- 2\hat{a}^*\lambda)\E\left[| \dot{\theta}^{\lambda}_n -\theta^{\lambda}_n|^2\right]+\lambda^2 c_5,
\end{align*}
where $c_5 =(8L^2+16L_1^2\mathbb{E}\left[K_{\rho}(X_0)\right]) ( \E\left[|\theta_0|^2\right]+\hat{a}^{-1}\bar{c}_4 )+(8L^2+40L_1^2\mathbb{E}\left[K_{\rho}(X_0)\right])| \theta^*|^2 +24L_2^2\mathbb{E}\left[K_{\rho}(X_0)\right]+24\mathbb{E}\left[F_*^2(X_0)\right]$. The result follows by induction.
\end{proof}
\noindent {\bf Proof of Theorem \ref{mainconvex}} One observes that by using Theorem \ref{converconvex} and Lemma \ref{converconvex2}
\begin{align*}
W_2(\mathcal{L}(\theta^{\lambda}_n),\pi_{\beta})
&\leq W_2(\mathcal{L}(\theta^{\lambda}_n),\mathcal{L}(\dot{\theta}^{\lambda}_n)) +W_2 (\mathcal{L}(\dot{\theta}^{\lambda}_n), \pi_{\lambda}) +W_2(\pi_{\lambda}, \pi_{\beta})\\
&\leq \bar{C}_{8,2}\sqrt{\lambda} + \bar{C}_7 e^{-\hat{a}^*\lambda n}+\bar{C}_{8,1}\sqrt{\lambda}\\
&\leq C_7e^{-C_6\lambda n} + C_8\sqrt{\lambda},
\end{align*}
where
\begin{equation}\label{mainconvexconst}
 C_6 = \hat{a}^*, \quad C_7 =  \bar{C}_7, \quad C_8 = \bar{C}_{8,1}+\bar{C}_{8,2}
\end{equation}
with $\hat{a}^* =\hat{a}L/(\hat{a}+L)$, $\bar{C}_7$ given in Lemma \ref{converconvex}, $\bar{C}_{8,1}$ and $\bar{C}_{8,2}$ given in \eqref{converconst1} and \eqref{converconst2} respectively.

\noindent {\bf Proof of Corollary \ref{eer2}} The proof follows the same lines as the proof of Corollary \ref{eer}. To obtain an upper bound for the expected excess risk $\mathbb{E}[U(\hat{\theta})] - \inf_{\theta \in \mathbb{R}^d} U(\theta) $, one considers
\begin{equation}\label{eersplitting2}
 \mathbb{E}[U(\hat{\theta})] - \inf_{\theta \in \mathbb{R}^d} U(\theta) = \left( \mathbb{E}[U(\hat{\theta})] -  \mathbb{E}[U(Z_{\infty})]\right) + \left( \mathbb{E}[U(Z_{\infty})]- \inf_{\theta \in \mathbb{R}^d} U(\theta) \right),
\end{equation}
where $\hat{\theta} =  \theta^{\lambda}_n$ and $Z_{\infty}\sim \pi_{\beta}$ with $\pi_{\beta}(\theta) = \exp(-\beta U(\theta))$ for all $\theta \in \mathbb{R}^d$. By using \cite[Lemma~3.5]{raginsky}, Lemma \ref{sdemoment}, \ref{2ndbdconv} and Theorem \ref{mainconvex}, the first term on the RHS of \eqref{eersplitting2} can be bounded by
\begin{align*}
&\mathbb{E}[U(\hat{\theta})] -  \mathbb{E}[U(Z_{\infty})] \\
& \leq \left(L(\E\left[|\theta_0- \theta^*|^2\right] + \bar{c}_4\hat{a}^{-1}+|\theta^*|^2)^{1/2}+|h(0)|\right)W_2(\mathcal{L}(\theta^{\lambda}_n),\pi_{\beta})\\
&\leq \left(L(\E\left[|\theta_0- \theta^*|^2\right] + \bar{c}_4\hat{a}^{-1}+|\theta^*|^2)^{1/2}+|h(0)|\right)\left(C_7e^{-C_6\lambda n} + C_8\sqrt{\lambda}\right)\\
&\leq \hat{C}_5e^{-\hat{C}_4\lambda n}+\hat{C_6}\sqrt{\lambda},
\end{align*}
where $\theta^* \in \mathbb{R}^d$ is the minimizer of $U$, and
\begin{align}\label{eerconst3}
\begin{split}
\hat{C}_4 &= C_6,\\
\hat{C}_5 & = C_7 \left(L(\E\left[|\theta_0- \theta^*|^2\right] + \bar{c}_4\hat{a}^{-1}+|\theta^*|^2)^{1/2}+|h(0)|\right), \\
\hat{C}_6 & = C_8 \left(L(\E\left[|\theta_0- \theta^*|^2\right] + \bar{c}_4\hat{a}^{-1}+|\theta^*|^2)^{1/2}+|h(0)|\right),
\end{split}
\end{align}
with $C_6, C_7, C_8$ given in \eqref{mainconvexconst} and $\bar{c}_4$ given in \eqref{c4}. Moreover, the second term on the RHS of \eqref{eersplitting2} can be estimated by using \cite[Proposition~3.4]{raginsky}, which gives,
\[
\mathbb{E}[U(Z_{\infty})]- \inf_{\theta \in \mathbb{R}^d} U(\theta) \leq  \frac{\hat{C}_7}{\beta},
\]
where
\begin{equation}\label{eerconst4}
\hat{C}_7 = \frac{d}{2}\log\left(\frac{e\beta L}{d}\left(\frac{d}{\hat{a}\beta}+|\theta^*|^2\right)\right).
\end{equation}
Finally, one obtains
\[
\mathbb{E}[U(\hat{\theta})] - \inf_{\theta \in \mathbb{R}^d} U(\theta)  \leq  \hat{C}_5e^{-\hat{C}_4\lambda n}+\hat{C_6}\sqrt{\lambda}+\hat{C}_7/\beta.
\]

\section{Applications}\label{application}
\subsection{Quantile estimation with \texorpdfstring{$L_2$ }\phantom{}regularization}
We consider the problem of quantile estimation for AR(1) processes, which has been discussed in \cite{4}, \cite{qr} and \cite{ql2} amongst others, with $L_2$ regularization. It assumed therefore that the data $X_t \in \mathbb{R}$, $t \in \mathbb{Z}$, follows an AR(1) process given by
\[
X_{t+1} = \alpha X_t +\bar{\xi}_{t+1},
\]
where $\alpha$ is a constant with $|\alpha|<1$ and $(\bar{\xi}_t)_{t \in \mathbb{Z}}$ are i.i.d. standard Normal random variables. The above expression can be further rewritten as
\[
X_t = \sum_{j=0}^{\infty} \alpha^j \bar{\xi}_{t-j}.
\]
One notes that $X_t$ has a stationary distribution $\pi_X$ which is normally distributed with mean 0 and variance $1/(1-\alpha^2)$. Our task is to identify the $q$-th quantile of the stationary distribution $\pi_X$ using the SGLD algorithm \eqref{SGLD}, in other words, we aim to solve the following problem:
\[
\min_\theta  \mathbb{E}\left[l_q(X_{\infty}-\theta)\right] + \gamma|\theta|^2,
\]
where $X_{\infty} \sim \pi_X$ and
\[l_q(z) =	\begin{cases}
     			 qz, & z \geq 0, \\
      			 (q-1)z,  & z<0.
  			 \end{cases}
\]
The stochastic gradient $H: \mathbb{R}\times \mathbb{R}  \rightarrow \mathbb{R}$ is given by
\begin{equation}\label{qH}
H(\theta, x) = -q +\mathbbm{1}_{\{x < \theta\}} +2\gamma \theta,
\end{equation}
where $\gamma$ is a positive constant. To check Assumption \ref{expressionH}, denote by $ F(\theta,x) = -q +2\gamma \theta$, $ G(\theta, x) = \mathbbm{1}_{\{x < \theta\}} $. It can be easily seen that Assumption \ref{expressionH} holds with $\rho = 0$, $L_1 =  2\gamma$, $L_2 = 0$ and $K_1(x) = 1$. Then, by Remark \ref{clcex} and its proof in \ref{proofclcex}, Assumption \ref{clc} holds with $L = 2\gamma+ 1$. Moreover, Assumption \ref{assum:dissipativity} holds with $A(x) = \gamma \mathbf{I}_d$ and $b(x) =  q^2/(4\gamma)$, which implies $a =  \gamma $ and $b =  q^2/(4\gamma)$.

One notes that the value of the $q$-th quantile of $\pi_{X}$ is given by $\theta^* = N(q)/\sqrt{1-\alpha^2}$ where $N(\cdot)$ is the cumulative distribution function of the standard normal distribution. For the simulation, set $\alpha = 0.5$, $q = 0.95$, and thus, $\theta^* = 1.89$. Moreover, let $m = 1$, $\theta_0 = 3$, $\beta = 10^8$ and $\gamma = 10^{-6}$. Note that we use the step restriction given in Remark \ref{stepr} for all the examples in this section. In Figure \ref{qgraph}, the left graph is obtained by using the SGLD algorithm \eqref{SGLD} with $\lambda = 10^{-4}$ and the number of iterations $n = 10^6$. It shows the path of $\theta_n$ with the first $10000$ iterations being discarded, and the path stabilises at around the true value $\theta^* = 1.89$. The right graph of Figure \ref{qgraph} illustrates the rate of convergence of the SGLD algorithm in Wasserstein-1 distance based on 5000 samples. The slope of the results in $W_1$ obtained using numerical experiments is 0.5022, which supports our theoretical finding in Theorem \ref{main} with rate $1/2$.

\begin{figure}
    \centering
    \begin{minipage}{0.5\textwidth}
        \centering
        \includegraphics[width=0.9\textwidth]{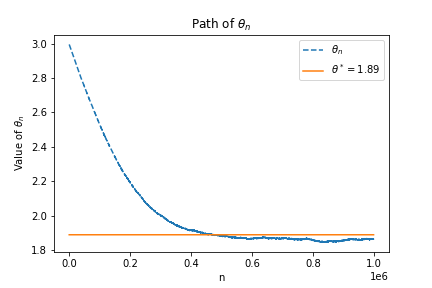} 
    \end{minipage}\hfill
    \begin{minipage}{0.5\textwidth}
        \centering
        \includegraphics[width=0.9\textwidth]{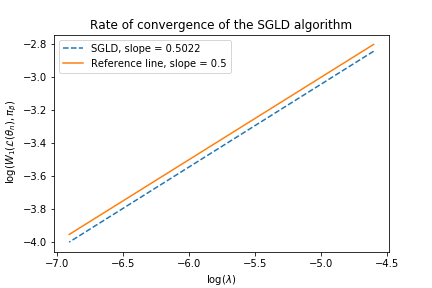} 
    \end{minipage}
    \caption{\footnotesize \it [Left] Path of $\theta_n$ when $q = 0.95$. [Right] Rate of convergence of the SGLD algorithm.}
    \label{qgraph}
\end{figure}

\subsection{VaR-CVaR algorithm} \label{varcvarex}
In this section, we consider the problem of computing Value-at-Risk (VaR) and Conditional-Value-at-Risk (CVaR), which are two commonly used risk measures in financial risk management. In order to obtain the two quantities, one considers the following optimization problem:
\begin{equation}\label{varcvarproblem}
\min_{\theta}V(\theta)=\min_{\theta}\left(\mathbb{E}\left[\theta+\frac{1}{1-\bar{q}}(f(X)-\theta)_+\right]+\gamma|\theta|^2\right),
\end{equation}
where $0<\bar{q}<1$, $f$ is continuous and $f(X)$ is integrable with respect to the probability measure. As noted in \cite{varcvar}, $f$ can represent more complicated payoff structures than simple vanilla instruments while $X$ can accommodate a large family of asset distributions including those generated by stochastic/local volatility models, see e.g. \cite{LocalVol}, \cite{Sabanis-SV-mean} and \cite{Sabanis-SV} references therein. Then, by \cite[Proposition 2.1]{varcvar}, $\text{VaR}_{\bar{q}}(f(X)) = \argmin V(\theta) $ and $\text{CVaR}_{\bar{q}}(f(X)) = \min_{\theta}V(\theta) $. To compute VaR, the stochastic gradient $H: \mathbb{R}\times \mathbb{R}  \rightarrow \mathbb{R}$ of the SGLD algorithm \eqref{SGLD} is given by
\[
H(\theta, x) = 1- \frac{1}{1-\bar{q}}\mathbbm{1}_{\{f(x) \geq \theta\}}+2\gamma\theta =-\frac{\bar{q}}{1-\bar{q}}+ \frac{1}{1-\bar{q}}\mathbbm{1}_{\{f(x) < \theta\}}+2\gamma\theta.
\]

\subsubsection{Single asset} Let $f(x) = x$, one notices that the above expression has a similar form as \eqref{qH}. Then, one can check that Assumption \ref{expressionH} - \ref{assum:dissipativity} are satisfied. More precisely, denote by $ F(\theta,x) = -\bar{q}/(1-\bar{q}) +2\gamma \theta$, $ G(\theta, x) = \mathbbm{1}_{\{x < \theta\}}/(1-\bar{q}) $, Assumption \ref{expressionH} holds with $\rho = 0$, $L_1 =  2\gamma$, $L_2 = 0$ and $K_1(x) = 1/(1-\bar{q})$. Let $X$ be a one-dimensional random variable with finite fourth moment, then Assumption \ref{iid} is satisfied. Denote by $\bar{c}_d$ the upper bound of the density of $X$, Assumption \ref{clc} holds with $L = 2\gamma +\bar{c}_d/(1-\bar{q})$. Furthermore, Assumption \ref{assum:dissipativity} holds with $A(x) = \gamma \mathbf{I}_d$ and $b(x) = \bar{q}^2/(4\gamma(1-\bar{q})^2)$, which implies $a =  \gamma $ and $b = \bar{q}^2/(4\gamma(1-\bar{q})^2)$.

\begin{table}[t!]
\footnotesize
\centering
\begin{tabular}{c c c c c c c c c}
 \hline
 \hline
& \multicolumn{4}{c}{$\bar{q} = 0.95$} & \multicolumn{4}{c}{$\bar{q} = 0.99$}\\ 
 \hline
 \hline
  &VaR* &CVaR* &$\text{VaR}_{\text{SGLD}}$ &$\text{CVaR}_{\text{SGLD}}$  &VaR* &CVaR* &$\text{VaR}_{\text{SGLD}}$ &$\text{CVaR}_{\text{SGLD}} $ \\
 \hline
 \multirow{2}{*}{ $\mu = 0, \sigma = 1$ }& \multirow{2}{*}{  1.645} & \multirow{2}{*}{  2.062 }&
 1.642&2.062& \multirow{2}{*}{ 2.326}& \multirow{2}{*}{ 2.677}&2.329&2.662\\
 \multicolumn{3}{c}{}&(0.02)&(0.0006) &\multicolumn{2}{c}{}&(0.04) &(0.0038)  \\
 \hline
 \multirow{2}{*}{  $\mu = 1, \sigma = 2$} &  \multirow{2}{*}{ 4.290 }& \multirow{2}{*}{  5.124} &
 4.294&5.126&\multirow{2}{*}{5.653}&\multirow{2}{*}{6.335}&5.640&6.336\\
 \multicolumn{3}{c}{}&(0.03) &(0.0006)&\multicolumn{2}{c}{}&(0.06) &(0.0032) \\
 \hline
   \multirow{2}{*}{$\mu = 3, \sigma = 5$} & \multirow{2}{*}{ 11.224 }& \multirow{2}{*}{13.311 }& 11.230 &13.305& \multirow{2}{*}{14.632}& \multirow{2}{*}{16.337}&14.643&16.313 \\
 \multicolumn{3}{c}{}&(0.05) &(0.0006)& \multicolumn{2}{c}{}&(0.11) &(0.006)\\
 \hline
\end{tabular}
\caption{VaR and CVaR for normal distribution $N(\mu, \sigma)$.}
\label{normalt}
\end{table}
\begin{table}[t!]
\footnotesize
\centering
\begin{tabular}{c c c c c c c c c}
 \hline
 \hline
& \multicolumn{4}{c}{$\bar{q} = 0.95$} & \multicolumn{4}{c}{$\bar{q} = 0.99$}\\ 
 \hline
 \hline
  &VaR* &CVaR* &$\text{VaR}_{\text{SGLD}}$ &$\text{CVaR}_{\text{SGLD}}$  &VaR* &CVaR* &$\text{VaR}_{\text{SGLD}}$ &$\text{CVaR}_{\text{SGLD}} $ \\
 \hline
 \multirow{2}{*}{ $\text{d.f.} =10$ }& \multirow{2}{*}{  1.812} & \multirow{2}{*}{  2.416 }&
1.808&2.407& \multirow{2}{*}{ 2.764}& \multirow{2}{*}{ 3.357}&2.767&3.350\\
 \multicolumn{3}{c}{}&(0.02)&(0.0005) &\multicolumn{2}{c}{}&(0.05) &(0.003)  \\
 \hline
 \multirow{2}{*}{   $\text{d.f.} =7$} &  \multirow{2}{*}{ 1.895 }& \multirow{2}{*}{2.595} &
1.895&2.594&\multirow{2}{*}{2.998}&\multirow{2}{*}{3.757}&3.001&3.782\\
 \multicolumn{3}{c}{}&(0.03) &(0.0008)&\multicolumn{2}{c}{}&(0.05) &(0.0024) \\
 \hline
   \multirow{2}{*}{$\text{d.f.} =3$} & \multirow{2}{*}{ 2.353 }& \multirow{2}{*}{3.876 }& 2.358 &3.873& \multirow{2}{*}{4.541}& \multirow{2}{*}{6.968}&4.542&6.967 \\
 \multicolumn{3}{c}{}&(0.03) &(0.0008)& \multicolumn{2}{c}{}&(0.08) &(0.0028)\\
 \hline
\end{tabular}
\caption{VaR and CVaR for Student's t distribution.}
\label{tt}
\end{table}

\begin{figure} [t!]
    \centering
    \begin{minipage}{0.5\textwidth}
        \centering
        \includegraphics[width=0.9\textwidth]{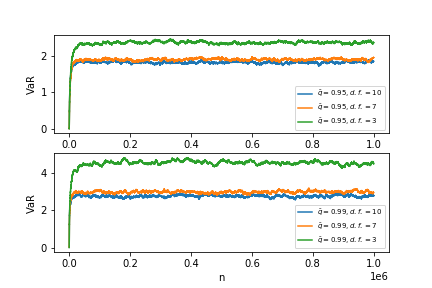} 
    \end{minipage}\hfill
    \begin{minipage}{0.5\textwidth}
        \centering
        \includegraphics[width=0.9\textwidth]{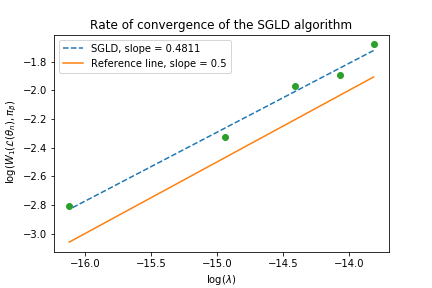} 
    \end{minipage}
    \caption{\footnotesize \it [Left] Path of $\theta_n$ (VaR) for Student's t-distribution. [Right] Rate of convergence of the SGLD algorithm based on 5000 samples.}
    \label{vcvargraph}
\end{figure}

For the numerical experiments, we set $\theta_0 = 0$, $\beta = 10^8$, $\gamma = 10^{-8}$, $\lambda = 10^{-4}$ and the number of iterations $n = 10^6$. Table \ref{normalt} and \ref{tt} present VaR and CVaR for the normal distribution and Student's t-distribution. VaR* and CVaR* in the tables denote the theoretical values, while $\text{VaR}_{\text{SGLD}}$ and $\text{CVaR}_{\text{SGLD}}$ denote the numerical approximations from the SGLD algorithm \eqref{SGLD}. Each approximation in the table is obtained based on 10000 samples, which is followed by its sample standard deviation shown in brackets. In addition, in Figure \ref{vcvargraph}, the left graph illustrates the path of $\theta_n$ for the $t$-distribution, whereas the right graph shows that the rate of convergence of the SGLD algorithm \eqref{SGLD} is 0.4811. One notes that the samples from $\pi_{\beta}$ is generated by running the SGLD algorithm with $\lambda = 10^{-5}$.

\subsubsection{Minimizing CVaR of portfolios of assets} \label{cvar_port}
To minimize CVaR for a given portfolio, we consider the following optimization problem:
\begin{equation}\label{varcvarproblem2}
\min_{\hat{\theta}}V(\hat{\theta})=\min_{\hat{\theta}}\left(\mathbb{E}\left[ \frac{1}{1-\bar{q}}\left(\sum_{i=1}^{n}g_i(w)X_i-\theta\right)_+ + \theta \right]+\gamma|\hat{\theta}|^2\right),
\end{equation}
where the parameter $\hat{\theta}: = (\theta, w)^{\intercal} = (\theta, w_1, \dots, w_n)^{\intercal}$ and $g_i(w) := \frac{e^{w_i}}{\sum_{j=1}^{n}e^{w_j}} \in (0,\,1)$ for $i = 1, \dots, n$. By solving \eqref{varcvarproblem2}, we obtain not only VaR for a given portfolio, but also the optimal weight for each asset in the portfolio such that CVaR is minimized.

For reasons of brevity, we assume here that the $X_i$'s, for $i = 1, \dots, n$, are i.i.d. one-dimensional random variables (with finite fourth moments). Our results can be naturally extended to the case of dependent data streams via the concept of $L$-mixing as explained in \cite{nonconvex}.

Let $c_X$, $c_{\bar{X}}$ denote the first and second absolute moment respectively of $X_1$. Moreover, let $|x|f_{X_i}(x)$ be bounded for any $i$ and $x \in \mathbb{R}$. Note that this latter requirement is satisfied for a wide range of distributions, for example, the distributions shown in Table \ref{figureportfolio}. Then, the stochastic gradient $H_{\hat{\theta}}(\hat{\theta},x): \mathbb{R}^{n+1} \times \mathbb{R}^n \rightarrow \mathbb{R}^{n+1}$ is defined as
\[
H_{\hat{\theta}}(\hat{\theta},x):= (H_{\theta}(\hat{\theta},x),H_{w_1}(\hat{\theta},x), \dots, H_{w_n}(\hat{\theta},x))^{\intercal},
\]
where $H_{\theta}(\hat{\theta},x): \mathbb{R}^{n+1} \times \mathbb{R}^n \rightarrow \mathbb{R}$ and $H_{w_j}(\hat{\theta}, x):  \mathbb{R}^{n+1} \times \mathbb{R}^n \rightarrow \mathbb{R}$ for all $j$ are given by
\[
H_{\theta}(\hat{\theta}, x) = 1- \frac{1}{1-\bar{q}}\mathbbm{1}_{\{\sum_{i=1}^{n}g_i(w)x_i \geq \theta\}}+2\gamma\theta,
\]
and
\[
H_{w_j}(\hat{\theta}, x)= \frac{1}{1-\bar{q}}\hat{g}_{w_j}(w,x)\mathbbm{1}_{\{\sum_{i=1}^{n}g_i(w)x_i \geq \theta\}}+2\gamma w_j,
\]
where
\[
\hat{g}_{w_j}(w,x) = \sum_{i=1}^{n}\frac{\partial g_i(w)}{\partial w_j}x_i
\]
for any $j = 1, \dots, n$ with $\frac{\partial g_j(w)}{\partial w_j} = \frac{e^{w_j}(\sum_{l \neq j} e^{w_l})}{(\sum_{l=1}^{n}e^{w_l})^2}$, and $\frac{\partial g_i(w)}{\partial w_j} = -\frac{e^{w_i}e^{w_j}}{(\sum_{l=1}^{n}e^{w_l})^2}$ for $i \neq j$. One notes that $|\hat{g}_{w_j}(w,x) | \leq \sum_{i=1}^n |x_i|$ for any $j$. Moreover, if Assumption \ref{expressionH} - \ref{assum:dissipativity} hold for $H_{\theta}$ and $H_{w_j}$ for any $j$, then the assumptions hold for $H_{\hat{\theta}}$.

We first check assumptions for $H_{\theta}$. Denote by
\[
F_{\theta}(\hat{\theta},x) =2\gamma \theta, \quad G_{\theta}(\hat{\theta}, x) =1- \mathbbm{1}_{\{\sum_{i=1}^{n}g_i(w)x_i \geq \theta\}}/(1-\bar{q}),
\]
then $H_{\theta} = F_{\theta}+G_{\theta}$. Assumption \ref{expressionH} holds with $\rho = 0$, $L_1 =  2\gamma$, $L_2 = 0$ and $K_1(x) = (2-\bar{q})/(1-\bar{q})$. By taking into consideration the expression of $K_1(x)$ and the construction of the problem, Assumption \ref{iid} is satisfied. Assumption \ref{assum:dissipativity} holds with $A(x) = 2\gamma \mathbf{I}_d$ and $b(x) = 0$, which implies $a = 2 \gamma $ and $b =0$. To check Assumption \ref{clc}, one considers $\hat{\theta}': = (\bar{\theta}, w)^{\intercal}$, and then calculates by assuming without loss of generality $g_n(w) = \max\{g_1(w), \dots, g_n(w)\}$
\begin{align*}
&\mathbb{E}\left[\left|H_{\theta}(\hat{\theta}, X)  - H_{\theta}(\hat{\theta}', X)\right|\right]\\
& \leq 2\gamma\left|\theta -\bar{\theta}\right|+\frac{1}{1-\bar{q}}\mathbb{E}\left[\left|\mathbbm{1}_{\{\sum_{i=1}^{n}g_i(w)X_i \geq \theta\}} -  \mathbbm{1}_{\{\sum_{i=1}^{n}g_i(w)X_i \geq \bar{\theta}\}}\right|\right]\\
&\leq 2\gamma\left|\hat{\theta}-\hat{\theta}'\right|+\frac{1}{1-\bar{q}}(E_1+E_2),
\end{align*}
where
\[
E_1 =\mathbb{E}\left[ \mathbbm{1}_{\{\theta \leq \sum_{i=1}^{n}g_i(w)X_i \leq \bar{\theta} \}} \right], \quad E_2 = \mathbb{E}\left[ \mathbbm{1}_{\{\bar{\theta} \leq \sum_{i=1}^{n}g_i(w)X_i \leq  \theta\}} \right].
\]
To estimate $E_1$, one writes
\begin{align*}
&\mathbb{E}\left[ \mathbbm{1}_{\{\theta \leq \sum_{i=1}^{n}g_i(w)X_i \leq \bar{\theta} \}} \right]\\
& = \mathbb{E}\left[ \mathbb{E}\left[\left.\mathbbm{1}_{\{(\theta - \sum_{i\neq n} g_i(w)X_i)/g_n(w)) \leq X_n \leq (\bar{\theta} - \sum_{i\neq n} g_i(w)X_i)/g_n(w)\}} \right|X_1, \dots, X_{n-1} \right]\right]\\
& = \int_{-\infty}^{\infty}\cdots \int_{-\infty}^{\infty}\int_{ (\theta - \sum_{i\neq n} g_i(w)x_i)/g_n(w) }^{  (\bar{\theta} - \sum_{i\neq n} g_i(w)x_i)/g_n(w) }  f_{X_n}(z)\, dz f_{X_{n-1}}(x_{n-1})\,dx_{n-1}\cdots f_{X_1}(x_1)\, dx_1\\
&\leq  nc_{X_n}\left|\hat{\theta}-\hat{\theta}'\right|,
\end{align*}
where we use the fact $g_n(w)\geq 1/n$ in the last inequality and $c_{X_n}$ denotes the upper bound of the density of $X_n$. $E_2$ can be estimated by using similar arguments. Then, one obtains
\[
\mathbb{E}\left[\left|H_{\theta}(\hat{\theta}, X)  - H_{\theta}(\hat{\theta}', X)\right|\right] \leq (2\gamma+2nc_{X_n}/(1-\bar{q}))\left|\hat{\theta}-\hat{\theta}'\right|,
\]
which implies Assumption \ref{clc} holds with $L = 2\gamma +2nc_{X_n}/(1-\bar{q})$.

Next, we check assumptions for $H_{w_j}$. Denote by
\[
F_{w_j}(\hat{\theta},x) =2\gamma w_j, \quad G_{w_j}(\hat{\theta}, x) = \hat{g}_{w_j}(w,x)\mathbbm{1}_{\{\sum_{i=1}^{n}g_i(w)x_i \geq \theta\}}/(1-\bar{q}),
\]
then $H_{w_j} = F_{w_j}+G_{w_j}$. Assumption \ref{expressionH} holds with $\rho = 0$, $L_1 =  2\gamma$, $L_2 = 0$ and $K_1(x) =\sum_i|x_i|/(1-\bar{q})$. By taking into consideration the expression of $K_1(x)$ and the construction of the problem, Assumption \ref{iid} is satisfied. Assumption \ref{assum:dissipativity} holds with $A(x) = 2\gamma \mathbf{I}_d$ and $b(x) = 0$, which implies $a = 2 \gamma $ and $b =0$. Then, we check Assumption \ref{clc} for $H_{w_1}$, and the arguments stay the same lines for any other $H_{w_j}$, $j = 2, \dots, n$. Consider $\hat{\theta}^{\sharp}: = (\theta, \bar{w})^{\intercal} = (\theta, \bar{w}_1,  w_2, \dots, w_n)^{\intercal}$. Then, one calculates
\begin{align*}
&\mathbb{E}\left[\left|H_{w_1}(\hat{\theta}, X)  - H_{w_1}(\hat{\theta}^{\sharp}, X)\right|\right] \\
&\leq  2\gamma\left|w_1 - \bar{w}_1\right| +\frac{1}{1-\bar{q}}\mathbb{E}\left[\left|\hat{g}_{w_1}(w,X)\mathbbm{1}_{\{\sum_{i=1}^{n}g_i(w)X_i \geq \theta\}} -  \hat{g}_{w_1}(\bar{w},X)\mathbbm{1}_{\{\sum_{i = 1}g_i(\bar{w})X_i\geq \theta\}}\right|\right]\\
& \leq 2\gamma\left|\hat{\theta} - \hat{\theta}^{\sharp}\right| +\frac{1}{1-\bar{q}}\mathbb{E}\left[\left|\hat{g}_{w_1}(w,X)\mathbbm{1}_{\{\sum_{i=1}^{n}g_i(w)X_i \geq \theta\}} -  \hat{g}_{w_1}(\bar{w},X)\mathbbm{1}_{\{\sum_{i=1}^{n}g_i(w)X_i \geq \theta\}} \right|\right]\\
&\quad + \frac{1}{1-\bar{q}}\mathbb{E}\left[\left|\hat{g}_{w_1}(\bar{w},X)\mathbbm{1}_{\{\sum_{i=1}^{n}g_i(w)X_i \geq \theta\}} -  \hat{g}_{w_1}(\bar{w},X)\mathbbm{1}_{\{\sum_{i = 1}g_i(\bar{w})X_i\geq \theta\}}\right|\right]\\
& \leq 2\gamma\left|\hat{\theta} - \hat{\theta}^{\sharp}\right| +\frac{2nc_X}{1-\bar{q}}|w_1 - \bar{w}_1|\\
&\quad  + \frac{1}{1-\bar{q}}\mathbb{E}\left[\left|\hat{g}_{w_1}(\bar{w},X)\mathbbm{1}_{\{\sum_{i=1}^{n}g_i(w)X_i \geq \theta\}} -  \hat{g}_{w_1}(\bar{w},X)\mathbbm{1}_{\{\sum_{i = 1}g_i(\bar{w})X_i\geq \theta\}}\right|\right]\\
& \leq 2\gamma\left|\hat{\theta} - \hat{\theta}^{\sharp}\right| +\frac{2nc_X}{1-\bar{q}}\left|\hat{\theta} - \hat{\theta}^{\sharp}\right| \\
&\quad  + \frac{1}{1-\bar{q}}\mathbb{E}\left[\left|\hat{g}_{w_1}(\bar{w},X)\mathbbm{1}_{\{\sum_{i=1}^{n}g_i(w)X_i \geq \theta\}} -  \hat{g}_{w_1}(\bar{w},X)\mathbbm{1}_{\{\sum_{i = 1}g_i(\bar{w})X_i\geq \theta\}}\right|\right],
\end{align*}
where the third inequality holds due to the fact that $|\hat{g}_{w_1}(w,X) -  \hat{g}_{w_1}(\bar{w},X)|\leq 2|w_1 - \bar{w}_1|\sum_i|X_i|$. 
Then, by using $|\hat{g}_{w_1}(\bar{w},x)| \leq \sum_i|x_i|$,
\begin{align}\label{inddiffappendix}
\begin{split}
&\mathbb{E}\left[\left|H_{w_1}(\hat{\theta}, X)  - H_{w_1}(\hat{\theta}^{\sharp},  X)\right|\right] \\
& \leq 2\gamma\left|\hat{\theta} - \hat{\theta}^{\sharp}\right|+\frac{2nc_X}{1-\bar{q}}\left|\hat{\theta} - \hat{\theta}^{\sharp}\right| \\
&\quad + \frac{1}{1-\bar{q}}\mathbb{E}\left[\sum_i|X_i|\left|\mathbbm{1}_{\{\sum_{i=1}^{n}g_i(w)X_i \geq \theta\}} -  \mathbbm{1}_{\{\sum_{i = 1}g_i(\bar{w})X_i\geq \theta\}}\right|\right]\\
&\leq (2\gamma+2nc_X/(1-\bar{q})) \left|\hat{\theta} - \hat{\theta}^{\sharp}\right|\\
&\quad +2(n-1)(c_{X} (\bar{c}_{X_n}+\bar{c}_{X_1})+(c_{\bar{X}}+(n-2)c_X^2)( c_{X_n}+ c_{X_1}))/(1-\bar{q})\left|\hat{\theta} - \hat{\theta}^{\sharp}\right|,
\end{split}
\end{align}
where $c_X$, $c_{\bar{X}}$ denote the first and the second absolute moment of $X_i$'s respectively, for any $i$, $\bar{c}_{X_i}$ is the upper bound of the function $|x|f_{X_i}$, and $c_{X_i}$ is the upper bound of the density of $X_i$. Detailed calculations to obtain the last inequality in \eqref{inddiffappendix} is given in Appendix \ref{ass3proofvarcvar}. Thus Assumption \ref{clc} holds with $L = 2\gamma+2nc_X/(1-\bar{q}) +2(n-1)(c_{X} (\bar{c}_{X_n}+\bar{c}_{X_1})+(c_{\bar{X}}+(n-2)c_X^2)( c_{X_n}+ c_{X_1}))/(1-\bar{q})$.
\begin{table}[t!]
\footnotesize
\centering
\begin{tabular}{c c c c c c c c c c}
 \hline
 \hline
\multicolumn{2}{c}{}& \multicolumn{4}{c}{SGLD algorithm} & \multicolumn{4}{c}{Reference}\\ 
 \hline
 \hline
 $X_1$&$X_2$&$w_1$&$w_2$& \multicolumn{2}{c}{$g_1(w)X_1+g_2(w)X_2$} &$w_1^*$&$w_2^*$&\multicolumn{2}{c}{$g_1(w^*)X_1+g_2(w^*)X_2$} \\
  \hline
 \multicolumn{4}{c}{}& $\text{VaR}_{\text{SGLD}}$ &$\text{CVaR}_{\text{SGLD}} $ & \multicolumn{2}{c}{}& VaR*  &CVaR*\\
 \hline
 $N(500,1)$& $N(0,10^{-4})$&0.00002&0.99998&0.025&0.03&0&1&0.016&0.021\\
 $N(0,10^6)$& $N(0,10^{-4})$&0.000006&0.999994&0.016&0.25&0&1&0.016&0.021\\
 $N(1,4)$& $N(0,1)$&0.111 &0.889  &1.615 &2.004&0.11 &0.89 & 1.617 & 1.999 \\
 \hline
 $N(0,1)$& $t$ with d.f. $= 2.01$&0.917&0.083&1.567&1.975&0.9&0.1&1.531&1.971\\
 $N(0,1)$& $t$ with d.f. $= 10$& 0.577& 0.423 &1.236 &1.554 & 0.58& 0.42 &1.224 &1.553 \\
 $N(0,1)$& $t$ with d.f. $= 1000$&0.503&0.497&1.15&1.46&0.5&0.5&1.165&1.461\\
 $N(1,4)$& $t$ with d.f. $= 2.01$& 0.596& 0.404 & 2.941&4.130 &0.61 &  0.39& 2.985& 4.115\\
 $N(1,4)$& $t$ with d.f. $= 10$&0.172 & 0.828 &1.743 & 2.290& 0.17& 0.83 &1.779 & 2.286\\
 $N(1,4)$& $t$ with d.f. $= 1000$& 0.113&  0.887& 1.594& 2.008&0.11 & 0.89 & 1.619& 2.002\\
 \hline
 $N(0,1)$& $\text{Logistic(0,1)}$& 0.775&  0.225& 1.422&1.816 & 0.78&0.22  &1.442 &1.813 \\
 $N(0,1)$& $\text{Logistic(0,29)}$& 0.999& 0.001 &1.633 &2.110 & 1& 0 &1.645 &2.063 \\
 $N(0,1)$& $\text{Logistic(2,10)}$&  0.997 & 0.003& 1.650&2.101 &  1& 0&1.648&2.065 \\
 $N(1,4)$& $\text{Logistic(0,1)}$& 0.402& 0.598 & 2.635&3.262 &0.4 & 0.6 & 2.607& 3.261\\
 $N(1,4)$& $\text{Logistic(0,29)}$& 0.998& 0.002 & 4.284& 5.145& 1&0&4.284&5.116  \\
 $N(1,4)$& $\text{Logistic(2,10)}$&0.991 &0.009  &4.255 &5.132 & 0.99& 0.01 &4.283 & 5.114\\
 \hline
 $N(0,1)$& $\text{Lognormal(0,1)}$&0.966 & 0.034 & 1.662& 2.068& 0.97& 0.03 &1.647 &2.054 \\
 $N(0,1)$& $\text{Lognormal(0,0.01)}$& 0.074& 0.926 &1.145 & 1.205&0.07 & 0.93 & 1.132&1.186 \\
 $N(0,1)$& $\text{Lognormal(1,4)}$&0.9997 &0.0003  &1.674 &2.136 & 1& 0 &1.645 &2.062 \\
 $N(1,4)$& $\text{Lognormal(0,1)}$& 0.732& 0.268 &3.750 & 4.6050.74&0.74&  0.26 &3.771 &4.599 \\
 $N(1,4)$& $\text{Lognormal(0,0.01)}$&0.010 &0.0.989  &1.173 &1.301 &0 & 1 &1.179 &1.230 \\
 $N(1,4)$& $\text{Lognormal(1,4)}$&0.997 &0.003  &4.266 & 5.194& 1&0  &4.292 &5.129 \\
 \hline
$\text{Logistic(0,1)}$& $\text{Lognormal(0,1)}$&0.817& 0.183&2.797  &3.727 & 0.81& 0.19& 2.814&3.724  \\
$\text{Logistic(0,1)}$& $\text{Lognormal(0,0.01)}$& 0.022& 0.978 &1.169 &1.256 &0.02 &0.98  &1.164 &1.217 \\
$\text{Logistic(0,1)}$ & $\text{Lognormal(1,4)}$& 0.997& 0.003 &2.961 &4.030 &1 & 0 & 2.947&3.971 \\
$\text{Logistic(2,10)}$& $\text{Lognormal(0,1)}$& 0.043& 0.956 & 5.245& 8.412&0.04 & 0.96 &5.198 &8.400 \\
$\text{Logistic(2,10)}$& $\text{Lognormal(0,0.01)}$& 0.009& 0.991 &1.184 &1.315& 0& 1 &1.179 &1.229 \\
$\text{Logistic(2,10)}$& $\text{Lognormal(1,4)}$&0.996 &0.004  &31.651&41.748 & 0.99&0.01  & 31.420&41.738 \\
 \hline
\end{tabular}
\caption{$95\%$ VaR and CVaR for portfolios of two assets $X_1$, $X_2$ with the form $w_1X_1+w_2X_2$.}
\label{figureportfolio}
\end{table}

\begin{figure} [t!]
    \centering
	\includegraphics[width=0.5\textwidth]{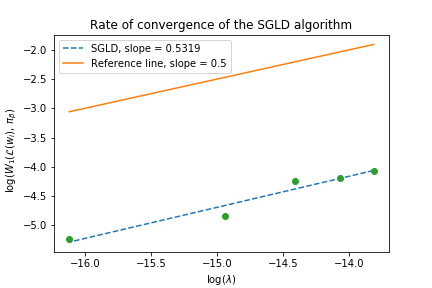}
    \caption{\footnotesize \it Rate of convergence of the SGLD algorithm for $w_1$ based on 5000 samples.}
    \label{portfoliocvargraph}
\end{figure}

For the numerical experiments, we set $\theta_0 = 0$, $\beta = 10^8$, $\gamma = 10^{-8}$, $\lambda = 10^{-4}$ and the number of iterations $n = 10^6$. Tabel \ref{figureportfolio} illustrates $95\%$ VaR and CVaR obtained using the SGLD algorithm for a portfolio of two assets $X_1$ and $X_2$ with weights $g_1(w)$ and $g_2(w)$ respectively. The reference values $w_1^*$, $w_2^*$, VaR* and CVaR* are obtained numerically in the following way:
\begin{enumerate}
\item First, we create 100 evenly spaced numbers over the interval $[0,1]$.
\item Then, for any given distributions of $X_1$ and $X_2$, assign each of the 100 numbers to $g_1(w)$, which is the weight of $X_1$, and calculate the $95\%$ CVaR for the combination $g_1(w)X_1+g_2(w)X_2$.
\item Finally, we obtain the minimum CVaR and the corresponding $g_1(w)$ among the 100 values. We denote them as CVaR* and $g_1(w^*)$. Here , one notes that the corresponding VaR* can be calculated using the optimal weights $g_1(w^*)$ and $g_2(w^*)$.
\end{enumerate}
Figure \ref{portfoliocvargraph} shows that the rate of convergence of the SGLD algorithm \eqref{SGLD} for the parameter $w_1$ is 0.5319, which supports the theoretical finding in Theorem \ref{main}. One notes that the samples from $\pi_{\beta}$ is generated by running the SGLD algorithm with $\lambda = 10^{-5}$.

\newpage

\appendix
\section{Appendix}\label{app}
\subsection{Proof of the claim in Remark \ref{clcex}}\label{proofclcex}
We adapt the proof from \cite[Lemma~4.7]{4} and extend it to an $\mathbb{R}^m$-valued random variable $X_0$. It suffices to consider $H(\theta, X_0) = \dot{g}(\theta, X_0)\mathbbm{1}_{\bigcap_{i = 1}^m\{X_0^{(i)} \in I_i(\theta)\}}$, where $\theta \in \mathbb{R}^d$, $\dot{g}$ is bounded and jointly Lipschitz continuous, i.e. there exist $L_3, L_4, K_2>0$ such that for any $\theta, \theta' \in \mathbb{R}^d$, $x, x' \in \mathbb{R}^m$,
\[
|\dot{g}(\theta,x) - \dot{g}(\theta',x')| \leq (1+|x|+|x'|)^{\rho}(L_3|\theta - \theta'| +L_4|x - x'|), \quad |\dot{g}(\theta, x)| \leq K_2,
\]
and the intervals $I_i(\theta)$ take the form $(-\infty, \bar{g}^{(i)}(\theta))$ with $\bar{g}^{(i)}$ Lipschitz. One notices that the proof follows the same lines when $I_i(\theta)$ takes the form $(\bar{g}^{(i)}(\theta), \infty)$, $(\tilde{g}^{(i)}(\theta), \hat{g}^{(i)}(\theta))$ with $\bar{g}^{(i)}, \tilde{g}^{(i)}, \hat{g}^{(i)}$ Lipschitz. One writes,
\begin{align*}
\left|H(\theta, X_0) -H(\theta', X_0)  \right|
&\leq \left|\dot{g}(\theta, X_0)\mathbbm{1}_{\bigcap_{i = 1}^m\left\{X_0^{(i)}<\bar{g}^{(i)}(\theta)\right\}} - \dot{g}(\theta', X_0)\mathbbm{1}_{\bigcap_{i = 1}^m\left\{X_0^{(i)} <\bar{g}^{(i)}(\theta')\right\}}  \right|\\
&\leq \left|\dot{g}(\theta, X_0)\mathbbm{1}_{\bigcap_{i = 1}^m\left\{X_0^{(i)} <\bar{g}^{(i)}(\theta)\right\}} - \dot{g}(\theta', X_0)\mathbbm{1}_{\bigcap_{i = 1}^m\left\{X_0^{(i)} <\bar{g}^{(i)}(\theta)\right\}} \right|\\
&\quad +\left|\dot{g}(\theta', X_0)\mathbbm{1}_{\bigcap_{i = 1}^m\left\{X_0^{(i)} <\bar{g}^{(i)}(\theta)\right\}} - \dot{g}(\theta', X_0)\mathbbm{1}_{\bigcap_{i = 1}^m\left\{X_0^{(i)} <\bar{g}^{(i)}(\theta')\right\}} \right|\\
&\leq L_3(1+2|X_0|)^{\rho}|\theta - \theta'| +K_2\mathbbm{1}_{\bigcap_{i = 1}^m\left\{X_0^{(i)} \in [\bar{g}^{(i)}(\theta), \bar{g}^{(i)}(\theta'))\right\}},
\end{align*}
where $K_{\rho}(x)$ for any $x \in \mathbb{R}^m$ is defined in \eqref{krho} and we assume without loss of generality $\bar{g}^{(i)}(\theta) \leq \bar{g}^{(i)}(\theta')$ for all $i = 1, \dots, m$. By taking expectation on both sides and by using Cauchy-Schwarz inequality, one obtains
\begin{align*}
&\mathbb{E}\left[\left|H(\theta, X_0) -H(\theta', X_0)  \right|\right]\\
&\leq L_3\mathbb{E}[(1+2|X_0|)^{\rho}]|\theta - \theta'| +K_2P\left({\bigcap_{i = 1}^m\left\{X_0^{(i)} \in [\bar{g}^{(i)}(\theta), \bar{g}^{(i)}(\theta'))\right\}}\right)\\
&\leq L_3\mathbb{E}[(1+2|X_0|)^{\rho}]|\theta - \theta'| + K_2 \int_{\bar{g}^{(m)}(\theta)}^{\bar{g}^{(m)}(\theta')}\cdots \int_{\bar{g}^{(1)}(\theta)}^{\bar{g}^{(1)}(\theta')}f_{X_0}(x^{(1)}, \dots, x^{(m)})d x^{(1)} \cdots d x^{(m)}\\
&\leq L_3\mathbb{E}[(1+2|X_0|)^{\rho}]|\theta - \theta'| + K_2 \int_{\bar{g}^{(1)}(\theta)}^{\bar{g}^{(1)}(\theta')}f_{X_0^{(1)}}(x^{(1)})d x^{(1)}\\
&\leq L_3\mathbb{E}[(1+2|X_0|)^{\rho}]|\theta - \theta'| + K_2K_3L_5|\theta - \theta'|\\
&\leq (L_3+K_2K_3L_5)\mathbb{E}[(1+2|X_0|)^{\rho}]|\theta - \theta'|,
\end{align*}
where $f_{X_0^{(i)}}$ denotes the marginal density function of $X_0^{(i)}$, $K_3$ is an upper bound of $f_{X_0^{(1)}}$ and $L_5$ is a Lipschitz constant for $\bar{g}^{(1)}$. Taking $L = L_3+K_2K_3L_5$ completes the proof.

\subsection{Proof of the claim in Remark \ref{disF}}\label{proofre4}
By Assumption \ref{convex}, one obtains, for $\theta\in\mathbb{R}^d$ and $x\in\mathbb{R}^m$,
\[
 \langle  F(\theta,x) -  F(0,x),\theta\rangle \geq \langle \theta, \hat{A}_1(x) \theta\rangle,
\]
which implies
\begin{align*}
 \langle  F(\theta,x) ,\theta\rangle
&\geq \langle \theta, \hat{A}_1(x) \theta\rangle +\langle F(0,x),\theta\rangle \\
&\geq  \langle \theta, \hat{A}_1(x) \theta\rangle -| F(0,x)||\theta| \\
&\geq \langle \theta, \hat{A}_1(x) \theta\rangle -\epsilon|\theta|^2  - (L_2(1+|x|)^{\rho+1}+|  F(0,0)|)^2/(4\epsilon)\\
&\geq \langle \theta, \hat{A}_1^*(x) \theta\rangle -\hat{b}(x),
\end{align*}
where the third inequality holds due to Assumption \ref{expressionH} and $ab <\epsilon a^2 +b^2/(4\epsilon)$, for any $a,b >0$, $\epsilon >0$, $\hat{A}_1^*(x)  = \hat{A}_1(x)  -\epsilon \mathbf{I}_d$ and $\hat{b}(x) = (L_2(1+|x|)^{\rho+1}+| F(0,0)|)^2/(4\epsilon)$.

\subsection{Validity of Assumption \ref{clc} for VaR-CVaR algorithm in Section \ref{varcvarex}}\label{ass3proofvarcvar}
We aim to show Assumption \ref{clc} is valid for $H_{w_1}$. To achieve this, it is enough to prove
\begin{enumerate}[(1)]
\item The inequality $|\hat{g}_{w_1}(w,X) -  \hat{g}_{w_1}(\bar{w},X)|\leq 2|w_1 - \bar{w}_1|\sum_i|X_i|$ holds, and
\item the last inequality in \eqref{inddiffappendix} is satisfied.
\end{enumerate}
To prove $|\hat{g}_{w_1}(w,X) -  \hat{g}_{w_1}(\bar{w},X)|\leq 2|w_1 - \bar{w}_1|\sum_i|X_i|$, recall that for every $j = 1, \dots, n$, $i \neq j$,
\[
\frac{\partial g_j(w)}{\partial w_j} = \frac{e^{w_j}(\sum_{l \neq j} e^{w_l})}{(\sum_{l=1}^{n}e^{w_l})^2}, \quad \frac{\partial g_i(w)}{\partial w_j} = -\frac{e^{w_i}e^{w_j}}{(\sum_{l=1}^{n}e^{w_l})^2}.
\]
Then, one calculates
\begin{align*}
&|\hat{g}_{w_1}(w,X) -  \hat{g}_{w_1}(\bar{w},X)|\\
& =  \left|\sum_{i=1}^{n}\frac{\partial g_i(w)}{\partial w_1}X_i - \sum_{i=1}^{n}\frac{\partial g_i(\bar{w})}{\partial w_1}X_i  \right|\\
&\leq \left|\frac{\partial g_1(w)}{\partial w_1} - \frac{\partial g_1(\bar{w})}{\partial w_1} \right||X_1| +\left|\sum_{i\neq1} \frac{\partial g_i(w)}{\partial w_1}X_i - \sum_{i\neq1}\frac{\partial g_i(\bar{w})}{\partial w_1}X_i  \right|\\
& \leq \left| \frac{e^{w_1}(\sum_{l \neq 1} e^{w_l})}{(\sum_{l=1}^{n}e^{w_l})^2} -  \frac{e^{\bar{w}_1}(\sum_{l \neq 1} e^{w_l})}{(\sum_{l\neq 1} e^{w_l}+e^{\bar{w}_1})^2} \right||X_1| +\sum_{i\neq1} \left|\frac{e^{w_i}e^{\bar{w}_1}}{(\sum_{l\neq 1} e^{w_l} +e^{\bar{w}_1})^2} - \frac{e^{w_i}e^{w_1}}{(\sum_{l=1}^{n}e^{w_l})^2}  \right||X_i|\\
& = \frac{\sum_{l \neq 1} e^{w_l}}{(\sum_{l=1}^{n}e^{w_l})^2(\sum_{l\neq 1} e^{w_l}+e^{\bar{w}_1})^2}\left|\left(\sum_{l \neq 1} e^{w_l}\right)^2\left(e^{w_1} - e^{\bar{w}_1}\right) +e^{\bar{w}_1}e^{w_1} \left(e^{\bar{w}_1}- e^{w_1} \right)\right||X_1|\\
&\quad +\sum_{i\neq1} \frac{e^{w_i}}{(\sum_{l=1}^{n}e^{w_l})^2(\sum_{l\neq 1} e^{w_l} +e^{\bar{w}_1})^2}\left|\left(\sum_{l \neq 1} e^{w_l}\right)^2\left( e^{\bar{w}_1} -e^{w_1} \right)+ e^{\bar{w}_1}e^{w_1} \left( e^{w_1}-e^{\bar{w}_1} \right)\right||X_i|\\
&\leq 2 |w_1 - \bar{w}_1|\sum_{i =1}^n|X_i|,
\end{align*}
where the last inequality holds due to $1-e^{-x} \leq x$ for all $x \geq 0$.

To prove the last inequality in \eqref{inddiffappendix} is satisfied, we assume without loss of generality $g_n(w) = \max\{g_2(w), \dots, g_n(w)\}$. Then,
\begin{enumerate}
\item[(i)] For $\bar{w}_1\geq w_1$, one calculates
\begin{align}\label{inddiffn}
&\mathbb{E}\left[\sum_i|X_i|\left|\mathbbm{1}_{\{\sum_{i=1}^{n}g_i(w)X_i \geq \theta\}} -  \mathbbm{1}_{\{\sum_{i = 1}^n g_i(\bar{w})X_i\geq \theta\}}\right|\right] \leq  I_1 +I_2,
\end{align}
where
\begin{align*}
I_1 & =  \mathbb{E}\left[\sum_i|X_i|\left|\mathbbm{1}_{\{\sum_{i=1}^{n}g_i(w)X_i \geq \theta\}} -  \mathbbm{1}_{\{\sum_{l\neq 1}g_l(w)X_l+ g_1(\bar{w})X_1\geq \theta\}}\right|\right],\\
I_2 & = \mathbb{E}\left[\sum_i|X_i|\left| \mathbbm{1}_{\{\sum_{l\neq 1}g_l(w)X_l+ g_1(\bar{w})X_1\geq \theta\}} -  \mathbbm{1}_{\{\sum_{l\neq 1,2}g_l(w)X_l+ g_1(\bar{w})X_1+g_2(\bar{w})X_2\geq \theta\}}\right|\right]\\
&\quad +\cdots\\
&\quad+\mathbb{E}\left[\sum_i|X_i|\left| \mathbbm{1}_{\{g_n(w)X_n+\sum_{l\neq n}g_l(\bar{w})X_l\geq \theta\}} -  \mathbbm{1}_{\{\sum_{i = 1}^ng_i(\bar{w})X_i\geq \theta\}}\right|\right].
\end{align*}
To estimate $I_1$, one writes
\begin{align*}
I_1 &\leq \mathbb{E}\left[\sum_i|X_i|\mathbbm{1}_{\{ (\theta
 -\sum_{l\neq n} g_l(w)X_l )/g_n(w) \leq X_n \leq (\theta-g_1(\bar{w})X_1  -\sum_{l\neq 1,n} g_l(w)X_l)/g_n(w)\}}\right] \\
&\quad+\mathbb{E}\left[\sum_i|X_i|\mathbbm{1}_{\{ (\theta-g_1(\bar{w})X_1  -\sum_{l\neq 1,n} g_l(w)X_l)/g_n(w)\}\leq X_n \leq (\theta -\sum_{l\neq n} g_l(w)X_l )/g_n(w)  \}}\right].
\end{align*}
The first term on the RHS of the inequality above can be further estimated as
\begin{align*}
& \mathbb{E}\left[\sum_i|X_i|\mathbbm{1}_{\{ (\theta
 -\sum_{l\neq n} g_l(w)X_l )/g_n(w) \leq X_n \leq (\theta-g_1(\bar{w})X_1  -\sum_{l\neq 1,n} g_l(w)X_l)/g_n(w)\}}\right]\\
& = \mathbb{E}\left[\sum_{i\neq n}|X_i|\mathbb{E}\left[\left.\mathbbm{1}_{\{ (\theta
 -\sum_{l\neq n} g_l(w)X_l )/g_n(w) \leq X_n \leq (\theta-g_1(\bar{w})X_1  -\sum_{l\neq 1,n} g_l(w)X_l)/g_n(w)\}}\right| X_1, \dots, X_{n-1}\right]\right] \\
 &\quad +\mathbb{E}\left[\mathbb{E}\left[\left.|X_n|\mathbbm{1}_{\{ (\theta
 -\sum_{l\neq n} g_l(w)X_l )/g_n(w) \leq X_n \leq (\theta-g_1(\bar{w})X_1  -\sum_{l\neq 1,n} g_l(w)X_l)/g_n(w)\}}\right| X_1, \dots, X_{n-1}\right]\right] \\
 & = \int_{-\infty}^{\infty}\sum_{i \neq n}|x_i|\cdots \int_{-\infty}^{\infty}\int_{(\theta
 -\sum_{l\neq n} g_l(w)x_l )/g_n(w) }^{ (\theta-g_1(\bar{w})x_1  -\sum_{l\neq 1,n} g_l(w)x_l)/g_n(w)}f_{X_n}(z)\, dz\\
 &\hspace{15em} \times f_{X_{n-1}}(x_{n-1})\,dx_{n-1}\cdots f_{X_1}(x_1)\, dx_{1}\\
 &\quad +\int_{-\infty}^{\infty} \cdots \int_{-\infty}^{\infty}\int_{(\theta
 -\sum_{l\neq n} g_l(w)x_l )/g_n(w) }^{ (\theta-g_1(\bar{w})x_1  -\sum_{l\neq 1,n} g_l(w)x_l)/g_n(w)}|x_n|f_{X_n}(z)\, dz\\
 &\hspace{15em} \times f_{X_{n-1}}(x_{n-1})\,dx_{n-1}\cdots f_{X_1}(x_1)\, dx_{1}\\
 &\leq \frac{c_{X_n}(c_{\bar{X}}+(n-2)c_X^2) }{g_n(w)}|g_1(w) - g_1(\bar{w})|+\frac{\bar{c}_{X_n}c_X }{g_n(w)}|g_1(w) - g_1(\bar{w})|\\
 & = (c_{X_n}(c_{\bar{X}}+(n-2)c_X^2)+\bar{c}_{X_n}c_X)  \frac{\sum_i e^{w_i}}{e^{w_n}}\frac{\left(\sum_{i \neq 1}e^{w_i}\right)|e^{\bar{w}_1}-e^{w_1}|}{\left(\sum_i e^{w_i}\right)\left(e^{\bar{w}_1}+\sum_{i \neq 1}e^{w_i}\right)}\\
 &\leq (c_{X_n}(c_{\bar{X}}+(n-2)c_X^2)+\bar{c}_{X_n}c_X) \frac{\sum_{i \neq 1} g_i(w)}{g_n(w)}\frac{e^{\bar{w}_1}}{\left(e^{\bar{w}_1}+\sum_{i \neq 1}e^{w_i}\right)}|\bar{w}_1 -w_1|\\
& \leq  (c_{X_n}(c_{\bar{X}}+(n-2)c_X^2)+\bar{c}_{X_n}c_X) (n-1)\frac{e^{\bar{w}_1}}{\left(e^{\bar{w}_1}+\sum_{i \neq 1}e^{w_i}\right)}|\bar{w}_1 -w_1|\\
&\leq    (c_{X_n}(c_{\bar{X}}+(n-2)c_X^2)+\bar{c}_{X_n}c_X) (n-1)|\bar{w}_1 -w_1|,
\end{align*}
where $c_{\bar{X}}$ denotes the second absolute moment of $X_i$'s , $c_{X_n}$ is the upper bound of the density of $X_n$, and we use $1- e^{-x} \leq x$ for $x\geq 0$ in the third inequality. Moreover, $I_2$ can be upper bounded by
\begin{align*}
I_2 &\leq \mathbb{E}\left[\sum_i|X_i|\mathbbm{1}_{\{ (\theta-\sum_{l\neq 1} g_l(w)X_l)/g_1(\bar{w}) \leq X_1 \leq (\theta-\sum_{l\neq 1,2} g_l(w)X_l  - g_2(\bar{w})X_2)/g_1(\bar{w}) \}}\right] \\
&\quad+\mathbb{E}\left[\sum_i|X_i|\mathbbm{1}_{\{ (\theta-\sum_{l\neq 1,2} g_l(w)X_l  - g_2(\bar{w})X_2)/g_1(\bar{w})\leq X_1 \leq  (\theta-\sum_{l\neq 1} g_l(w)X_l)/g_1(\bar{w})\}}\right]\\
&\quad+ \cdots\\
&\quad+\mathbb{E}\left[\sum_i|X_i|\mathbbm{1}_{\{ (\theta-g_n(w)X_n -\sum_{l\neq 1,n} g_l(\bar{w})X_l)/ g_1(\bar{w})\leq X_1\leq (\theta-\sum_{l\neq 1} g_l(\bar{w})X_l)/ g_1(\bar{w}) \}}\right] \\
&\quad+\mathbb{E}\left[\sum_i|X_i|\mathbbm{1}_{\{  (\theta-\sum_{l\neq 1} g_l(\bar{w})X_l)/ g_1(\bar{w})\leq  X_1 \leq  (\theta-g_n(w)X_n -\sum_{l\neq 1,n} g_l(\bar{w})X_l)/ g_1(\bar{w}) \}}\right].
\end{align*}
The first term on the RHS of the inequality above can be calculated as
\begin{align*}
& \mathbb{E}\left[\sum_i|X_i|\mathbbm{1}_{\{ (\theta-\sum_{l\neq 1} g_l(w)X_l)/g_1(\bar{w}) \leq X_1 \leq (\theta-\sum_{l\neq 1,2} g_l(w)X_l  - g_2(\bar{w})X_2)/g_1(\bar{w}) \}}\right]\\
& = \mathbb{E}\left[\mathbb{E}\left[\left.|X_1|\mathbbm{1}_{\{ (\theta-\sum_{l\neq 1} g_l(w)X_l)/g_1(\bar{w}) \leq X_1 \leq (\theta-\sum_{l\neq 1,2} g_l(w)X_l  - g_2(\bar{w})X_2)/g_1(\bar{w}) \}}\right| X_2, \dots, X_{n}\right]\right] \\
&\quad +\mathbb{E}\left[\sum_{i \neq 1}|X_i|\mathbb{E}\left[\left.\mathbbm{1}_{\{ (\theta-\sum_{l\neq 1} g_l(w)X_l)/g_1(\bar{w}) \leq X_1 \leq (\theta-\sum_{l\neq 1,2} g_l(w)X_l  - g_2(\bar{w})X_2)/g_1(\bar{w}) \}}\right| X_2, \dots, X_{n}\right]\right] \\
 & = \int_{-\infty}^{\infty}\cdots \int_{-\infty}^{\infty}\int_{(\theta-\sum_{l\neq 1} g_l(w)x_l)/g_1(\bar{w})  }^{(\theta-\sum_{l\neq 1,2} g_l(w)x_l  - g_2(\bar{w})x_2)/g_1(\bar{w}) }  |z|f_{X_1}(z)\, dz f_{X_{n}}(x_{n})\,dx_{n}\cdots f_{X_2}(x_2)\, dx_2\\
 &\quad +\int_{-\infty}^{\infty}\sum_{i \neq 1}|x_i|\cdots \int_{-\infty}^{\infty}\int_{(\theta-\sum_{l\neq 1} g_l(w)x_l)/g_1(\bar{w})  }^{(\theta-\sum_{l\neq 1,2} g_l(w)x_l  - g_2(\bar{w})x_2)/g_1(\bar{w}) }  f_{X_1}(z)\, dz\\
 &\hspace{15em} \times f_{X_{n}}(x_{n})\,dx_{n}\cdots f_{X_2}(x_2)\, dx_2\\
 &\leq \frac{\bar{c}_{X_1}c_X}{g_1(\bar{w})}|g_2(w) - g_2(\bar{w})|+\frac{c_{X_1}(c_{\bar{X}}+(n-2)c_X^2)}{g_1(\bar{w})}|g_2(w) - g_2(\bar{w})|\\
 & = (\bar{c}_{X_1}c_X+c_{X_1}(c_{\bar{X}}+(n-2)c_X^2)) \frac{\left(e^{\bar{w}_1}+\sum_{i \neq 1}e^{w_i}\right)}{e^{\bar{w}_1}}\frac{ e^{w_2}|e^{\bar{w}_1}-e^{w_1}|}{\left(\sum_i e^{w_i}\right)\left(e^{\bar{w}_1}+\sum_{i \neq 1}e^{w_i}\right)}\\
& \leq   (\bar{c}_{X_1}c_X+c_{X_1}(c_{\bar{X}}+(n-2)c_X^2))\frac{e^{w_2}e^{\bar{w}_1}}{e^{\bar{w}_1}\left(\sum_i e^{w_i}\right)}|\bar{w}_1 -w_1|\\
&\leq   (\bar{c}_{X_1}c_X+c_{X_1}(c_{\bar{X}}+(n-2)c_X^2))|\bar{w}_1 -w_1|,
\end{align*}
where $c_X$ denotes the first absolute moment of $X_i$'s and $\bar{c}_{X_1}$ is the upper bound of the function $|x|f_{X_1}$. Thus, in the case $\bar{w}_1\geq w_1$, \eqref{inddiffn} becomes
\begin{align*}
&\mathbb{E}\left[\sum_i|X_i|\left|\mathbbm{1}_{\{\sum_{i=1}^{n}g_i(w)X_i \geq \theta\}} -  \mathbbm{1}_{\{\sum_{i = 1}^n g_i(\bar{w})X_i\geq \theta\}}\right|\right] \\
&\leq  2(n-1)((c_{X_n}+c_{X_1})(c_{\bar{X}}+(n-2)c_X^2)+c_X(\bar{c}_{X_n}+\bar{c}_{X_1}))|\bar{w}_1 -w_1|.
\end{align*}

\item[(ii)] As for the case $w_1>\bar{w}_1$, the calculations are close to the above, however, one considers a different splitting as follows
\begin{align}\label{inddiffn2}
\mathbb{E}\left[\sum_i|X_i|\left|\mathbbm{1}_{\{\sum_{i=1}^{n}g_i(w)X_i \geq \theta\}} -  \mathbbm{1}_{\{\sum_{i = 1}^ng_i(\bar{w})X_i\geq \theta\}}\right|\right] &\leq T_1+T_2,
\end{align}
where
\begin{align*}
T_1 & =  \mathbb{E}\left[\sum_i|X_i|\left|\mathbbm{1}_{\{\sum_{i=1}^{n}g_i(w)X_i \geq \theta\}} -  \mathbbm{1}_{\{\sum_{l\neq n}g_l(w)X_l+ g_n(\bar{w})X_n\geq \theta\}}\right|\right]\\
&\quad +\cdots\\
&\quad+\mathbb{E}\left[\sum_i|X_i|\left| \mathbbm{1}_{\{g_1(w)X_1+g_2(w)X_2+ \sum_{l\neq 1,2}g_l(\bar{w})X_l\geq \theta\}} -   \mathbbm{1}_{\{g_1(w)X_1+ \sum_{l\neq 1}g_l(\bar{w})X_l\geq \theta\}} \right|\right],\\
T_2& = \mathbb{E}\left[\sum_i|X_i|\left| \mathbbm{1}_{\{g_1(w)X_1+ \sum_{l\neq 1}g_l(\bar{w})X_l\geq \theta\}} -  \mathbbm{1}_{\{\sum_{i = 1}^ng_i(\bar{w})X_i\geq \theta\}}\right|\right].
\end{align*}
To estimate $T_1$, one calculates
\begin{align*}
T_1  &\leq \mathbb{E}\left[\sum_i|X_i|\mathbbm{1}_{\{   (\theta-\sum_{l\neq 1} g_l(w)X_l)/ g_1(w)\leq  X_1 \leq  (\theta-g_n(\bar{w})X_n -\sum_{l\neq 1,n} g_l(w)X_l)/ g_1(w)  \}}\right]\\
&\quad+\mathbb{E}\left[\sum_i|X_i|\mathbbm{1}_{\{ (\theta-g_n(\bar{w})X_n -\sum_{l\neq 1,n} g_l(w)X_l)/ g_1(w)\leq X_1\leq (\theta-\sum_{l\neq 1} g_l(w)X_l)/ g_1(w) \}}\right]\\
&\quad+ \cdots\\
&\quad +  \mathbb{E}\left[\sum_i|X_i|\mathbbm{1}_{\{   (\theta-\sum_{l\neq 1,2}g_l(\bar{w})X_l-g_2(w)X_2)/ g_1(w)\leq  X_1 \leq  (\theta-\sum_{l\neq 1}g_l(\bar{w})X_l)/ g_1(w)  \}}\right]\\
&\quad+\mathbb{E}\left[\sum_i|X_i|\mathbbm{1}_{\{ (\theta-\sum_{l\neq 1}g_l(\bar{w})X_l)/ g_1(w)  \leq X_1\leq  (\theta-\sum_{l\neq 1,2}g_l(\bar{w})X_l-g_2(w)X_2)/ g_1(w)\}}\right].
\end{align*}
The first term on the RHS of the inequality above can be further calculated as
\begin{align*}
&\mathbb{E}\left[\sum_i|X_i|\mathbbm{1}_{\{   (\theta-\sum_{l\neq 1} g_l(w)X_l)/ g_1(w)\leq  X_1 \leq  (\theta-g_n(\bar{w})X_n -\sum_{l\neq 1,n} g_l(w)X_l)/ g_1(w)  \}}\right]\\
& = \mathbb{E}\left[\mathbb{E}\left[\left.|X_1|\mathbbm{1}_{\{   (\theta-\sum_{l\neq 1} g_l(w)X_l)/ g_1(w)\leq  X_1 \leq  (\theta-g_n(\bar{w})X_n -\sum_{l\neq 1,n} g_l(w)X_l)/ g_1(w)  \}}\right| X_2, \dots, X_{n}\right]\right] \\
&\quad +\mathbb{E}\left[\sum_{i \neq 1} |X_i|\mathbb{E}\left[\left.\mathbbm{1}_{\{   (\theta-\sum_{l\neq 1} g_l(w)X_l)/ g_1(w)\leq  X_1 \leq  (\theta-g_n(\bar{w})X_n -\sum_{l\neq 1,n} g_l(w)X_l)/ g_1(w)  \}}\right| X_2, \dots, X_{n}\right]\right] \\
 & = \int_{-\infty}^{\infty}\cdots \int_{-\infty}^{\infty}\int_{  (\theta-\sum_{l\neq 1} g_l(w)x_l)/ g_1(w)  }^{(\theta-g_n(\bar{w})x_n -\sum_{l\neq 1,n} g_l(w)x_l)/ g_1(w) }  |z|f_{X_1}(z)\, dz f_{X_{n}}(x_{n})\,dx_{n}\cdots f_{X_2}(x_2)\, dx_2\\
 &\quad + \int_{-\infty}^{\infty}\sum_{i \neq 1}|x_i|\cdots \int_{-\infty}^{\infty}\int_{  (\theta-\sum_{l\neq 1} g_l(w)x_l)/ g_1(w)  }^{(\theta-g_n(\bar{w})x_n -\sum_{l\neq 1,n} g_l(w)x_l)/ g_1(w) }  f_{X_1}(z)\, dz \\
 &\hspace{15em} \times f_{X_{n}}(x_{n})\,dx_{n}\cdots f_{X_2}(x_2)\, dx_2\\
 &\leq \frac{\bar{c}_{X_1}c_X}{g_1(w)}|g_n(w) - g_n(\bar{w})| +\frac{c_{X_1}(c_{\bar{X}}+(n-2)c_X^2)}{g_1(w)}|g_n(w) - g_n(\bar{w})|\\
 & = (\bar{c}_{X_1}c_X +c_{X_1}(c_{\bar{X}}+(n-2)c_X^2))\frac{\sum_i e^{w_i}}{e^{w_1}}\frac{ e^{w_n}|e^{w_1}-e^{\bar{w}_1}|}{\left(\sum_i e^{w_i}\right)\left(e^{\bar{w}_1}+\sum_{i \neq 1}e^{w_i}\right)}\\
& \leq  (\bar{c}_{X_1}c_X +c_{X_1}(c_{\bar{X}}+(n-2)c_X^2)\frac{e^{w_n}e^{w_1}}{e^{w_1}\left(e^{\bar{w}_1}+\sum_{i \neq 1}e^{w_i}\right)}| w_1 - \bar{w}_1|\\
&\leq    (\bar{c}_{X_1}c_X +c_{X_1}(c_{\bar{X}}+(n-2)c_X^2)| w_1 - \bar{w}_1|.
\end{align*}
In addition, $T_2$ can be estimated as
\begin{align*}
T_2 & \leq \mathbb{E}\left[\sum_i|X_i|\mathbbm{1}_{\{ (\theta-g_1(w)X_1 -\sum_{l\neq 1,n} g_l(\bar{w})X_l)/g_n(\bar{w}) \leq X_n \leq (\theta -\sum_{l\neq n} g_l(\bar{w})X_l)/g_n(\bar{w})   \}}\right] \\
&\quad+\mathbb{E}\left[\sum_i|X_i|\mathbbm{1}_{\{ (\theta  -\sum_{l\neq n} g_l(\bar{w})X_l)/g_n(\bar{w}) \leq X_n \leq   (\theta-g_1(w)X_1 -\sum_{l\neq 1,n} g_l(\bar{w})X_l)/g_n(\bar{w})   \}}\right].
\end{align*}
The first term on the RHS of the above inequality can be upper bounded by
\begin{align*}
&\mathbb{E}\left[\sum_i|X_i|\mathbbm{1}_{\{ (\theta-g_1(w)X_1 -\sum_{l\neq 1,n} g_l(\bar{w})X_l)/g_n(\bar{w}) \leq X_n \leq (\theta -\sum_{l\neq n} g_l(\bar{w})X_l)/g_n(\bar{w})   \}}\right]  \\
& = \mathbb{E}\left[\mathbb{E}\left[\left.|X_n|\mathbbm{1}_{\{ (\theta-g_1(w)X_1 -\sum_{l\neq 1,n} g_l(\bar{w})X_l)/g_n(\bar{w}) \leq X_n \leq (\theta -\sum_{l\neq n} g_l(\bar{w})X_l)/g_n(\bar{w})   \}}\right| X_1, \dots, X_{n-1}\right]\right] \\
&\quad +\mathbb{E}\left[\sum_{i \neq n}|X_i|\mathbb{E}\left[\left.\mathbbm{1}_{\{ (\theta-g_1(w)X_1 -\sum_{l\neq 1,n} g_l(\bar{w})X_l)/g_n(\bar{w}) \leq X_n \leq (\theta -\sum_{l\neq n} g_l(\bar{w})X_l)/g_n(\bar{w})   \}}\right| X_1, \dots, X_{n-1}\right]\right]\\
 & = \int_{-\infty}^{\infty}\cdots \int_{-\infty}^{\infty}\int_{ (\theta-g_1(w)x_1 -\sum_{l\neq 1,n} g_l(\bar{w})x_l)/g_n(\bar{w}) }^{ (\theta-\sum_{l\neq n} g_l(\bar{w})x_l)/g_n(\bar{w})  }|x_n|f_{X_n}(z)\, dz \\
 &\hspace{15em} \times f_{X_{n-1}}(x_{n-1})\,dx_{n-1}\cdots f_{X_1}(x_1)\, dx_{1}\\
 &\quad + \int_{-\infty}^{\infty}\sum_{i \neq n}|x_i|\cdots \int_{-\infty}^{\infty}\int_{ (\theta-g_1(w)x_1 -\sum_{l\neq 1,n} g_l(\bar{w})x_l)/g_n(\bar{w}) }^{ (\theta-\sum_{l\neq n} g_l(\bar{w})x_l)/g_n(\bar{w})  }f_{X_n}(z)\, dz\\
 &\hspace{15em} \times f_{X_{n-1}}(x_{n-1})\,dx_{n-1}\cdots f_{X_1}(x_1)\, dx_{1}\\
 &\leq \frac{\bar{c}_{X_n}c_{X}}{g_n(\bar{w})}|g_1(w) - g_1(\bar{w})| +\frac{c_{X_n}(c_{\bar{X}}+(n-2)c_X^2)}{g_n(\bar{w})}|g_1(w) - g_1(\bar{w})|\\
 & = (\bar{c}_{X_n}c_{X} + c_{X_n}(c_{\bar{X}}+(n-2)c_X^2)) \frac{\left(e^{\bar{w}_1}+\sum_{i \neq 1}e^{w_i}\right)}{e^{w_n}}\frac{\left(\sum_{i \neq 1}e^{w_i}\right)|e^{w_1}-e^{\bar{w}_1}|}{\left(\sum_i e^{w_i}\right)\left(e^{\bar{w}_1}+\sum_{i \neq 1}e^{w_i}\right)}\\
& =  (\bar{c}_{X_n}c_{X} + c_{X_n}(c_{\bar{X}}+(n-2)c_X^2))\frac{\sum_{i \neq 1} g_i(w)}{g_n(w)}\frac{e^{w_1}}{\sum_i e^{w_i}}|w_1-\bar{w}_1 |\\
& \leq  (n-1)(\bar{c}_{X_n}c_{X} + c_{X_n}(c_{\bar{X}}+(n-2)c_X^2))\frac{e^{w_1}}{\sum_i e^{w_i}}|w_1-\bar{w}_1 |\\
&\leq   (n-1)(\bar{c}_{X_n}c_{X} + c_{X_n}(c_{\bar{X}}+(n-2)c_X^2))|w_1-\bar{w}_1 |.
\end{align*}
Thus for the case $w_1>\bar{w}_1$, we have
\begin{align*}
&\mathbb{E}\left[\sum_i|X_i|\left|\mathbbm{1}_{\{\sum_{i=1}^{n}g_i(w)X_i \geq \theta\}} -  \mathbbm{1}_{\{\sum_{i = 1}^n g_i(\bar{w})X_i\geq \theta\}}\right|\right]\\
&\leq 2(n-1) (c_{X} (\bar{c}_{X_n}+\bar{c}_{X_1})+(c_{\bar{X}}+(n-2)c_X^2)( c_{X_n}+ c_{X_1}))|w_1-\bar{w}_1 |.
\end{align*}
\end{enumerate}
Combining the two cases, one obtains
\begin{align*}
&\mathbb{E}\left[\sum_i|X_i|\left|\mathbbm{1}_{\{\sum_{i=1}^{n}g_i(w)X_i \geq \theta\}} -  \mathbbm{1}_{\{\sum_{i = 1}^n g_i(\bar{w})X_i\geq \theta\}}\right|\right]\\
&\leq 2(n-1) (c_{X} (\bar{c}_{X_n}+\bar{c}_{X_1})+(c_{\bar{X}}+(n-2)c_X^2)( c_{X_n}+ c_{X_1}))|w_1-\bar{w}_1 |.
\end{align*}

\subsection{Auxiliary results}
\begin{lemma}\label{lem:boundedVariance} Let Assumption \ref{expressionH}, \ref{iid}, \ref{clc} and \ref{assum:dissipativity} hold. For any $t  \in [nT, (n+1)T]$, $n \in \mathbb{N}$ and $k = 1, \dots, K+1$, $K+1 \leq T$, one obtains
\[
\E\left[\left|H(\bar{\theta}^{\lambda}_{nT + k-1},X_{nT + k}) -h(\bar{\theta}^{\lambda}_{nT + k-1})\right|^2\right] \leq e^{-a\lambda nT}\bar{\sigma}_Z\mathbb{E}[V_2(\theta_0)]+\tilde{\sigma}_Z,
\]
where
\begin{align}\label{sigmaZ}
\begin{split}
\bar{\sigma}_Z &=4\mathbb{E}\left[K_{\rho}(X_0)\right]\left(L^2+L_1^2\right)\\
\tilde{\sigma}_Z &=4\mathbb{E}\left[K_{\rho}(X_0)\right]\left(L^2+L_1^2\right)c_1(\lambda_{\max}+a^{-1})+4|h(0)|^2+8L_2^2\mathbb{E}\left[K_{\rho}(X_0)\right]+8\mathbb{E}\left[F_*^2(X_0)\right].
\end{split}
\end{align}
\end{lemma}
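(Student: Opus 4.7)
The plan is to bound the target quantity pointwise in $\theta$, then take expectations and insert the moment bound from Lemma~\ref{lem:moment_SGLD_2p}. Concretely, I would start from the elementary inequality $|H(\theta, x) - h(\theta)|^2 \leq 2|H(\theta, x)|^2 + 2|h(\theta)|^2$, treating the two pieces separately. For the first piece, Remark~\ref{growth} gives $|H(\theta, x)| \leq (1+|x|)^{\rho+1}(L_1|\theta| + L_2) + F_*(x)$, so squaring and using $(a+b)^2 \leq 2a^2 + 2b^2$ twice yields
\[
|H(\theta, x)|^2 \leq 4(1+|x|)^{2\rho+2} L_1^2 |\theta|^2 + 4(1+|x|)^{2\rho+2} L_2^2 + 2 F_*^2(x).
\]
For the second piece, the Lipschitz property of $h$ from Remark~\ref{hlip} gives $|h(\theta)| \leq L|\theta| + |h(0)|$, hence $|h(\theta)|^2 \leq 2L^2 |\theta|^2 + 2|h(0)|^2$.

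Next, I would specialise to $\theta = \bar{\theta}^{\lambda}_{nT+k-1}$ and $x = X_{nT+k}$, then take expectations. Since $X_{nT+k}$ is independent of $\mathcal{G}_{nT+k-1}$ (and hence of $\bar{\theta}^{\lambda}_{nT+k-1}$), the $X$-terms in the $|H|^2$ bound factor out, and using $(1+|X_0|)^{2\rho+2} \leq K_{\rho}(X_0)$ (which holds for $\rho \geq 0$ by the definition \eqref{krho}) together with the $L^4$-moment assumption on $X_0$ from Assumption~\ref{iid} ensures all the expectations $\mathbb{E}[K_{\rho}(X_0)]$, $\mathbb{E}[F_*^2(X_0)]$ are finite. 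Combining the two squared bounds and absorbing constants (using $\mathbb{E}[K_{\rho}(X_0)] \geq 1$ to lump the $L^2$ and $L_1^2$ contributions together) produces a bound of the shape
\[
\mathbb{E}\left[|H(\bar{\theta}^{\lambda}_{nT+k-1}, X_{nT+k}) - h(\bar{\theta}^{\lambda}_{nT+k-1})|^2\right] \leq C_A \, \mathbb{E}[|\bar{\theta}^{\lambda}_{nT+k-1}|^2] + C_B,
\]
with $C_A$ proportional to $\mathbb{E}[K_{\rho}(X_0)](L^2 + L_1^2)$ and $C_B$ collecting the $L_2$, $F_*$ and $h(0)$ contributions.

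The final step is to insert the second-moment bound from Lemma~\ref{lem:moment_SGLD_2p}: since $nT + k - 1 \geq nT$ and $(1-a\lambda)^{nT+k-1} \leq e^{-a\lambda(nT+k-1)} \leq e^{-a\lambda nT}$, we obtain
\[
\mathbb{E}[|\bar{\theta}^{\lambda}_{nT+k-1}|^2] \leq e^{-a\lambda nT}\mathbb{E}[|\theta_0|^2] + c_1(\lambda_{\max} + a^{-1}).
\]
Substituting this into the previous display, replacing $\mathbb{E}[|\theta_0|^2]$ by $\mathbb{E}[V_2(\theta_0)]$ (since $V_2 \geq |\cdot|^2$), and folding the constant terms $C_A \cdot c_1(\lambda_{\max}+a^{-1}) + C_B$ into $\tilde{\sigma}_Z$ yields exactly the claimed form with $\bar{\sigma}_Z$ and $\tilde{\sigma}_Z$ as given in \eqref{sigmaZ}.

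There is no serious obstacle here—the proof is essentially a bookkeeping exercise. The only mild care needed is (i) using the factor $e^{-a\lambda nT}$ rather than $e^{-a\lambda(nT+k-1)}$ uniformly in $k$ so that the bound does not depend on $k$, and (ii) matching the constant $4\mathbb{E}[K_{\rho}(X_0)](L^2 + L_1^2)$ in $\bar{\sigma}_Z$, which requires grouping the $L^2$ piece (from $|h|^2$) together with the $L_1^2\mathbb{E}[K_{\rho}]$ piece (from $|H|^2$) and exploiting $\mathbb{E}[K_{\rho}(X_0)]\geq 1$ to absorb the $L^2$ coefficient.
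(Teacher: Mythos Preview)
Your proposal is correct and follows essentially the same route as the paper's proof: split $|H-h|^2 \leq 2|H|^2 + 2|h|^2$, bound each piece via Remarks~\ref{growth} and~\ref{hlip}, take expectations using independence, and insert the second-moment bound from Lemma~\ref{lem:moment_SGLD_2p}. To recover the stated constants exactly, note that the paper groups the three summands in $|H|$ as $(1+|x|)^{\rho+1}L_1|\theta|$ versus $(1+|x|)^{\rho+1}L_2 + F_*(x)$ before applying $(a+b)^2\le 2a^2+2b^2$, which yields the coefficient $4L_1^2\mathbb{E}[K_\rho(X_0)]$ on $|\theta|^2$ rather than the $8L_1^2\mathbb{E}[K_\rho(X_0)]$ your particular splitting would produce.
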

\begin{proof} 
One notices that by Remark \ref{growth} and \ref{hlip},
\begin{align*}
&\E\left[\left|H(\bar{\theta}^{\lambda}_{nT + k-1},X_{nT + k}) -h(\bar{\theta}^{\lambda}_{nT + k-1})\right|^2\right] \\
&\leq 2 \mathbb{E}\left[\left|h(\bar{\theta}^{\lambda}_{nT + k-1}) \right|^2\right] +2 \mathbb{E} \left[\left|H(\bar{\theta}^{\lambda}_{nT + k-1},X_{nT + k}) \right|^2\right] \\
& \leq 2 \mathbb{E}\left[\left(L \left|\bar{\theta}^{\lambda}_{nT + k-1}\right|+|h(0)|\right)^2\right] +2\mathbb{E}\left[\left((1+|X_{nT+k}|)^{\rho+1}\left(L_1\left|\bar{\theta}^{\lambda}_{nT + k-1}\right|+L_2\right)+F_*(X_{nT+k})\right)^2\right]\\
& \leq 4L^2 \mathbb{E}\left[\left|\bar{\theta}^{\lambda}_{nT + k-1} \right|^2\right] +4|h(0)|^2+4L_1^2\mathbb{E}\left[K_{\rho}(X_0)\right]\mathbb{E}\left[\left|\bar{\theta}^{\lambda}_{nT + k-1} \right|^2\right] +8L_2^2\mathbb{E}\left[K_{\rho}(X_0)\right]+8\mathbb{E}\left[F_*^2(X_0)\right]\\
& \leq 4\mathbb{E}\left[K_{\rho}(X_0)\right]\left(L^2+L_1^2 \right)\left(e^{-a\lambda nT}\mathbb{E}[V_2(\theta_0)] +c_1(\lambda_{\max}+a^{-1})\right) \\
&\quad +4|h(0)|^2+8L_2^2\mathbb{E}\left[K_{\rho}(X_0)\right]+8\mathbb{E}\left[F_*^2(X_0)\right],
\end{align*}
where the last inequality holds due to Lemma \ref{lem:moment_SGLD_2p}. Finally, one obtains
\[
\E\left[\left|h(\bar{\zeta}_t^{\lambda,n}) - H(\bar{\zeta}_t^{\lambda,n},X_{nT+k})\right|^2\right] \leq e^{-a\lambda nT}\bar{\sigma}_Z \E[V_2(\theta_0)]+\tilde{\sigma}_Z ,
\]
where $\bar{\sigma}_Z = 4\mathbb{E}\left[K_{\rho}(X_0)\right]\left(L^2+L_1^2\right)$ and $\tilde{\sigma}_Z =4\mathbb{E}\left[K_{\rho}(X_0)\right]\left(L^2+L_1^2\right)c_1(\lambda_{\max}+a^{-1})+4|h(0)|^2+8L_2^2\mathbb{E}\left[K_{\rho}(X_0)\right]+8\mathbb{E}\left[F_*^2(X_0)\right]$.
\end{proof}
\begin{lemma} \label{onestepest} Let Assumption \ref{expressionH}, \ref{iid} and \ref{assum:dissipativity} hold. For any $t >0$, one obtains
\begin{align*}
\E\left[ \left| \bar{\theta}^{\lambda}_t - \bar{\theta}^{\lambda}_{\floor{t}} \right|^2 \right] \leq \lambda(e^{-a\lambda \floor{t}}\bar{\sigma}_Y \mathbb{E}[V_2(\theta_0)] + \tilde{\sigma}_Y),
\end{align*}
where
\begin{align}\label{sigmaY}
\begin{split}
\bar{\sigma}_Y &=  2\lambda_{\max}L_1^2\mathbb{E}\left[K_{\rho}(X_0)\right]\\
 \tilde{\sigma}_Y & = 2\lambda_{\max}L_1^2\mathbb{E}\left[K_{\rho}(X_0)\right]c_1(\lambda_{\max}+a^{-1})+4\lambda_{\max} L_2^2\mathbb{E}\left[K_{\rho}(X_0)\right]+4\lambda_{\max}\mathbb{E}\left[F_*^2(X_0)\right]+2d\beta^{-1}.
\end{split}
\end{align}
\end{lemma}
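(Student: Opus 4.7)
\noindent\textbf{Proof plan for Lemma \ref{onestepest}.} The idea is that between grid points the interpolation \eqref{SGLDprocess} is just a drift (frozen at the left endpoint) plus a scaled Brownian increment, so everything reduces to a single-step bound on the second moment of $H$ coupled with the moment bound already in Lemma \ref{lem:moment_SGLD_2p}.

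First I would integrate \eqref{SGLDprocess} from $\floor{t}$ to $t$ to write
\[
\bar{\theta}^{\lambda}_t - \bar{\theta}^{\lambda}_{\floor{t}} \;=\; -\lambda\,(t-\floor{t})\,H(\bar{\theta}^{\lambda}_{\floor{t}},X_{\ceil{t}}) \;+\; \sqrt{2\beta^{-1}\lambda}\,\bigl(\tilde{B}^{\lambda}_{t}-\tilde{B}^{\lambda}_{\floor{t}}\bigr).
\]
Taking squared norms and expectations, the cross-term vanishes because the Brownian increment $\tilde{B}^{\lambda}_{t}-\tilde{B}^{\lambda}_{\floor{t}}$ has mean zero and is independent of $(\bar{\theta}^{\lambda}_{\floor{t}},X_{\ceil{t}})$ (recall $\mathcal{F}^{\lambda}_t$ is independent of $\mathcal{G}_{\infty}\vee\sigma(\theta_0)$). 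Hence, using $(t-\floor{t})\le 1$,
\[
\mathbb{E}\bigl[|\bar{\theta}^{\lambda}_t - \bar{\theta}^{\lambda}_{\floor{t}}|^2\bigr] \;\le\; \lambda^2\,\mathbb{E}\bigl[|H(\bar{\theta}^{\lambda}_{\floor{t}},X_{\ceil{t}})|^2\bigr] \;+\; 2\beta^{-1}\lambda\,d.
\]

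Next I would bound the $H$-term by applying \eqref{Hest} (equivalently, Remark \ref{growth}) with $q=2$, which gives
\[
\mathbb{E}\bigl[|H(\bar{\theta}^{\lambda}_{\floor{t}},X_{\ceil{t}})|^2 \,\big|\, \bar{\theta}^{\lambda}_{\floor{t}}\bigr] \;\le\; 2L_1^2\,\mathbb{E}[K_\rho(X_0)]\,|\bar{\theta}^{\lambda}_{\floor{t}}|^2 + 4L_2^2\,\mathbb{E}[K_\rho(X_0)] + 4\,\mathbb{E}[F_*^2(X_0)],
\]
after noting $(1+|x|)^{2\rho+2}\le K_\rho(x)$. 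I would then invoke Lemma \ref{lem:moment_SGLD_2p} at $t=\floor{t}$, i.e.
\[
\mathbb{E}\bigl[|\bar{\theta}^{\lambda}_{\floor{t}}|^2\bigr] \;\le\; (1-a\lambda)^{\floor{t}}\,\mathbb{E}[|\theta_0|^2] + c_1(\lambda_{\max}+a^{-1}) \;\le\; e^{-a\lambda \floor{t}}\,\mathbb{E}[V_2(\theta_0)] + c_1(\lambda_{\max}+a^{-1}),
\]
using $1-z\le e^{-z}$ and $|\theta_0|^2\le V_2(\theta_0)$.

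Substituting back and using $\lambda^2\le \lambda\,\lambda_{\max}$, the coefficient of $e^{-a\lambda\floor{t}}\mathbb{E}[V_2(\theta_0)]$ is exactly $\bar{\sigma}_Y=2\lambda_{\max}L_1^2\mathbb{E}[K_\rho(X_0)]$, while the remaining terms assemble into
\[
\tilde{\sigma}_Y \;=\; 2\lambda_{\max}L_1^2\mathbb{E}[K_\rho(X_0)]\,c_1(\lambda_{\max}+a^{-1}) + 4\lambda_{\max}L_2^2\mathbb{E}[K_\rho(X_0)] + 4\lambda_{\max}\mathbb{E}[F_*^2(X_0)] + 2d\beta^{-1},
\]
as stated. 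There is no serious obstacle: the statement is a one-step estimate whose structure is dictated by the form of \eqref{SGLDprocess}; the only bookkeeping step is ensuring the $\lambda^2\to\lambda\,\lambda_{\max}$ conversion lines up the constants so the bound is linear in $\lambda$ with the claimed $\bar{\sigma}_Y$, $\tilde{\sigma}_Y$.
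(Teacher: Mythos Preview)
Your proof is correct and follows essentially the same approach as the paper: integrate \eqref{SGLDprocess} over $[\floor{t},t]$, use independence to drop the cross term, bound $|H|^2$ via Remark \ref{growth}/\eqref{Hest}, and then feed in the second-moment estimate from Lemma \ref{lem:moment_SGLD_2p} together with $\lambda^2\le\lambda\,\lambda_{\max}$. The only cosmetic difference is that you make explicit the intermediate steps (independence argument, the $1-z\le e^{-z}$ and $|\theta_0|^2\le V_2(\theta_0)$ conversions) that the paper absorbs into a single inequality.
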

\begin{proof} For any $t>0$, one calculates
\begin{align*}
\E\left[\left| \bar{\theta}^{\lambda}_t - \bar{\theta}^{\lambda}_{\floor{t}} \right|^2\right] &= \E\left[\left| -\lambda \int_{\floor{t}}^t H(\bar{\theta}^{\lambda}_{\floor{t}},X_{\ceil{t}}) d s + \sqrt{2\beta^{-1}\lambda} ( \tilde{B}_t^{\lambda}-\tilde{B}_{\floor{t}}^{\lambda}) \right|^2\right]\\
& \leq \lambda^2\mathbb{E}\left[\left((1+|X_{\ceil{t}}|)^{\rho+1}(L_1|\bar{\theta}^{\lambda}_{\floor{t}}|+L_2)+F_*(X_{\ceil{t}})\right)^2\right]+ 2d\lambda\beta^{-1},
\end{align*}
where the inequality holds due to Remark \ref{growth} and by applying Lemma \ref{lem:moment_SGLD_2p}, one obtains
\begin{align*}
\E\left[\left| \bar{\theta}^{\lambda}_t - \bar{\theta}^{\lambda}_{\floor{t}} \right|^2\right] & \leq 2\lambda^2 L_1^2\mathbb{E}\left[K_{\rho}(X_0)\right]\mathbb{E}[|\bar{\theta}^{\lambda}_{\floor{t}}|^2]+4\lambda^2 L_2^2 \mathbb{E}\left[K_{\rho}(X_0)\right]+4\lambda^2\mathbb{E}\left[F_*^2(X_0)\right] +2d\lambda\beta^{-1}\\
&\leq \lambda((1-a\lambda)^{\floor{t}}\bar{\sigma}_Y \mathbb{E}[V_2(\theta_0)] + \tilde{\sigma}_Y),
\end{align*}
where $\bar{\sigma}_Y = 2\lambda_{\max}L_1^2 \mathbb{E}\left[K_{\rho}(X_0)\right]$ and $\tilde{\sigma}_Y = 2\lambda_{\max}L_1^2\mathbb{E}\left[K_{\rho}(X_0)\right]c_1(\lambda_{\max}+a^{-1})+4\lambda_{\max} L_2^2 \mathbb{E}\left[K_{\rho}(X_0)\right]+4\lambda_{\max}\mathbb{E}\left[F_*^2(X_0)\right]+2d\beta^{-1}$.
\end{proof}

\begin{lemma}\label{zt2ndmoment} Let Assumption \ref{expressionH}, \ref{iid}, and \ref{assum:dissipativity} hold. Then, for any $t>0$, one obtains
\[
\E[|Z_t|^2] \leq e^{-at}\E[|\theta_0|^2]+\left(\frac{2d}{a\beta}+\frac{2b}{a}+\frac{\E[K_1^2(X_0)]}{a^2}\right)(1-e^{-at}).
\]
\end{lemma}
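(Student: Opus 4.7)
The proof will follow the standard Itô/Gronwall recipe, with the twist that the drift $h$ is only known through the decomposition $h = \E[F(\cdot,X_0)] + \E[G(\cdot,X_0)]$ given in Assumption \ref{expressionH}.

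First, I would apply Itô's formula to the function $\theta \mapsto |\theta|^2$ along the SDE \eqref{sde}:
\[
|Z_t|^2 = |\theta_0|^2 - 2\int_0^t \langle Z_s, h(Z_s)\rangle\, ds + \frac{2d t}{\beta} + 2\sqrt{2\beta^{-1}}\int_0^t \langle Z_s, dB_s\rangle.
\]
Since $h$ is globally Lipschitz by Remark \ref{hlip}, a localization-stopping argument shows the stochastic integral is a true martingale (one may first apply Itô to $|Z_{t\wedge\tau_N}|^2$ with $\tau_N = \inf\{s : |Z_s| \geq N\}$, take expectations, and pass $N \to \infty$ using the standard a priori $L^2$-bound for SDEs with Lipschitz drift). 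Taking expectations therefore yields
\[
\frac{d}{dt}\E[|Z_t|^2] = -2\E[\langle Z_t, h(Z_t)\rangle] + \frac{2d}{\beta}.
\]

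Next, I would derive the key pointwise lower bound on $\langle h(\theta),\theta\rangle$. Writing $h(\theta) = \E[F(\theta,X_0)] + \E[G(\theta,X_0)]$, Assumption \ref{assum:dissipativity} combined with linearity of expectation gives
\[
\langle \E[F(\theta,X_0)],\theta\rangle \geq \langle \theta, \E[A(X_0)]\theta\rangle - \E[b(X_0)] \geq a|\theta|^2 - b,
\]
while Assumption \ref{expressionH}(ii) and Cauchy--Schwarz yield $|\langle \E[G(\theta,X_0)],\theta\rangle| \leq \E[K_1(X_0)]|\theta|$. Young's inequality $\E[K_1(X_0)]|\theta| \leq (a/2)|\theta|^2 + (\E[K_1(X_0)])^2/(2a)$ together with $(\E[K_1(X_0)])^2 \leq \E[K_1^2(X_0)]$ (Jensen) then produce
\[
\langle h(\theta),\theta\rangle \geq \frac{a}{2}|\theta|^2 - b - \frac{\E[K_1^2(X_0)]}{2a}.
\]

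Substituting this bound into the ODE for $\E[|Z_t|^2]$ gives
\[
\frac{d}{dt}\E[|Z_t|^2] \leq -a\,\E[|Z_t|^2] + 2b + \frac{\E[K_1^2(X_0)]}{a} + \frac{2d}{\beta},
\]
and Gronwall's inequality immediately delivers the claimed bound
\[
\E[|Z_t|^2] \leq e^{-at}\E[|\theta_0|^2] + \left(\frac{2d}{a\beta} + \frac{2b}{a} + \frac{\E[K_1^2(X_0)]}{a^2}\right)(1-e^{-at}).
\]
The only mildly delicate point is the martingale justification in the first step; once that is handled, everything else is routine convex estimation. No further technical obstacle is anticipated.
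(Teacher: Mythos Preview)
Your proposal is correct and essentially identical to the paper's argument: the paper applies It\^o's formula directly to $e^{at}|Z_t|^2$ (building the integrating factor into the It\^o computation), while you apply It\^o to $|Z_t|^2$ and then invoke Gronwall, which is the same thing written two ways. The dissipativity/Young/Jensen handling of the $G$-term is exactly what the paper does, and your localization remark for the martingale term is a detail the paper simply omits.
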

\begin{proof} For any $t>0$, by applying It\^o's formula to $e^{at}|Z_t|^2$, one obtains, almost surely
\begin{align*}
de^{at}|Z_t|^2 & = a e^{at}|Z_t|^2dt-2  e^{at}\langle Z_t, h(Z_t)\rangle dt +2 e^{at}\langle Z_t, \sqrt{2 \beta^{-1}}dB_t \rangle +2d \beta^{-1}e^{at}dt.
\end{align*}
Then, integrating both sides and taking expectation yield
\begin{align*}
e^{at}\E[|Z_t|^2] & = \E[|\theta_0|^2]+ a  \int_0^t e^{as}\E[|Z_s|^2]ds-2  \int_0^t e^{as}\E[\langle Z_s, h(Z_s)\rangle] ds +2d \beta^{-1}\int_0^te^{as}ds,
\end{align*}
which implies by using Assumption \ref{assum:dissipativity}
\begin{align*}
e^{at}\E[|Z_t|^2] & = \E[|\theta_0|^2]+ a \int_0^t e^{as}\E[|Z_s|^2]ds-2a  \int_0^t e^{as}\E[|Z_s|^2] ds+2b  \int_0^te^{as}ds\\
&\quad +2 \int_0^te^{as}\E[|Z_s|]\E[K_1(X_0)]ds +2d  \beta^{-1}\int_0^te^{as}ds\\
& \leq \E[|\theta_0|^2]+(2b+\E[K_1^2(X_0)]/a+2d\beta^{-1})(e^{at}-1)/a.
\end{align*}
Finally, one obtains
\[
\E[|Z_t|^2] \leq e^{-at}\E[|\theta_0|^2]+(2b+\E[K_1^2(X_0)]/a+2d\beta^{-1})(1-e^{-at})/a.
\]
\end{proof}

\end{document}